\newtheorem{theorem}{Theorem}
\newtheorem{lemma}[theorem]{Lemma}
\newtheorem{corollary}[theorem]{Corollary}
\newtheorem{open}[theorem]{Open Problem}
\newtheorem{example}[theorem]{Example}
\newtheorem{question}[theorem]{Question}
\newcommand{\ord}{{\mathrm{ord}}}
\newcommand{\tr}{{\mathrm{Tr}}}
\newcommand{\gf}{{\mathrm{GF}}}
\newcommand{\PG}{{\mathrm{PG}}}
\newcommand{\GA}{{\mathrm{GA}}}
\newcommand{\PAut}{{\mathrm{PAut}}}
\newcommand{\MAut}{{\mathrm{MAut}}}
\newcommand{\GAut}{{\mathrm{Aut}}}
\newcommand{\PRM}{{\mathrm{PRM}}}
\newcommand{\AG}{{\mathrm{AG}}}
\newcommand{\wt}{{\mathtt{wt}}}
\newcommand{\Z}{\mathbb{{Z}}}
\newcommand{\m}{\mathbb{M}}
\newcommand{\C}{{\mathcal{C}}}
\newcommand{\cR}{{\mathcal{R}}}
\newcommand{\bc}{{\mathbf{c}}}
\newcommand{\bzero}{{\mathbf{0}}}
\newcommand{\0}{\textbf{0}}
\newcommand{\Rmnum}[1]{\expandafter\@slowromancap\romannumeral #1@}
\newcommand\myatop[2]{\genfrac{}{}{0pt}{}{#1}{#2}}
\begin{document} 

\title{Two Classes of Constacyclic Codes with Variable Parameters\thanks{
Z. Sun's research was supported by The National Natural Science Foundation of China under Grant Number  62002093. 
C. Ding's research was supported by The Hong Kong Research Grants Council, Proj. No. $16301522$,
X. Wang's research was supported by The National Natural Science Foundation of China under Grant Number  12001175. 
}\footnote{A small part of this paper was presented at the International Workshop on the Arithmetic of Finite Fields (WAIFI 2022), Chengdu, China. August 29-September 2, 2022}}

\author{Zhonghua Sun\thanks{School of Mathematics, Hefei University of Technology, Hefei, 230601, Anhui, China. Email:  sunzhonghuas@163.com}  
\and Cunsheng Ding\thanks{Department of Computer Science
                           and Engineering, The Hong Kong University of Science and Technology,
Clear Water Bay, Kowloon, Hong Kong, China. Email: cding@ust.hk} 
\and Xiaoqiang Wang\thanks{Hubei Key Laboratory of Applied Mathematics, Faculty of Mathematics and Statistics, Hubei University, Wuhan 430062, China.  
Email: waxiqq@163.com}  
}

\maketitle

\begin{abstract} 
Constacyclic codes over finite fields are a family of linear codes and contain cyclic codes as a subclass. Constacyclic codes are related to many areas of mathematics and outperform cyclic codes in several aspects. Hence, constacyclic codes are of theoretical 
importance. On the other hand, constacyclic codes are important in practice, as they have rich algebraic structures and may 
have efficient decoding algorithms.  
In this paper, two classes of constacyclic codes are constructed using a general construction of constacyclic codes with cyclic codes.  The first class of constacyclic codes is motivated by the 
punctured Dilix cyclic codes and the second class is motivated by the punctured generalised Reed-Muller codes. The two classes of 
constacyclic codes contain optimal linear codes. The parameters of the two classes of constacyclic codes are analysed and some 
open problems are presented in this paper.    
\end{abstract}

\section{Introduction and motivations} 

\subsection{Constacyclic codes and cyclic codes}

Throughout this paper, let $q$ be a prime power, $\gf(q)$ be the finite field with $q$ elements, and let $\gf(q)^*$ denote the multiplicative group of $\gf(q)$. By an $[n, k, d]$ code $\C$ over $\gf(q)$ we mean a $k$-dimensional linear subspace of $\gf(q)^n$ with minimum distance $d$. Let $\lambda \in \gf(q)^*$. We say that a linear code $\C$ of length $n$ is $\lambda$-{\it constacyclic} if $(c_0,c_1,\ldots,c_{n-1})\in \C$ implies $(\lambda c_{n-1}, c_0,c_1,\ldots,c_{n-2})\in \C$. Let $\Phi$ be the mapping from $\gf(q)^n$ to the quotient ring $\gf(q)[x]/\langle x^n-\lambda \rangle$ defined by 
\[\Phi((c_0,c_1, \ldots, c_{n-1})) = \sum_{i=0}^{n-1} c_i x^i.\]  
It is well known that every ideal of the ring $\gf(q)[x]/\langle x^n-\lambda \rangle$ is {\it principal} and a linear code $\C \subset \gf(q)^n$ is 
$\lambda$-constacyclic if and only if $\Phi(\C)$ is an ideal of $\gf(q)[x]/ \langle x^n-\lambda \rangle$. Consequently, we will 
identify $\C$ with $\Phi(\C)$ for any $\lambda$-constacyclic code $\C$. 
Let $\C=\langle g(x) \rangle$ be a 
$\lambda$-constacyclic code over $\gf(q)$, where $g(x)$ is monic and has the smallest degree. Then $g(x)$ 
is called the {\it generator polynomial} and $h(x)=(x^n-\lambda)/g(x)$ is referred to as the {\it check polynomial} of $\C$. The dual code of $\C$, denoted by $\C^{\bot}$, is defined by
$$\C^{\bot}=\{\mathbf{b}\in \gf(q)^n: \mathbf{b}\mathbf{c}^T=0,\ \forall \ \bc\in \C \},$$
where $\mathbf{b}\bc^T$ denotes the standard inner product of the two vectors $\mathbf{b}$ and $\bc$. The dual code $\C^\perp$ of $\C$ is the $\lambda^{-1}$-constacyclic code generated by the reciprocal polynomial of the check polynomial $h(x)$ of $\C$. By definition, $1$-constacyclic codes are the classical cyclic codes. Hence, cyclic codes form a subclass of constacyclic codes. In other words, constacyclic codes are a generalisation of the classical cyclic codes. For more information on constacyclic codes, the reader is referred to \cite{Black66,CDFL15,CFLL12,DP92,DDR11,DY10,KS90,LLLM17,LQL2017,LQ2018,MC21,PD91,SR2018,WSZ19,Wolfmann2008,SZW20} 
and the references therein.

\subsection{Motivations and objectives}

By definition, cyclic codes are a proper subclass of constacyclic codes and constacyclic codes are a proper subclass of linear 
codes. Clearly, cyclic codes have a better algebraic structure than $\lambda$-constacyclic codes with $\lambda \neq 1$ and 
constacyclic codes have a better algebraic structure than other linear codes. A better algebraic structure may mean a better 
decoding algorithm. Then the following two questions are interesting and good motivations for studying constacyclic codes. 

\begin{question} 
Is a given linear code over $\gf(q)$ monomially-equivalent to a cyclic code over $\gf(q)$? 
\end{question} 

\begin{question} 
Is a given linear code over $\gf(q)$ monomially-equivalent to a $\lambda$-constacyclic code over $\gf(q)$ with $\lambda \neq 1$? 
\end{question} 

For example, the Hamming code of length $(q^m-1)/(q-1)$ over $\gf(q)$ is monomially-equivalent to a cyclic code over $\gf(q)$ 
when $\gcd(m, q-1)=1$, and is always monomially-equivalent to a contacyclic code over $\gf(q)$. This shows that the Hamming 
code is attractive. Notice that the two questions above are open for most linear codes. 

Recall that cyclic codes have a better algebraic structure. Then one would ask why we would study constacyclic codes. 
Below is a list of motivations for studying $\lambda$-constacyclic codes with $\lambda \neq 1$: 
\begin{itemize}
\item There does not exist a cyclic code over $\gf(q)$ with parameters $[n, k, d]$ for certain $q$, $n$, $k$ and $d$; 
          but there is a $\lambda$-constacyclic codes over $\gf(q)$ with parameters $[n, k, d]$ and $\lambda \neq 1$. 
\item The best $[n, k]$ constacyclic code over $\gf(q)$ has a much better error-correcting capability than 
          the best $[n, k]$ cyclic code over $\gf(q)$ for certain $q$, $n$ and $k$. 
\item Constacyclic codes can do many things that cyclic codes cannot do. For example, the Hamming code of length 
$(q^m-1)/(q-1)$ can always be 
constructed by a constacyclic code, but cannot be constructed by a cyclic code when $\gcd(q-1, m) \neq 1$.                      
\end{itemize}

The original binary Reed-Muller codes were introduced by Reed and Muller in 1954 \cite{Reed54,Muller54}. They are called 
geometric codes, as all the minimum weight codewords of the $r$-th order Reed-Muller code $\cR_2(r, m)$ are the incidence vectors of 
all the $(m-r)$-flats in the affine geometry $\AG(m, \gf(2))$ and they generate $\cR_2(r, m)$ \cite{AK98}. The automorphism group of $\cR_2(r, m)$ is known to be the general affine group $\GA_m(\gf(2))$ which is triply transitive on $\gf(2)^m$. Hence, the binary 
Reed-Muller codes support $3$-designs. It was later discovered that the binary Reed-Muller codes become cyclic codes if they are punctured in a special coordinate position.  These properties show that the original Reed-Muller codes are very interesting in theory. 
Binary Reed-Muller codes are also interesting in practice as they have efficient decoding algorithms \cite{Reed54}. The binary 
Reed-Muller codes and their punctured codes were later generalised to codes over $\gf(q)$ for general $q$. In 2018, the binary Reed-Muller 
codes were generalised into another type of linear codes \cite{DLX18}, which were called \emph{Dilix codes} for the purpose of distinguishing the two types of generalisations \cite[Chapter 6]{dingtang2022}. The Dilix codes have also interesting properties and are extended cyclic codes by definition. In other words, if the Dilix codes are punctured in the last coordinate, the punctured Dilix codes are cyclic. Motivated by the punctured generalized Reed-Muller codes and punctured Dilix codes, we will construct and analyse two classes of constacyclic codes which are counterparts 
of  the punctured generalized Reed-Muller codes and the punctured Dilix codes.  

 \subsection{The organisation of this paper}

The rest of this paper is organized as follows. Section \ref{sec-prel} recalls some basic results about linear codes and 
constacyclic codes, which will be needed later. Section \ref{sec-3} introduces a general construction of constacyclic codes of length $\frac{q^m-1}{r}$ with cyclic codes of length $q^m-1$. Section \ref{sec-1stclass} introduces the first class of constacyclic codes and analyses the parameters of these codes. Section \ref{sec-2ndclass} introduces the second class of constacyclic codes and analyses the parameters of these codes. Section \ref{sec4} concludes this paper and makes concluding remarks.

\section{Preliminaries}\label{sec-prel} 

\subsection{Some basic notation}

Throughout this paper, we fix the following notation, unless it is stated otherwise: 
\begin{itemize}
\item $q$ is a prime power. 
\item $m \geq 2$ is an integer. 
\item $r \geq 2$ is a positive divisor of $q-1$. 
\item $N=q^m-1$.  
\end{itemize}

For a linear code $\C$, we use $\dim(\C)$ and $d(\C)$ to denote its dimension and minimum Hamming distance, respectively.  
For a linear code $\C \subset \gf(q)^n$, let $A_i$ denote the number of codewords with Hamming weight $i$ in $\mathcal{C}$. The {\it weight enumerator} of $\C$ is defined as $1+A_1z+\cdots+A_nz^n$. The sequence $(1,A_1,\ldots,A_n)$ is called the {\it weight distribution} of $\C$. If the number of nonzero $A_i$ in the sequence $(A_1,A_2,\ldots, A_n)$ equals $t$, then $\C$ is called a $t$-weight code.  

\subsection{The Hamming weight and $q$-weight of nonnegative integers} 

Let $N=q^m-1$. 
For each $0 \leq i \leq N$, let the $q$-adic expansion of $i$ be $i=\sum_{j=0}^{m-1} i_j q^j$, where $0 \leq i_j \leq q-1$.  
The \emph{Hamming weight} of $i$, denoted by $\wt(i)$, is defined to be the Hamming weight of the vector 
$(i_0, i_1, \ldots, i_{m-1})$. The \emph{ $q$-weight} of $i$, denoted by $\wt_q(i)$, is defined to be $\sum_{j=0}^{m-1} i_j$. 
The two kinds of weights will be used later. 

\subsection{Some bounds of linear codes}

We now recall two bounds on linear codes, which will be needed later.

\begin{lemma}\label{lem-SPB1}
{\rm (Sphere Packing Bound \cite{HP2003})} Let $\C$ be an $[n, k, d]$ code over $\gf(q)$, then 
$$\sum_{i=0}^{\lfloor \frac{d-1}2\rfloor} \binom{n}{i} (q-1)^i\leq q^{n-k},$$
where $\lfloor \cdot \rfloor$ is the floor function.
\end{lemma}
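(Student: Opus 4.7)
The plan is to carry out the classical packing argument: disjoint Hamming balls of radius $t := \lfloor (d-1)/2 \rfloor$ centered at the codewords cannot, in total, contain more points than $\gf(q)^n$ itself. First I would set $t = \lfloor (d-1)/2 \rfloor$ and, for each $\bc \in \C$, define the closed Hamming ball
\[
B_t(\bc) = \{\bx \in \gf(q)^n : \hd(\bx, \bc) \leq t\}.
\]
The first step is to show these balls are pairwise disjoint. If $\bc_1, \bc_2 \in \C$ are distinct and $\bx \in B_t(\bc_1) \cap B_t(\bc_2)$, then the triangle inequality for Hamming distance gives $\hd(\bc_1, \bc_2) \leq \hd(\bc_1, \bx) + \hd(\bx, \bc_2) \leq 2t \leq d-1$, contradicting the minimum distance of $\C$.

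Next I would compute the cardinality of each ball. A vector at Hamming distance exactly $i$ from $\bc$ is determined by choosing the $i$ coordinate positions in which it differs from $\bc$ and then choosing one of the $q-1$ nonzero alternatives in each such position; hence
\[
|B_t(\bc)| = \sum_{i=0}^{t} \binom{n}{i} (q-1)^i,
\]
a quantity independent of $\bc$. Since $|\C| = q^k$ and the balls are disjoint subsets of $\gf(q)^n$, summing cardinalities yields
\[
q^k \sum_{i=0}^{t} \binom{n}{i} (q-1)^i \;=\; \sum_{\bc \in \C} |B_t(\bc)| \;\leq\; |\gf(q)^n| \;=\; q^n.
\]
Dividing both sides by $q^k$ produces the claimed inequality.

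There is no real obstacle here: the only small points to be careful about are that $t \leq d-1$ (so that the disjointness argument uses a genuine contradiction with the minimum distance), and that the enumeration of vectors at distance exactly $i$ from a fixed center is truly $\binom{n}{i}(q-1)^i$ for every center, which is immediate from the fact that Hamming distance is translation invariant under the action of $\gf(q)^n$. Both observations are standard and require no delicate estimates, so the proof is essentially a two-line packing inequality plus the ball-size count.
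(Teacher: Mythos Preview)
Your argument is the standard packing proof of the Hamming bound and is correct. The paper does not actually prove this lemma at all; it simply cites it from \cite{HP2003}, so there is nothing to compare against beyond noting that your write-up matches the textbook argument.
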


The following lemma is the sphere packing bound for linear codes with an even minimum distance.

\begin{lemma}{\rm \cite{FWF2017}}\label{lem-SPB2}
Let $\C$ be an $[n,k,d]$ code over $\gf(q)$, where $d$ is an even integer.	Then
$$\sum_{i=0}^{\frac{d-2}2} \binom{n-1}{i} (q-1)^i\leq q^{n-1-k}.$$
\end{lemma}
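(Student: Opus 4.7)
The plan is to reduce Lemma~\ref{lem-SPB2} to the standard sphere packing bound (Lemma~\ref{lem-SPB1}) by puncturing. Let $\C$ be an $[n,k,d]$ code over $\gf(q)$ with $d$ even, so in particular $d \ge 2$. Fix a coordinate position (say the last) and let $\C^{*}$ denote the linear code of length $n-1$ obtained by deleting that coordinate from every codeword of $\C$.

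First I would verify that $\C^{*}$ has parameters $[n-1, k, d^{*}]$ with $d^{*} \in \{d-1, d\}$. The dimension is preserved because the kernel of the puncturing map $\pi:\C \to \C^{*}$ consists of codewords supported only on the deleted coordinate, i.e.\ of weight at most one; the hypothesis $d \ge 2$ rules out any such nonzero codeword, so $\pi$ is injective and $\dim \C^{*} = k$. The distance can drop by at most one under puncturing, yielding $d^{*} \ge d-1$.

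Next I apply Lemma~\ref{lem-SPB1} to $\C^{*}$, obtaining
$$\sum_{i=0}^{\lfloor (d^{*}-1)/2 \rfloor} \binom{n-1}{i}(q-1)^{i} \le q^{(n-1)-k} = q^{n-1-k}.$$
The parity hypothesis enters exactly here: since $d$ is even, in both cases $d^{*} = d-1$ (odd) and $d^{*} = d$ (even) one computes $\lfloor (d^{*}-1)/2 \rfloor = (d-2)/2$. Hence the left-hand side dominates $\sum_{i=0}^{(d-2)/2} \binom{n-1}{i}(q-1)^{i}$, and the claimed inequality follows.

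There is no substantive obstacle. The only design choice worth flagging is that one must puncture rather than shorten: shortening would preserve the distance $d$ but cost a dimension, yielding only the weaker right-hand side $q^{n-k}$, whereas puncturing preserves the dimension and, by the parity of $d$, the loss of one in the distance is absorbed by the floor in Lemma~\ref{lem-SPB1}.
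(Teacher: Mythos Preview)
Your argument is correct. The paper does not supply its own proof of this lemma; it is quoted from \cite{FWF2017} and used as a black box, so there is nothing to compare against. The puncturing reduction you give is exactly the standard derivation, and your handling of the two cases $d^{*}=d-1$ and $d^{*}=d$ is accurate (in fact in both cases $\lfloor (d^{*}-1)/2\rfloor$ equals $(d-2)/2$ on the nose, so ``dominates'' can be replaced by ``equals'').
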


\subsection{Cyclotomic cosets} 

Let $n$ be a positive integer with $\gcd(q, n)=1$, $r$ be a positive divisor of $q-1$, and let $\lambda$ be an element of $\gf(q)$ with order $r$. To deal with $\lambda$-constacyclic codes of length $n$ over $\gf(q)$, we have to study the factorization of $x^n-\lambda$ over $\gf(q)$. To this end, we need to introduce $q$-cyclotomic cosets modulo $rn$. 

Let $\Z_{rn}=\{0,1,2,\cdots,rn-1\}$ be the ring of integers modulo $rn$. For any $i \in \Z_{rn}$, the \emph{$q$-cyclotomic coset of $i$ modulo $rn$} is defined by 
\[C^{(q,rn)}_i=\{i, iq, iq^2, \cdots, iq^{\ell_i-1}\} \bmod {rn} \subseteq \Z_{rn}, \]
where $\ell_i$ is the smallest positive integer such that $i \equiv i q^{\ell_i} \pmod{rn}$, and is the \textit{size} of the $q$-cyclotomic coset $C^{(q,rn)}_i$. The smallest integer in $C^{(q, rn)}_i$ is called the \textit{coset leader} of $C^{(q, rn)}_i$. Let $\Gamma_{(q, rn)}$ be the set of all the coset leaders. We have then $C^{(q, rn)}_i \cap C^{(q, rn)}_j = \emptyset$ for any two distinct elements $i$ and $j$ in  $\Gamma_{(q, rn)}$, and  
 $\bigcup_{i \in  \Gamma_{(q, rn)} } C_i^{(q,rn)}=\Z_{rn}$.

Let $m=\ord_{r n}(q)$. It is easily seen that there is a primitive element $\alpha$ of $\gf(q^m)$ such that $\beta=\alpha^{(q^m-1)/rn}$ and $\beta^n=\lambda$. Then $\beta$ is a primitive $rn$-th root of unity in $\gf(q^m)$. 
 The \textit{minimal polynomial} $\m_{\beta^i}(x)$ of $\beta^i$ over $\gf(q)$ is the monic polynomial of the smallest degree over $\gf(q)$ with $\beta^i$ as a zero. We have $
\m_{\beta^i}(x)=\prod_{j \in C_i^{(q,rn)}} (x-\beta^j) \in \gf(q)[x], 
$ 
which is irreducible over $\gf(q)$. It then follows that $ x^{rn}-1=x^{rn}-\lambda^r=\prod_{i \in  \Gamma_{(q, rn)}} \m_{\beta^i}(x)$. Define 
$$\Gamma_{(q, rn,r)}^{(1)}=\{i: i \in \Gamma_{(q,rn)}, \, i \equiv 1 \pmod{r} \}.$$ 
Then $x^{n}-\lambda=\prod_{i \in  \Gamma_{(q,rn,r)}^{(1)}} \m_{\beta^i}(x)$.

\begin{lemma}\label{lem151} \cite{SWD22}
Let $n$ be a positive integer with $\gcd(q,n)=1$ and let $r$ be a positive divisor of $q-1$. If $\ord_n(q)=\ell$, then $\ord_{r n}(q)=\frac{r}{\gcd(\frac{q^\ell-1}{n},r)}\ell$, which is the size $\ell_1$ of $C_1^{(q, rn)}$, and the size $\ell_i$ of each $q$-cyclotomic coset $C_i^{(q, rn)}$ is a divisor of $\ord_{rn}(q)$.
\end{lemma}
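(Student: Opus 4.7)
\textbf{Proof proposal for Lemma \ref{lem151}.}

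My plan is to compute $\ord_{rn}(q)$ directly from the definition, and then extract the two cyclotomic coset statements as easy consequences. Let $k=\ord_{rn}(q)$, so $k$ is the least positive integer with $rn\mid q^k-1$. In particular $n\mid q^k-1$, hence $\ell=\ord_n(q)$ divides $k$, so I can write $k=s\ell$ for a positive integer $s$ to be determined. Using the factorization
\[
q^{s\ell}-1=(q^\ell-1)\sum_{j=0}^{s-1}q^{\ell j},
\]
and setting $m_1=(q^\ell-1)/n$, I obtain $(q^{s\ell}-1)/n=m_1\sum_{j=0}^{s-1}q^{\ell j}$. The condition $rn\mid q^{s\ell}-1$ is therefore equivalent to
\[
r\ \Bigm|\ m_1\sum_{j=0}^{s-1}q^{\ell j}.
\]

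Next I would exploit the hypothesis $r\mid q-1$. Since $q\equiv 1\pmod r$, we get $q^{\ell j}\equiv 1\pmod r$ for every $j$, so $\sum_{j=0}^{s-1}q^{\ell j}\equiv s\pmod r$. Write $g=\gcd(m_1,r)$ and $r=g r'$, $m_1=g m_1'$ with $\gcd(m_1',r')=1$. Reducing the divisibility above modulo $r$ and cancelling the common factor $g$ reduces it to $r'\mid m_1' s$, and since $\gcd(m_1',r')=1$ this is $r'\mid s$. Hence the smallest admissible $s$ is $r'=r/\gcd(m_1,r)$, giving
\[
\ord_{rn}(q)=\frac{r}{\gcd\!\left(\frac{q^\ell-1}{n},\,r\right)}\,\ell.
\]

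For the cyclotomic coset statements: by definition $\ell_1$ is the least positive integer with $q^{\ell_1}\equiv 1\pmod{rn}$, so $\ell_1=\ord_{rn}(q)$ on the nose. For general $i$, the defining condition $i\equiv iq^{\ell_i}\pmod{rn}$ is equivalent to $q^{\ell_i}\equiv 1\pmod{rn/\gcd(i,rn)}$, so $\ell_i=\ord_{rn/\gcd(i,rn)}(q)$. The standard fact that $m_1\mid m_2$ implies $\ord_{m_1}(q)\mid\ord_{m_2}(q)$ (reduce $q^{\ord_{m_2}(q)}\equiv 1\pmod{m_2}$ modulo $m_1$) then yields $\ell_i\mid\ord_{rn}(q)$.

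The only real subtlety is the bookkeeping in the middle paragraph: one must be careful that the reduction from $r\mid m_1\sum q^{\ell j}$ to $r'\mid s$ is genuinely an equivalence and not just a one-way implication. The use of $q\equiv 1\pmod r$ to replace the sum by $s$ modulo $r$ is the step that makes this tight, and the coprimality $\gcd(m_1',r')=1$ in the cancelled form is what converts divisibility by $r'$ into a clean condition on $s$ alone. Everything else is routine.
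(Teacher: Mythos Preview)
Your argument is correct. The factorisation $q^{s\ell}-1=(q^\ell-1)\sum_{j=0}^{s-1}q^{\ell j}$ together with $q\equiv 1\pmod r$ is exactly the right device, and you are careful where it matters: the replacement of $\sum_{j=0}^{s-1}q^{\ell j}$ by $s$ is a genuine congruence modulo $r$, so the divisibility $r\mid m_1\sum q^{\ell j}$ really is equivalent to $r\mid m_1 s$, and then the coprimality $\gcd(m_1',r')=1$ gives the clean two-sided reduction to $r'\mid s$. The two coset statements are handled correctly as well.

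As for comparison with the paper: there is nothing to compare. The paper does not prove this lemma; it merely quotes it from \cite{SWD22}. Your write-up is a self-contained proof that the paper does not supply, so in that sense you have done more than the paper itself.
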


\subsection{The trace representation of constacyclic codes} 

Constacyclic codes and their duals have the following relation, which is a fundamental result.

\begin{lemma}\cite{KS90}\label{lem-sdjoin2}
The dual code of an $[n,k]$ $\lambda$-constacyclic code $\mathcal{C}$ generated by $g(x)$ is an $[n,n-k]$ $\lambda^{-1}$-constacyclic code $\mathcal{C}^{\perp}$ generated by  $\widehat{h}(x)=h_0^{-1}x^kh(x^{-1})$, where $h(x)=(x^n-\lambda)/g(x)$ is the check polynomial of $\mathcal{C}$ and $h_0$ is the coefficient of $x^0$ in $h(x)$.
\end{lemma}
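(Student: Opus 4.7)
The plan is to verify that the $\lambda^{-1}$-constacyclic code $\mathcal{D} := \langle \widehat{h}(x)\rangle \subseteq \gf(q)[x]/\langle x^n - \lambda^{-1}\rangle$ coincides with $\mathcal{C}^{\perp}$, by establishing three facts and then invoking a dimension count: (i) $\mathcal{C}^{\perp}$ is itself $\lambda^{-1}$-constacyclic; (ii) $\widehat{h}(x)$ divides $x^n - \lambda^{-1}$ in $\gf(q)[x]$; and (iii) every $\lambda^{-1}$-constacyclic shift of $\widehat{h}(x)$ is orthogonal to every codeword of $\mathcal{C}$. Throughout I use $\deg g = n-k$ and $\deg h = k$, and note that $h_0 = h(0) \neq 0$ since $x \nmid x^n - \lambda$.

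For (i), take $\mathbf{b}\in \mathcal{C}^{\perp}$ and $\mathbf{c}\in \mathcal{C}$. Because $\mathcal{C}$ is $\lambda$-constacyclic, the inverse shift $(c_1, c_2, \ldots, c_{n-1}, \lambda^{-1}c_0)$ also lies in $\mathcal{C}$; the identity $\langle \mathbf{b}, (c_1, \ldots, c_{n-1}, \lambda^{-1}c_0)\rangle = 0$, multiplied by $\lambda$, rearranges immediately into $\langle (\lambda^{-1}b_{n-1}, b_0, \ldots, b_{n-2}), \mathbf{c}\rangle = 0$. For (ii), substituting $x\mapsto x^{-1}$ in $g(x)h(x) = x^n-\lambda$ and multiplying by $x^n$ gives
\[
\bigl(x^{n-k}g(x^{-1})\bigr)\cdot\bigl(x^{k}h(x^{-1})\bigr) = 1 - \lambda x^n = -\lambda(x^n - \lambda^{-1}),
\]
so $\widehat{h}(x) = h_0^{-1}x^{k}h(x^{-1})$ divides $x^n - \lambda^{-1}$.

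For (iii), I would use the standard identity: for polynomials $a(x), b(x)$ of degree at most $n-1$, the inner product of their coefficient vectors equals the coefficient of $x^{n-1}$ in $a(x)\cdot x^{n-1}b(x^{-1})$. Fix an arbitrary codeword $c(x) = a(x)g(x)\in \mathcal{C}$ with $\deg a < k$, and fix a shift $x^j\widehat{h}(x)$ with $0 \leq j \leq n-k-1$; in this range no reduction modulo $x^n-\lambda^{-1}$ is triggered, so these $n-k$ shifts span $\mathcal{D}$. A direct computation gives
\[
c(x)\cdot x^{n-1}\bigl(x^{j}\widehat{h}(x)\bigr)\Big|_{x\to x^{-1}} = h_0^{-1}x^{n-1-j-k}\, a(x)(x^n - \lambda),
\]
and the coefficient of $x^{n-1}$ here is the coefficient of $x^{j+k}$ in $a(x)x^n - \lambda a(x)$; both summands contribute $0$, since $j+k-n < 0$ kills the first and $j+k \geq k > \deg a$ kills the second. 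Hence $\mathcal{D}\subseteq \mathcal{C}^{\perp}$, and since $\dim\mathcal{D} = n - \deg\widehat{h} = n-k = \dim\mathcal{C}^{\perp}$, equality follows.

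The main obstacle is step (iii): translating the polynomial identity $g(x)h(x) = x^n - \lambda$ into vanishing of a specific coefficient under the reciprocal operation $x \mapsto x^{-1}$ requires careful degree bookkeeping, in particular restricting to the range $0 \leq j \leq n-k-1$ so that the wrap-around term coming from reduction modulo $x^n - \lambda^{-1}$ never arises. The constacyclicity, divisibility, and dimension-counting steps are each short once the reciprocal framework is in place.
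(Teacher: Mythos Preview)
Your proof is correct. The paper does not prove this lemma at all: it is simply quoted from \cite{KS90} (Krishna--Sarwate) as a known fact, so there is no ``paper's own proof'' to compare against. Your argument is the standard one --- reciprocal of the check polynomial, orthogonality of shifts via the coefficient-of-$x^{n-1}$ trick, and a dimension count --- and each step checks out. One cosmetic remark: in step (i) the phrase ``multiplied by $\lambda$'' is unnecessary, since the two inner products you write down are already identical term by term; the rearrangement needs no scaling.
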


The trace representation of $\lambda$-constacyclic codes is documented below (see \cite{DY10}, \cite{SR2018}, \cite[Theorem 1]{SZW20}). 

\begin{lemma}\label{lem-01} 
Let $\lambda \in {\rm GF}(q)^*$ with $\ord(\lambda)=r$. Let $n$ be a positive integer such that $\gcd(n, q)=1$. Define $m=\ord_{rn}(q)$ and let $\beta \in  \gf(q^m)$ be a primitive $rn$-th root of unity such that $\beta^n=\lambda$. Let $\mathcal{C}$ be a $\lambda$-constacyclic code of length $n$ over $\gf(q)$. Suppose that $\mathcal{C}$ has altogether $s$ pairwise 
non-conjugate nonzeros, $\beta^{i_1}, \beta^{i_2}, \cdots, \beta^{i_s}$, which are $s$ roots of its check polynomial.
Then $\mathcal{C}$ has the trace representation $\mathcal{C}=\{\mathbf{c}(a_1,a_2,\ldots,a_s)\,:\,a_j\in {\rm GF}(q^{m_j}),\,1\leq j\leq s\}$, where
$$\mathbf{c}(a_1,a_2,\ldots,a_s)=\left(\sum_{j=1}^s{\rm Tr}_{q^{m_j}/q}(a_j\beta^{-ti_j})\right)_{t=0}^{n-1},$$
$m_j=|C_{i_j}^{(q, rn)}|$ and $C_{i_j}^{(q, rn)}$ is the $q$-cyclotomic coset of $i_j$ modulo $rn$.
\end{lemma}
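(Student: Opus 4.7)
The plan is to establish the trace representation in three steps. I define the candidate map
$$\Phi:\bigoplus_{j=1}^{s}\gf(q^{m_j}) \longrightarrow \gf(q)^n,\qquad (a_1,\ldots,a_s)\longmapsto \mathbf{c}(a_1,\ldots,a_s),$$
observe that it is $\gf(q)$-linear (by $\gf(q)$-linearity of the trace), show that its image lies in $\C$, check that it is injective, and then match dimensions with $\deg h(x)$.

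To show $\Phi(a_1,\ldots,a_s)\in \C$ I would view the codeword as the polynomial $c(x)=\sum_{t=0}^{n-1}c_t x^t$ and verify $c(\beta^k)=0$ for every root $\beta^k$ of $g(x)$, equivalently for every $k$ lying in $\Gamma^{(1)}_{(q,rn,r)}$ but outside $\bigcup_{j=1}^{s}C_{i_j}^{(q,rn)}$. Expanding the trace,
$$c(\beta^k) = \sum_{j=1}^{s}\sum_{l=0}^{m_j-1} a_j^{q^l}\sum_{t=0}^{n-1}\beta^{t(k - i_j q^l)}.$$
Since $r\mid q-1$ and every $i_j\equiv 1 \pmod r$, the exponent $u:=k-i_j q^l$ is always divisible by $r$; a geometric-series computation then gives $\sum_{t=0}^{n-1}\beta^{tu}=n$ when $rn\mid u$ and $0$ otherwise. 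Hence $c(\beta^k)$ receives a contribution only from pairs $(j,l)$ with $k\equiv i_j q^l\pmod{rn}$. For $k$ a zero of $g(x)$ no such pair exists, so $c(\beta^k)=0$ and $c(x)\in \C$.

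The same identity yields injectivity. If $\Phi(a_1,\ldots,a_s)=\mathbf{0}$, then evaluating at $\beta^{i_j}$ leaves only the pair $(j,0)$ (because the $\beta^{i_j}$ are pairwise non-conjugate and, by definition of $m_j$, one has $i_j q^l\not\equiv i_j\pmod{rn}$ for $0<l<m_j$), producing $0=c(\beta^{i_j})=n\,a_j$; since $\gcd(n,q)=1$ this forces $a_j=0$. Finally, the domain of $\Phi$ has $\gf(q)$-dimension $\sum_{j=1}^{s}m_j = \deg h(x) = \dim_{\gf(q)} \C$, and injectivity then promotes $\Phi$ to a bijection onto $\C$, which is exactly the claimed trace representation.

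I expect the only delicate step to be the bookkeeping of the exponents modulo $rn$: one has to combine $r\mid q-1$ with the congruence $i_j\equiv 1\pmod r$ to force $r\mid k-i_j q^l$, collapsing the inner geometric sum to a clean $n/0$ dichotomy. Once this orthogonality relation is in hand, both the containment $\Phi(\bigoplus_j\gf(q^{m_j}))\subseteq \C$ and the injectivity of $\Phi$ drop out of the same computation, and the dimension count closes the argument.
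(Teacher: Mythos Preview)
The paper does not prove this lemma; it is quoted as a known result with references to \cite{DY10}, \cite{SR2018}, and \cite[Theorem~1]{SZW20}. Your argument is a correct self-contained proof along standard lines: the orthogonality relation $\sum_{t=0}^{n-1}\beta^{tu}\in\{0,n\}$ (valid because $r\mid u$ forces $\beta^{u}$ to be an $n$-th root of unity, and $\gcd(n,q)=1$ keeps $n$ invertible in $\gf(q)$) gives both the containment $\Phi(\bigoplus_j\gf(q^{m_j}))\subseteq\C$ and the injectivity of $\Phi$, and the dimension count $\sum_j m_j=\deg h(x)$ finishes it. One minor wording issue: where you write ``$k$ lying in $\Gamma^{(1)}_{(q,rn,r)}$'' you really mean $k$ ranging over the full defining set of $\C$ (the union of the relevant $q$-cyclotomic cosets), not just the coset leaders; but your computation works verbatim for any such $k$, so this is only a notational slip.
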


Lemma \ref{lem-01} is very useful in determining the parameters and weight distributions of some constacyclic codes. We will make use of this lemma later in this paper.

\subsection{The BCH bound for constacyclic codes} 

The following lemma documents the BCH bound for constacyclic codes over finite fields, which is a generalization of the BCH bound of cyclic codes. 

\begin{lemma}\label{lem-BCHbound}\cite[Lemma 4]{KS90}[The BCH bound for constacyclic codes] 
Let $\mathcal{C}$ be a $\lambda$-constacyclic code of length $n$ over $\gf(q)$ and $\beta \in  \gf(q^m)$ be a primitive $rn$-th root of unity such that $\beta^n=\lambda$. Let $g(x)$ be the generator polynomial of $\mathcal{C}$. If there are integers $e$, $h$, $\delta$ with $\gcd(e,n)=1$ and $2\leq \delta \leq n$ such that
$$g(\beta^{1+reh})=g(\beta^{1+re(h+1)})=\cdots=g(\beta^{1+re(h+\delta-2)})=0,$$
then the minimum Hamming distance of $\mathcal{C}$ is at least $\delta$.
\end{lemma}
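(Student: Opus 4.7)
The plan is to adapt the classical Vandermonde-determinant proof of the BCH bound for cyclic codes to the constacyclic setting. I would argue by contradiction: suppose there is a nonzero codeword $c(x)\in\mathcal{C}$ of Hamming weight $w$ with $w\le \delta-1$, and write $c(x)=\sum_{l=1}^{w}c_{i_l}x^{i_l}$ with $0\le i_1<i_2<\cdots<i_w\le n-1$ and each $c_{i_l}\in \gf(q)^*$. Since $c(x)$ lies in the ideal $\langle g(x)\rangle$ of $\gf(q)[x]/\langle x^n-\lambda\rangle$, the $\delta-1$ hypothesised zeros $\beta^{1+re(h+j)}$ of $g$ must also be zeros of $c$, yielding
$$\sum_{l=1}^{w}c_{i_l}\,\beta^{(1+re(h+j))\,i_l}=0 \quad\text{for } j=0,1,\ldots,\delta-2.$$

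Next I would retain the first $w$ of these equations (possible since $w\le \delta-1$) and, by splitting $\beta^{(1+re(h+j))\,i_l}=\beta^{(1+reh)\,i_l}\bigl(\beta^{re\,i_l}\bigr)^{j}$, absorb the $j$-independent factor into new unknowns $v_l=c_{i_l}\beta^{(1+reh)\,i_l}$. This produces a homogeneous linear system $M\mathbf{v}=\mathbf{0}$, where $\mathbf{v}=(v_1,\ldots,v_w)^{T}$ and $M$ is the $w\times w$ matrix with $(j,l)$-entry $\bigl(\beta^{re\,i_l}\bigr)^{j}$ for $j=0,\ldots,w-1$. Then $M$ is a Vandermonde matrix in the points $\beta^{re\,i_1},\ldots,\beta^{re\,i_w}\in \gf(q^m)^{*}$, with determinant $\prod_{1\le l<l'\le w}\bigl(\beta^{re\,i_{l'}}-\beta^{re\,i_l}\bigr)$.

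The main (and essentially only) obstacle is to confirm that the $w$ elements $\beta^{re\,i_l}$ are pairwise distinct, which is precisely where the hypothesis $\gcd(e,n)=1$ enters. Since $\beta$ has order $rn$ in $\gf(q^m)^{*}$, an equality $\beta^{re\,i_l}=\beta^{re\,i_{l'}}$ is equivalent to $re(i_l-i_{l'})\equiv 0\pmod{rn}$, hence to $e(i_l-i_{l'})\equiv 0\pmod{n}$; coprimality of $e$ and $n$ then forces $i_l\equiv i_{l'}\pmod{n}$, contradicting $0\le i_l\neq i_{l'}\le n-1$. Consequently $\det M\neq 0$, the system admits only the trivial solution $\mathbf{v}=\mathbf{0}$, every $c_{i_l}=0$, and this contradiction proves $d(\mathcal{C})\ge \delta$.
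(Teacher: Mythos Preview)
Your argument is correct: the Vandermonde reduction is exactly the right mechanism, and you have identified the only nontrivial point, namely that $\gcd(e,n)=1$ together with $\ord(\beta)=rn$ forces the nodes $\beta^{re\,i_l}$ to be pairwise distinct. One small clarification worth recording: the passage from ``$c(x)$ lies in the ideal $\langle g(x)\rangle$'' to ``the zeros of $g$ are zeros of $c$'' uses that the canonical representative of any ideal element (the polynomial of degree $<n$) is genuinely divisible by $g(x)$ in $\gf(q)[x]$, which holds because $g(x)\mid (x^n-\lambda)$; you implicitly rely on this and it is fine.

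As for comparison with the paper: the paper does not supply a proof of this lemma at all---it is quoted verbatim from \cite[Lemma~4]{KS90} and used as a black box throughout. Your write-up is the standard self-contained argument and would serve well as the omitted proof.
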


\subsection{Automorphism groups and equivalence of linear codes} 

Two linear codes $\C_1$ and $\C_2$ are said to be  {\em permutation-equivalent\index{permutation equivalent for codes}} if there is a permutation of coordinates which sends $\C_1$ to $\C_2$. This permutation could be described employing a \textit{permutation matrix}\index{permutation matrix}, which is a square matrix with exactly one 1 in each row and column and 0s elsewhere. The set of coordinate permutations that map a code $\C$ to itself forms a group, which is referred to as the \emph{permutation automorphism group\index{permutation automorphism group of codes}} of $\C$ and denoted by $\PAut(\C)$. 

A \emph{monomial matrix\index{monomial matrix}} over $\gf(q)$ is a square matrix having exactly one 
nonzero element of $\gf(q)$  in each row and column. A monomial matrix $M$ can be written either in 
the form $DP$ or the form $PD_1$, where $D$ and $D_1$ are diagonal matrices and $P$ is a permutation 
matrix. 

Let $\C_1$ and $\C_2$ be two linear codes of the same length over $\gf(q)$. Then $\C_1$ and $\C_2$ 
are said to be \emph{monomially-equivalent\index{monomially equivalent}} if there is a nomomial matrix over $\gf(q)$ 
such that $\C_2=\C_1M$. Monomial equivalence and permutation equivalence are precisely the same for 
binary codes. If $\C_1$ and $\C_2$ are monomially-equivalent, then they have the same weight distribution. 
The set of monomial matrices that map $\C$ to itself forms the group $\MAut(\C)$,  which is called the 
\emph{monomial automorphism group\index{monomial automorphism group}} of $\C$. By definition, we have 
$\PAut(\C) \subseteq \MAut(\C)$. Two linear codes $\C_1$ and $\C_2$ of the same length over $\gf(q)$ 
are said to be \emph{scalar-equivalent\index{scalar-equivalent}} if there is an invertible diagonal matrix $D$ 
over $\gf(q)$ such that $\C_2=\C_1D$. 

Two codes $\C_1$ and $\C_2$ are said to be \textit{equivalent}\index{equivalent} if there is a monomial 
matrix $M$ and an automorphism $\gamma$ of $\gf(q)$ such that $\C_1=\C_2 M \gamma$.  
All three are the same if the codes are binary; monomial equivalence and equivalence are the same if the field considered has a prime number of elements. 

The \textit{automorphism group}\index{automorphism group} of $\C$, denoted by $\GAut(\C)$, is the set 
of maps of the form $M\gamma$, 
where $M$ is a monomial matrix and $\gamma$ is a field automorphism, that map $\C$ to itself. In the binary 
case, $\PAut(\C)$,  $\MAut(\C)$ and $\GAut(\C)$ are the same. If $q$ is a prime, $\MAut(\C)$ and 
$\GAut(\C)$ are identical. In general, we have 
$$ 
\PAut(\C) \subseteq \MAut(\C) \subseteq \GAut(\C). 
$$ 
In this paper, we consider the monomial equivalence of linear codes. Two monomially-equivalent codes have the 
same parameters and weight distribution. 
If a linear code $\C$ is monomially-equivalent 
to a constacyclic code $\C_2$, we prefer $\C_2$ to $\C$ as constacyclic codes have a better algebraic structure than 
general linear codes.

\subsection{The projective Reed-Muller codes}\label{SEC-PRMCODE} 

 Let $m \geq 2$ be an integer. A point of the projective geometry $\PG(m-1, \gf(q))$ is given in homogeneous coordinates by $(x_1,x_2,\dots, x_{m})$ where all $x_i$ are in $\mathrm{GF}(q)$ and are not all zero. Each point of $\PG(m-1, \gf(q))$ has $q-1$ coordinate representations, as $(ax_1, ax_2,\dots,ax_{m})$ and $(x_1, x_2,\dots,x_{m})$ generate the same $1$-dimensional subspace of $\mathrm{GF}(q)^{m}$ for any nonzero $a\in \mathrm{GF}(q)$. 

Let $\gf(q)[x_1, x_2,\dots, x_m]$ be the set of polynomials in $m$ indeterminates over $\gf(q)$, which is a linear space over $\gf(q)$. Let $A(q, m, h)$ be the subspace of $\gf(q)[x_1, x_2,\dots, x_m]$ generated by all the homogeneous polynomials of degree $h$. Let $n=\frac{q^m-1}{q-1}$ and let $\{\mathbf{x}^1, \mathbf{x}^2, \cdots, \mathbf{x}^{n}\}$ be a set of projective points in $\mathrm{PG}(m-1,\gf(q))$. Then the \emph{$h$-th order projective Reed-Muller code}\index{projective Reed-Muller code} $\mathrm{PRM}(q, m, h)$ of length $n$ is defined by 
\begin{eqnarray*} 
\mathrm{PRM}(q,m,h)=\left \{\left (f(\mathbf{x}^1),f(\mathbf{x}^2), \dots, f(\mathbf{x}^n) \right ): f\in A(q, m, h) \right \}.
\end{eqnarray*} 
The code $\mathrm{PRM}(q,m,h)$ depends on the choice of the set  $\{\mathbf{x}^1,\mathbf{x}^2, \cdots, \mathbf{x}^{n}\}$ of coordinate representatives of the point set in $\mathrm{PG}(m-1,\gf(q))$, but is unique up to monomial equivalence (in fact, up to scalar equivalence). The parameters of $\mathrm{PRM}(q,m,h)$ are known and documented in the following theorem \cite{BR14,Lachaud,Sorensen}. 

\begin{theorem}\label{thm-PRMcode7} 
Let $m \geq 2$ and $1 \leq h \leq (m-1)(q-1)$. Then the linear code $\mathrm{PRM}(q,m,h)$ has length $n=\frac{q^m-1}{q-1}$ and minimum distance $(q-v)q^{m-2-u}$, where $h-1=u(q-1)+v$ and $0 \leq v <q-1$. Furthermore, 
\begin{eqnarray*}\label{eqn-PRMcode7}
\dim(\mathrm{PRM}(q,m,h))=\sum_{t \equiv h \pmod{q-1} \atop 0 < t \leq h} \left(  \sum_{j=0}^m (-1)^j \binom{m}{j} \binom{t-jq+m-1}{t-jq} \right).   
\end{eqnarray*}
\end{theorem}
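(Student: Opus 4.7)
The plan is to verify the three assertions in turn. The length statement is immediate from the definition, as $|\PG(m-1,\gf(q))| = (q^m-1)/(q-1)$, which is exactly $n$.

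For the dimension formula, I would analyse the evaluation map
\[
\mathrm{ev}: A(q,m,h) \to \gf(q)^n, \qquad f \mapsto (f(\mathbf{x}^1),\ldots,f(\mathbf{x}^n)),
\]
whose image is $\PRM(q,m,h)$. Because $x_i^q = x_i$ on every element of $\gf(q)$, the image is unchanged if we pass to the quotient ring $R = \gf(q)[x_1,\ldots,x_m]/(x_1^q - x_1,\ldots,x_m^q - x_m)$. Moreover, the scaling relation $f(a\mathbf{x}) = a^h f(\mathbf{x})$ together with $a^{q-1}=1$ for $a \in \gf(q)^*$ shows that multiplying a homogeneous polynomial by a power of a sum $\sum_i x_i^{q-1}$ can be used to shift its degree by multiples of $q-1$ without changing the codeword it produces. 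Hence the projective code is determined by the homogeneous components in $R$ of degrees $t$ with $t \equiv h \pmod{q-1}$ and $0 < t \leq h$, and these components contribute to the image independently. The dimension of each such component is, by standard inclusion--exclusion on the number of monomials of degree $t$ in $m$ variables with each exponent at most $q-1$, equal to $\sum_j (-1)^j \binom{m}{j} \binom{t-jq+m-1}{t-jq}$. Summing over the admissible $t$ yields the stated formula, once one checks that the evaluation on these graded pieces is injective after passing to the quotient by the irrelevant ideal.

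For the minimum distance, I would establish matching lower and upper bounds on the weight of a nonzero codeword. The lower bound is equivalent to an upper bound on the number of projective zeros of a nonzero $f \in A(q,m,h)$; this is the classical S\o{}rensen bound generalising Serre's inequality on hypersurfaces. I would prove it by induction on $m$. For $m=2$, the code is a Reed--Solomon-type MDS code on $\PG(1,\gf(q))$ with length $q+1$ and the bound follows directly from counting roots of univariate polynomials. For $m \geq 3$, one picks a projective hyperplane $H$: if $f$ vanishes identically on $H$, then factor out the linear equation of $H$ and induct on the degree; otherwise apply the inductive hypothesis to $f|_H$ and combine it with the generalised Reed--Muller bound on the affine complement $\PG(m-1,\gf(q)) \setminus H \cong \gf(q)^{m-1}$. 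For the matching construction, I would exhibit a product of $u+1$ linear forms---the last one chosen to vanish on exactly $v$ of the $q$ parallel affine hyperplanes in a fixed affine chart---whose evaluation has weight exactly $(q-v)q^{m-2-u}$.

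The hard part is the inductive upper bound on the number of projective zeros. The case split on whether $f|_H$ vanishes identically, together with the precise interaction between the parameters $u$ and $v$ in the S\o{}rensen bound, requires careful bookkeeping, in particular ensuring that the new parameters $(u', v')$ after restriction or after factoring out $H$ yield a bound that combines with the affine Reed--Muller estimate to reproduce $(q-v)q^{m-2-u}$ exactly. This delicate matching is the technical heart of the Lachaud--S\o{}rensen proof.
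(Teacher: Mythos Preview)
The paper does not prove this theorem at all: it is stated as a known result with citations to \cite{BR14,Lachaud,Sorensen} and used as background, so there is no ``paper's own proof'' to compare against. Your outline is essentially the original S{\o}rensen argument that the paper is citing---the dimension via counting reduced monomials in the degree classes $t\equiv h\pmod{q-1}$, and the minimum distance via the inductive hyperplane-section bound plus an explicit product of linear forms---so in spirit you are reproducing exactly the reference the authors invoke rather than doing anything different from (or omitted by) the paper.
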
  

\begin{theorem}\label{prm-8}
Let $m\geq 2$ and $1\leq h\leq (m-1)(q-1)$. If $h\not \equiv 0\pmod{q-1}$, then $$\mathrm{PRM}(q, m, h)^{\bot}=\mathrm{PRM}(q, m, (m-1)(q-1)-h).$$
\end{theorem}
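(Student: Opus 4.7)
My plan is to establish the equality by proving the containment $\PRM(q,m,(m-1)(q-1)-h)\subseteq\PRM(q,m,h)^{\perp}$ and then matching dimensions. For the containment, take $f\in A(q,m,h)$ and $g\in A(q,m,(m-1)(q-1)-h)$; the product $F=fg$ is homogeneous of degree $k=(m-1)(q-1)$, a positive multiple of $q-1$. Since $\sum_{\lambda\in\gf(q)^*}\lambda^k=-1$ in $\gf(q)$, splitting the punctured affine space along projective lines gives $\sum_{\mathbf{v}\in\gf(q)^m}F(\mathbf{v})=-\sum_{i=1}^n F(\mathbf{x}^i)$ (using $F(\mathbf{0})=0$). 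Expanding $F$ into monomials $\prod_j x_j^{a_j}$ of total degree $k$, each term contributes $\prod_j\bigl(\sum_{v\in\gf(q)} v^{a_j}\bigr)$, which is nonzero in $\gf(q)$ only when every $a_j\geq 1$ and $(q-1)\mid a_j$; this forces $k\geq m(q-1)$, contradicting $k=(m-1)(q-1)$. So every monomial sum vanishes, giving $\sum_i F(\mathbf{x}^i)=0$, and hence the desired containment.

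For the dimension equality, let $N(t,m,q)$ denote the inner alternating sum in Theorem~\ref{thm-PRMcode7}; it is the coefficient of $z^t$ in $P(z)=\bigl((1-z^q)/(1-z)\bigr)^m$ and counts monomials in $m$ variables of degree $t$ with all exponents in $\{0,1,\ldots,q-1\}$. Writing $h=u(q-1)+v$ with $1\leq v\leq q-2$, Theorem~\ref{thm-PRMcode7} yields $\dim\PRM(q,m,h)=\sum_{j=0}^{u}N(v+j(q-1),m,q)$ and $\dim\PRM(q,m,(m-1)(q-1)-h)=\sum_{j=0}^{m-2-u}N((q-1-v)+j(q-1),m,q)$. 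The involution $(a_1,\ldots,a_m)\mapsto(q-1-a_1,\ldots,q-1-a_m)$ gives the symmetry $N(t,m,q)=N(m(q-1)-t,m,q)$, which converts the second sum into $\sum_{j=u+1}^{m-1}N(v+j(q-1),m,q)$, so the combined total is $\sum_{j=0}^{m-1}N(v+j(q-1),m,q)$. I would then evaluate this via a roots-of-unity filter: with $\omega$ a primitive $(q-1)$-th root of unity, the total equals $\tfrac{1}{q-1}\sum_{k=0}^{q-2}\omega^{-vk}P(\omega^k)$. The identity $(\omega^k)^q=\omega^k$ forces $P(\omega^k)=1$ for $k\neq 0$, while $P(1)=q^m$; since $v\not\equiv 0\pmod{q-1}$, the character sum $\sum_{k=0}^{q-2}\omega^{-vk}$ vanishes, and the total collapses to $(q^m-1)/(q-1)=n$.

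The orthogonality step is essentially a clean zero-sum computation. The main obstacle is the combinatorial bookkeeping in the dimension step, where one must verify that the two progressions of admissible degrees $t$, after applying the symmetry $N(t)=N(m(q-1)-t)$, jointly tile the residue class $v\pmod{q-1}$ across the full range $[0,m(q-1)]$ with no overlap and no gap. The hypothesis $h\not\equiv 0\pmod{q-1}$ is used precisely to guarantee $v\neq 0$, so that the character sum $\sum_{k=0}^{q-2}\omega^{-vk}$ vanishes; in the excluded case the same evaluation instead gives $(q^m+q-2)/(q-1)=n+1$, reflecting the one-dimensional overlap of the two codes that obstructs duality.
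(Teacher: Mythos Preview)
The paper does not prove this theorem; it is stated without proof as a known result from the projective Reed--Muller literature (S{\o}rensen, \emph{IEEE Trans.\ Inf.\ Theory} 37 (1991)). Your argument is correct and in fact follows the same two-step strategy as the original source: establish $\PRM(q,m,(m-1)(q-1)-h)\subseteq\PRM(q,m,h)^{\perp}$ via a vanishing sum over $\gf(q)^m$, then show the dimensions are complementary.

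One simplification you might consider for the dimension step: the roots-of-unity filter can be bypassed by a direct count. Your combined sum $\sum_{j=0}^{m-1}N(v+j(q-1),m,q)$ is exactly the number of tuples $(a_1,\ldots,a_m)\in\{0,\ldots,q-1\}^m$ with $\sum_k a_k\equiv v\pmod{q-1}$. Under the $q$-adic bijection with $\{0,1,\ldots,q^m-1\}$, and using $\wt_q(i)\equiv i\pmod{q-1}$, this is the number of $i$ in that range with $i\equiv v\pmod{q-1}$; since $q^m-1\equiv 0\not\equiv v$, the count is precisely $(q^m-1)/(q-1)=n$. This avoids working over $\mathbb{C}$ and makes the role of the hypothesis $v\not\equiv 0$ equally transparent.
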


By Theorem \ref{thm-PRMcode7} and definition, $\mathrm{PRM}(q,m,1)$ 
is monomially-equavalent to the Simplex code. The weight distribution of $\mathrm{PRM}(q,m,2)$ was settled in \cite{Lisx19}. 
It was pointed out in \cite{BM01,Sorensen} that the code $\mathrm{PRM}(q,m,h)$ is not cyclic in general, but is equivalent to 
a cyclic code if $\gcd(m, q-1)=1$ or $h \equiv 0 \pmod{q-1}$.     
Later in this paper, we will compare some newly constructed constacyclic codes with the projective Reed-Muller codes.   
This explains why we introduced the projective Reed-Muller codes here. We will need the following theorem later. 

\begin{theorem}\label{thm-Lishuxing} \cite{Lisx19} 
Let $m \geq 2$. 
Then the weight distribution of $\mathrm{PRM}(q,m,2)$ is given by 
\begin{eqnarray*}
A_0 &=& 1, \\ 
A_{q^{m-1}} &=& q^{m}-1+ \sum_{j=1}^{\lfloor (m-1)/2 \rfloor} q^{j^2+j} \frac{\prod_{i=m-2j}^{m} (q^i-1)}{\prod_{i=1}^j (q^{2i}-1)}, \\
A_{q^{m-1}-\tau q^{m-1-j}} &=& \frac{q^{j^2}(q^j+\tau)}{2}  \frac{\prod_{i=m-2j+1}^{m} (q^i-1)}{\prod_{i=1}^j (q^{2i}-1)}, \ 
1 \leq j \leq \left \lfloor \frac{m}{2} \right\rfloor, \ \tau \in \{1, -1\}, 
\end{eqnarray*} 
and $A_h=0$ for other $h$. 
\end{theorem}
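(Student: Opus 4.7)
The plan is to identify each codeword of $\mathrm{PRM}(q,m,2)$ with a quadratic form on $V=\gf(q)^m$ and then invoke the classical theory of quadratic forms over finite fields. First I would show that the evaluation map $Q\mapsto (Q(\mathbf{x}^1),\ldots,Q(\mathbf{x}^n))$ from $A(q,m,2)$ to $\mathrm{PRM}(q,m,2)$ is injective: if the image is zero, then $Q(a\mathbf{x})=a^2 Q(\mathbf{x})$ forces $Q$ to vanish on all of $V$, and a short case split on $q\geq 3$ versus $q=2$ shows $Q\equiv 0$ as a polynomial. Hence $\dim\mathrm{PRM}(q,m,2)=\binom{m+1}{2}$ and codewords are parametrised bijectively by quadratic forms on $V$. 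Writing $Z(Q)=|\{x\in V:Q(x)=0\}|$, the identity $Q(ax)=a^2Q(x)$ groups the $Z(Q)-1$ nonzero affine zeros of $Q$ into $(Z(Q)-1)/(q-1)$ projective zeros, so the codeword weight is $\wt(Q)=(q^m-Z(Q))/(q-1)$.

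Next I would apply the classification of quadratic forms over $\gf(q)$: every $Q$ is $\mathrm{GL}_m(\gf(q))$-equivalent to a normal form indexed by a rank $r$ and, when $r$ is even, a type $\tau\in\{+,-\}$ (the discriminant class for odd $q$, the Arf invariant for even $q$). Splitting $V=W\oplus\mathrm{rad}(Q)$ with $\dim W=r$ and applying the classical zero counts for non-degenerate forms on $W$ yields $Z(Q)=q^{m-1}$ when $r$ is odd and $Z(Q)=q^{m-1}\pm(q-1)q^{m-1-j}$ when $r=2j$ has type $\pm$; through the weight formula these translate to exactly the weights $q^{m-1}$ and $q^{m-1}\mp q^{m-1-j}$ appearing in the statement.

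Finally I would count the forms in each equivalence class. The $(m-r)$-dimensional radical can be chosen in $\binom{m}{r}_q$ ways, and the non-degenerate quotient supports $|\mathrm{GL}_r(\gf(q))|/|O_r^\tau(\gf(q))|$ forms of the prescribed type. Substituting
\[
|\mathrm{GL}_r|=q^{\binom{r}{2}}\prod_{i=1}^{r}(q^i-1),\quad |O_{2j}^\pm|=2q^{j(j-1)}(q^j\mp 1)\prod_{i=1}^{j-1}(q^{2i}-1),\quad |O_{2j+1}|=2q^{j^2}\prod_{i=1}^{j}(q^{2i}-1),
\]
and simplifying yields: rank-$1$ forms contribute the constant $q^m-1$ in $A_{q^{m-1}}$, rank-$(2j+1)$ forms with $j\geq 1$ contribute the $j$-th term of that sum, and rank-$2j$ forms of type $\pm$ give exactly $A_{q^{m-1}\mp q^{m-1-j}}$. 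The rank bound $r\leq m$ is what produces the ranges $1\leq j\leq\lfloor(m-1)/2\rfloor$ and $1\leq j\leq\lfloor m/2\rfloor$ in the theorem.

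The main obstacle is the characteristic-$2$ case: there the polarisation $B(x,y)=Q(x+y)-Q(x)-Q(y)$ is alternating rather than symmetric, the notion of rank of a quadratic form is more subtle (the rank of $Q$ may exceed that of $B$ when $Q$ restricts non-trivially to $\mathrm{rad}(B)$), and the two even-rank types are distinguished by the Arf invariant. One has to verify that the affine zero counts and the orbit cardinalities still follow the same closed forms as in odd characteristic; once this is done, the algebraic simplification of $\binom{m}{r}_q\cdot|\mathrm{GL}_r|/|O_r^\tau|$ into the expressions in the theorem is routine.
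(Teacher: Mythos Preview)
Your approach is correct and is the standard route to this weight distribution: identify codewords with quadratic forms on $\gf(q)^m$, convert the zero count into a weight via homogeneity, classify forms by rank and type, and count each $\mathrm{GL}_m$-orbit through the index $|\mathrm{GL}_r|/|O_r^\tau|$. Note, however, that the paper does not actually give a proof of this theorem; it is quoted from \cite{Lisx19} and used only as a reference point for the weight distribution of $\C(4,m,3,2)$ later on. Your sketch is essentially the argument carried out in that cited source, so there is nothing to compare against within the present paper, and your identification of the characteristic-$2$ bookkeeping (Arf invariant, rank of $Q$ versus rank of the polar form) as the one place needing care is accurate.
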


\subsection{The nonprimitive Reed-Muller codes}\label{sec-PGRMcode}

Let $m\geq 2$ be an integer  and let $r>1$ be a divisor of $q-1$. Let $\ell=(q-1)h+\ell_0<(q-1)m$, where $0\leq \ell_0\leq q-2$ and $\ell_0\equiv 0 \pmod{r}$. Let $\mathrm{P}(q, m, r, \ell)$ be the linear subspace of $\gf(q)[x_1,x_2, \dots, x_{m}]$, which is spanned by all monomials $x_1^{i_1}x_2^{i_2} \cdots x_{m}^{i_{m}}$ satisfying the following three conditions: 
\begin{enumerate}
\item $0\leq i_j\leq q-1$, for $1\leq j\leq m$,
\item $\sum_{j=1}^{m} i_j \equiv 0 \pmod{r}$,
\item $\sum_{j=1}^{m} i_j  \leq \ell $.
\end{enumerate} 

Let $\beta$ be a primitive element of $\gf(q^m)$ and let $\m_{\beta}(x)=\sum_{i=0}^{m-1}\epsilon_i x^i+x^m$, where $\epsilon_i\in \gf(q)$. Let
 $$\mathbf{M}=\begin{pmatrix}
0& 1& 0& \cdots &0\\
0& 0& 1& \cdots &0\\
\vdots &\vdots &\vdots &\ddots&\vdots \\
0& 0&0&\cdots& 1\\	
-\epsilon_0& -\epsilon_1 &-\epsilon_2&\cdots& -\epsilon_{m-1}	
\end{pmatrix}
 $$
be the companion matrix of $\m_{\beta}(x)$. Let $n=\frac{q^m-1}r$ and $\mathbf{e}=(1,0,\ldots,0)$. Then the \emph{nonprimitive generalized Reed-Muller code} $\mathrm{NGRM}(q, m, r, h)$ of length $n$ is defined by 
\begin{align*}
\mathrm{NGRM}(q,m,r, h)=\left \{\left (f(\mathbf{e}), f(\mathbf{e}\mathbf{M}),\dots, f(\mathbf{e}\mathbf{M}^{n-1}) \right ): f\in \mathrm{P}(q, m, r, \ell ) \right \}.
\end{align*}
In particular, when $r=q-1$, it is easily verified that $\{ \mathbf{e}, \mathbf{e}\mathbf{M}, \cdots, \mathbf{e}\mathbf{M}^{n-1}\}$ is the set of projective points in $\mathrm{PG}(m-1,\gf(q))$. Then the code $\mathrm{NGRM}(q, m, q-1, h)$ is also called the \emph{$h$-th order projective generalized Reed-Muller code} $\mathrm{PGRM}(q, m, h)$ of length $n$. 

\begin{theorem}\cite{DGM70}
Let $\ell=(q-1)h+\ell_0<(q-1)m$, where $0 \leq \ell_0 \leq q-2$ and $\ell_0\equiv 0 \pmod{r}$. Then the minimum weight of $\mathrm{NGRM}(q, m, r, h)$ is $\frac{(q-\ell_0)q^{m-h-1}-1}{r}$ and 
\begin{eqnarray}\label{eqn-dimPGRMc}
\dim(\mathrm{NGRM}(q, m, r, h))=\left| \left\{ 0 \leq j \leq \frac{q^m-1}{r}: \wt_q(j r) \leq \ell  \right\}   \right|.  
\end{eqnarray}
\end{theorem}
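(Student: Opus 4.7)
The plan is to identify $\mathrm{NGRM}(q,m,r,h)$ with a cyclic code of length $n=(q^m-1)/r$ over $\gf(q)$, read off its defining set in terms of $q$-adic weights, and then combine the BCH bound with an explicit minimum-weight codeword to pin down $d$.

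First I would set up the dictionary between the evaluation picture and the trace picture. Because $\mathbf{M}$ is the companion matrix of $\m_{\beta}(x)$, the $\gf(q)$-linear isomorphism $\phi:\gf(q)^m\to\gf(q^m)$ sending the $j$-th standard basis vector to $\beta^{j-1}$ intertwines right-multiplication by $\mathbf{M}$ with multiplication by $\beta$, so $\phi(\mathbf{e}\mathbf{M}^i)=\beta^i$. Each coordinate function $x_j\circ\phi^{-1}$ is therefore a $q$-linearised polynomial $L_j(y)=\sum_{t=0}^{m-1}\gamma_{j,t}\,y^{q^t}$, and evaluating a monomial $x_1^{i_1}\cdots x_m^{i_m}$ at $\mathbf{e}\mathbf{M}^i$ becomes $\prod_j L_j(\beta^i)^{i_j}$. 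Expanding this product writes it as a $\gf(q)$-linear combination of powers $\beta^{ia}$ in which every surviving exponent $a$ satisfies $\wt_q(a)\le \sum_j i_j$, and the degree restriction $\sum_j i_j\equiv 0\pmod r$ forces every such $a$ to satisfy $a\equiv 0\pmod r$. Shifting $i\mapsto i+1$ in a codeword just multiplies each coefficient of $\beta^{ia}$ by $\beta^a$, so the code is invariant under the cyclic shift of length $n$, hence is cyclic.

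Next I would pin down the nonzeros. The above expansion already shows that the nonzero set of $\mathrm{NGRM}(q,m,r,h)$ is contained in $\{\beta^{jr}:0\le j<n,\ \wt_q(jr)\le \ell\}$. The reverse inclusion comes from a non-degeneracy computation: the family of products $\prod_j L_j^{i_j}$ with $(i_1,\ldots,i_m)$ ranging over the admissible tuples spans a space which, after reduction modulo $x_j^q-x_j$ and regrouping by $q$-cyclotomic cosets modulo $q^m-1$, realises every such exponent $jr$. This yields $\dim\mathrm{NGRM}(q,m,r,h)=|\{0\le j<n:\wt_q(jr)\le \ell\}|$, matching \eqref{eqn-dimPGRMc}.

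For the minimum distance I would proceed in two directions. The lower bound follows from Lemma \ref{lem-BCHbound} (applied with $\lambda=1$) once one locates a long consecutive block of nonzeros of the dual: the complement of the defining set, after relabelling via the subgroup quotient of $\gf(q^m)^*$ by $\langle\beta^n\rangle$, contains a run of length $((q-\ell_0)q^{m-h-1}-1)/r$ consisting of indices $jr$ with $\wt_q(jr)>\ell$. The matching upper bound is obtained by evaluating an explicit product of $h$ affine forms over $\gf(q)$ together with one factor of degree $\ell_0$ at the orbit representatives $\{\mathbf{e}\mathbf{M}^i\}$ and observing that each fibre of $\gf(q^m)^*\twoheadrightarrow\gf(q^m)^*/\langle\beta^n\rangle$ has exactly $r$ elements, which divides the classical $r=1$ count by $r$. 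The main obstacle is the delicate counting step that produces a \emph{tight} BCH-type progression: one must extract, among multiples of $r$ in $\{0,\ldots,q^m-2\}$ whose $q$-weight exceeds $\ell$, a maximal arithmetic progression whose length agrees with the weight of the explicit codeword, which is essentially the nonprimitive refinement of the classical Kasami-Lin-Peterson analysis and is where the decomposition $\ell=(q-1)h+\ell_0$ with $r\mid \ell_0$ enters critically.
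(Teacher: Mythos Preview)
The paper does not prove this theorem at all: it is stated in the preliminaries (Section~\ref{sec-PGRMcode}) with the citation \cite{DGM70} and no proof, so there is no ``paper's own proof'' to compare your proposal against. The result is quoted from Delsarte--Goethals--MacWilliams precisely so that it can be used later as a benchmark.

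That said, your outline is broadly the classical strategy from \cite{DGM70}: recognise $\mathrm{NGRM}(q,m,r,h)$ as a cyclic code of length $n=(q^m-1)/r$ via the companion-matrix evaluation, identify its nonzeros as the $\beta^{jr}$ with $\wt_q(jr)\le\ell$, then sandwich the minimum distance between a BCH-type lower bound and an explicit product-of-affine-forms upper bound. However, as written the sketch has a real gap and a confusion. The confusion: you say ``the complement of the defining set \ldots\ contains a run \ldots\ consisting of indices $jr$ with $\wt_q(jr)>\ell$'', but the complement of the defining set is the set of \emph{nonzeros}, which by your own earlier paragraph are the indices with $\wt_q(jr)\le\ell$; it is the defining set itself (the zeros of $\mathrm{NGRM}$) that must contain the long run, and you have the inequality reversed. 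The gap: neither the ``non-degeneracy computation'' for the dimension nor the construction of the consecutive block of zeros is actually carried out---you assert that the admissible monomials hit every exponent $jr$ with $\wt_q(jr)\le\ell$ and that a run of the claimed length exists, but these are exactly the substantive counting arguments in \cite{DGM70}, and your proposal only names them. In particular, showing that the smallest integer of $q$-weight exceeding $\ell$ is $(q-\ell_0)q^{m-h-1}-1+1$ and that all multiples of $r$ below it lie in the defining set (after the appropriate shift) is the heart of the lower bound and is not addressed.
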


Note that the minimum distance of $\mathrm{NGRM}(q,m,r,h)$ is known to be $\frac{(q-\ell_0)q^{m-h-1}-1}{r}$. But the expression in (\ref{eqn-dimPGRMc}) is not specific, and no specific formula for $\dim(\mathrm{NGRM}(q, m, r, h))$ is known. Later we will compare the codes $\mathrm{NGRM}(q,m, r, h)$ with the constacyclic codes presented in this paper. To this end, we present the following example. 

\begin{example}\label{exam-PGRMc} 
The parameters of the codes $\mathrm{NGRM}(3,4,2,h)$ for $0 \leq h \leq 3$ are given below. 
\begin{eqnarray*}
[40, 1, 40], \  [40, 11, 13], \  [40, 30, 4], \  [40, 40, 1]. 
\end{eqnarray*}   
\end{example} 

\subsection{The punctured Dilix codes}\label{sec-DilixCode} 

In this subsection, we outline a type of cyclic codes,  called \emph{punctured Dilix codes}\index{punctured Dilix code} \cite{DLX18}. Let $N=q^m-1$, where $m$ is a positive integer. 
Let $\beta$ be a primitive element of $\gf(q^m)$. For any $1 \leq h  \leq m$, we define a polynomial 
\begin{eqnarray*}\label{eqn-generatorplym}
\omega_{(q,m,h )}(x)=\prod_{\myatop{1 \leq a \leq n-1}{1 \leq \wt(a) \leq h  }} (x-\beta^a).  
\end{eqnarray*}
Since $\wt(a)$ is a constant function on each $q$-cyclotomic coset modulo $N$, 
$\omega_{(q, m, h )}(x)$ is a polynomial over $\gf(q)$. By definition, $\omega_{(q, m, h )}(x)$ is a 
divisor of $x^N-1$. 
Let $\Omega{(q,m, h )}$ denote the cyclic code over 
$\gf(q)$ with length $N$ and generator polynomial $\omega_{(m,q,h )}(x)$. 

\begin{theorem}\label{thm-gRMcode} \cite{DLX18}
Let $m \geq 2$ and $1 \leq h  \leq m-1$. Then  
$\Omega{(q, m, h)}$ has parameters 
$$\left[q^m-1, q^m-\sum_{i=0}^{h } \binom{m}{i} (q-1)^i , d \geq \frac{q^{h+1}-1}{q-1} \right].$$ 
\end{theorem}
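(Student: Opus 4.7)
The plan is to read off the length and dimension from a direct enumeration of the defining set of $\omega_{(q,m,h)}(x)$, and then to extract the minimum-distance bound from the BCH bound (Lemma~\ref{lem-BCHbound} in the cyclic case $\lambda=1$) applied to a carefully chosen consecutive block of exponents.

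The length $N=q^m-1$ is immediate from the construction. For the dimension, I would count the exponents in the defining set
\[
T=\{a:1\leq a\leq N-1,\ 1\leq \wt(a)\leq h\},
\]
noting that $\dim\Omega(q,m,h)=N-|T|$. Writing $a=\sum_{j=0}^{m-1}a_jq^j$ with $0\leq a_j\leq q-1$, the integers $a\in\{0,1,\dots,q^m-1\}$ with $\wt(a)=i$ are obtained by selecting which $i$ of the $m$ digit positions are nonzero ($\binom{m}{i}$ ways) and then choosing a nonzero value in each ($(q-1)^i$ ways), giving $\binom{m}{i}(q-1)^i$ such integers. Summing over $0\leq i\leq h$ and then subtracting the two boundary cases yields $|T|$: the integer $0$ has weight $0$ and must be removed, while $q^m-1$ has weight $m$ which exceeds $h$ under the hypothesis $h\leq m-1$ and therefore never enters the sum. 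Hence $|T|=\sum_{i=0}^{h}\binom{m}{i}(q-1)^i-1$, and the stated dimension follows.

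For the minimum distance, the key claim is that $\{1,2,\dots,\tfrac{q^{h+1}-1}{q-1}-1\}\subseteq T$. Setting $M=\tfrac{q^{h+1}-1}{q-1}-1=q+q^2+\cdots+q^h$, suppose for contradiction that some $a\in[1,M]$ satisfies $\wt(a)\geq h+1$; then the $q$-adic digits of $a$ have at least $h+1$ nonzero entries, so
\[
a\geq 1+q+q^2+\cdots+q^{h}=\tfrac{q^{h+1}-1}{q-1}=M+1,
\]
a contradiction. (Equivalently, any digit in position $\geq h+1$ would force $a\geq q^{h+1}>M$, so the $h+1$ nonzero digits must lie in positions $0,1,\dots,h$, and then $a\geq 1+q+\cdots+q^h$.) Therefore $\beta,\beta^2,\dots,\beta^M$ are all zeros of $\omega_{(q,m,h)}(x)$, an uninterrupted run of $M=\tfrac{q^{h+1}-1}{q-1}-1$ consecutive zeros. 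Applying the BCH bound with $e=1$ and $\delta=\tfrac{q^{h+1}-1}{q-1}$ yields $d(\Omega(q,m,h))\geq \tfrac{q^{h+1}-1}{q-1}$.

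The only nonroutine ingredient is the Hamming-weight estimate in the second paragraph above; the rest is bookkeeping. A small thing to be careful about is the exclusion of $a=q^m-1$ from the dimension count precisely at the extreme $h=m-1$, which is guaranteed by $\wt(q^m-1)=m>h$ under the standing hypothesis $1\leq h\leq m-1$.
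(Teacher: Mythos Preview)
Your proof is correct. The paper does not supply its own proof of this theorem; it is quoted from \cite{DLX18}, and your argument is exactly the standard one used there: enumerate the defining set by counting $q$-adic strings of each Hamming weight, and obtain the distance bound by exhibiting the consecutive run $\{1,\dots,\frac{q^{h+1}-1}{q-1}-1\}$ inside the defining set and invoking the BCH bound.
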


Later we will use the codes $\Omega(q,m,h)$ to construct some constacyclic codes. 
This explains why we introduced the punctured Dilix codes $\Omega(q,m,h)$ here. 

\section{A general construction of constacyclic codes of length $\frac{q^m-1}{r}$ with cyclic codes of length $q^m-1$} \label{sec-3}

In this section, we present a general construction of constacyclic codes of length $\frac{q^m-1}{r}$ with cyclic codes of length $q^m-1$ over $\gf(q)$, where $r$ is a positive divisor of $q-1$. Throughout this section, let $n=\frac{q^m-1}{r}$, where $m$ is an integer with $m \geq 2$. Define $N=rn=q^m-1$. Let $\beta$ be a primitive element of $\gf(q^m)$ and $\lambda=\beta^{n}$. Then $\lambda$ is an element of $\gf(q)^*$ with order $r$. 

Let $\C$ be a cyclic code of length $N$ over $\gf(q)$ with generator polynomial 
$$
g(x)=\prod_{i \in D(\C)} (x-\beta^i), 
$$ 
where $D(\C)$ is the union of some $q$-cyclotomic cosets modulo $N$ and is called the \emph{defining set} of $\C$ 
with respect to the primitive element $\beta$ of $\gf(q^m)$. 
Put 
\begin{eqnarray*}
\underline{D}(\C)=\{i \in D(\C): i \equiv 1 \pmod{r} \}. 
\end{eqnarray*} 
If $\underline{D}(\C) = \emptyset$, define $\underline{g}(x)=1$. 
If $\underline{D}(\C) \neq \emptyset$, define 
\begin{eqnarray*}
\underline{g}(x)=\prod_{i \in \underline{D}(\C)} (x-\beta^i). 
\end{eqnarray*} 
Then the following hold: 
\begin{enumerate}
\item $\underline{g}(x)$ is a polynomial over $\gf(q)$. 

\item $\underline{g}(x)=\gcd(g(x), x^n-\lambda)$.
\end{enumerate} 
Let $\underline{\C}$ denote the $\lambda$-constacyclic code of length $n$ over $\gf(q)$ with generator polynomial $\underline{g}(x)$. By definition, $\underline{\C}$ is constructed from the given cyclic code $\C$. In particular, the following hold:
\begin{enumerate}
\item If $(x^n-\lambda) \mid g(x)$, i.e., $\underline{D}(\C)=\Gamma_{(q,N,r)}^{(1)}$, then $\underline{\C}=\{ \0 \}$.	
\item If $\gcd(g(x), x^n-\lambda)=1$, i.e., $\underline{D}(\C)= \emptyset$, then $\underline{\C}=\gf(q)^n$.
\end{enumerate}
This general construction produces a nontrivial code only when $\underline{D}(\C) \not \in \{\emptyset, \Gamma_{(q, N,r)}^{(1)}\}$. 

By definition, 
$$
\dim(\C)=N-\deg(g)=N-|D(\C)|
$$ 
and 
$$
\dim(\underline{\C})=n-\deg(\underline{g})=n-|\underline{D}(\C)|. 
$$
Hence, it may not be easy to determine $\dim(\underline{\C})$ even if $\dim(\C)$ is known. However, this may be possible in some special cases. It is clear that $$x^{rn}-1=\prod_{i=0}^{r-1}(x^n-\lambda^i),$$ and $\gcd(x^n-\lambda^i, x^n-\lambda^j)=1$ for $i\neq j$. For a given $g(x)\mid (x^{N}-1)$, let $\underline{g}_i(x)=\gcd(g(x), x^n-\lambda^i)$. Then $\underline{g}_1(x)=\underline{g}(x)$. Let $\mathrm{Ind}(\C)=\{i: \underline{g}_i(x)\neq 1, 0\leq i\leq r-1 \}$, then $$g(x)=\prod_{i\in \mathrm{Ind}(\C)}\underline{g}_i(x).$$

\begin{theorem}\label{pro:13}
Let notation be the same as before. Assume that $\gcd(g(x),x^n-\lambda)\neq 1$ and $\gcd(g(x),x^n-\lambda)\neq x^n-\lambda$. Then the following hold:
\begin{enumerate}
\item $d(\C)\leq |\mathrm{Ind}(\C)|\cdot  d(\underline{\C})$. 
\item If $1\leq |\mathrm{Ind}(\C)|\leq r-1$, then $2\leq d(\C)\leq |\mathrm{Ind}(\C)|+1$.
\item The code $\underline{\C}=\{c(x) \pmod{x^n-\lambda}: c(x)\in \C \}$. \end{enumerate}
\end{theorem}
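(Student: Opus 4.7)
The plan is to first dispense with part (3), which identifies $\underline{\C}$ with the image of $\C$ under reduction modulo $x^n-\lambda$, and then to use this viewpoint to build short codewords of $\C$ out of either a minimum-weight codeword of $\underline{\C}$ (for part (1)) or out of the purely ``level'' polynomials $\sum_i a_i x^{in}$ (for part (2)).

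For part (3), I would observe that $(x^n-\lambda)\mid (x^N-1)$, so the reduction map
$$
\pi\colon \gf(q)[x]/\langle x^N-1\rangle \longrightarrow \gf(q)[x]/\langle x^n-\lambda\rangle
$$
is a well-defined ring homomorphism. Hence $\pi(\C)$ is an ideal of $\gf(q)[x]/\langle x^n-\lambda\rangle$, i.e.\ a $\lambda$-constacyclic code. Lifting to $\gf(q)[x]$, the ideal generated by $\{g(x),x^n-\lambda\}$ equals $\langle\gcd(g(x),x^n-\lambda)\rangle=\langle \underline{g}(x)\rangle$, so $\pi(\C)=\langle\underline{g}(x)\rangle=\underline{\C}$.

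For part (1), I would take $\underline{c}(x)\in\underline{\C}$ of weight $w=d(\underline{\C})$ and use the ansatz
$$
c(x)=\underline{c}(x)\sum_{i\in\mathrm{Ind}(\C)} a_i\,x^{in}\in \gf(q)[x]/\langle x^N-1\rangle.
$$
Since $\deg\underline{c}<n$, the shifted blocks $x^{in}\underline{c}(x)$ occupy the disjoint coordinate ranges $[in,in+n-1]$ as $i$ varies over $\{0,\ldots,r-1\}$, so no wrap-around occurs modulo $x^N-1$ and $\wt(c)=w\cdot|\{i:a_i\neq 0\}|\leq w\cdot|\mathrm{Ind}(\C)|$. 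For $c\in\C$, the CRT decomposition $g(x)=\prod_{j\in\mathrm{Ind}(\C)}\underline{g}_j(x)$ reduces the condition $g\mid c$ to $\underline{g}_j\mid c\pmod{x^n-\lambda^j}$ for every $j\in\mathrm{Ind}(\C)$. Reduction modulo $x^n-\lambda^j$ sends $x^{in}$ to $\lambda^{ij}$, giving $c(x)\equiv \underline{c}(x)\bigl(\sum_i a_i\lambda^{ij}\bigr)\pmod{x^n-\lambda^j}$. For $j=1$, divisibility is automatic because $\underline{g}_1=\underline{g}\mid\underline{c}$; for the remaining $|\mathrm{Ind}(\C)|-1$ indices I would impose the single scalar equation $\sum_i a_i\lambda^{ij}=0$. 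This is a homogeneous linear system with $|\mathrm{Ind}(\C)|-1$ equations in $|\mathrm{Ind}(\C)|$ unknowns, so a nontrivial solution exists; disjointness of supports guarantees the resulting $c(x)$ is nonzero, yielding the claimed bound.

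For part (2), the lower bound $d(\C)\geq 2$ is immediate: since $\gcd(g,x^n-\lambda)\neq 1$ forces $g\neq 1$, and since $g\mid x^N-1$ implies $\gcd(g,x)=1$, a weight-one cyclic codeword $ax^k$ with $a\neq 0$ would force $g\mid a$, hence $g=1$, a contradiction. For the upper bound $d(\C)\leq|\mathrm{Ind}(\C)|+1$, I would drop the factor $\underline{c}(x)$ and seek a codeword of the pure form $c(x)=\sum_{i\in I} a_i x^{in}$ with $I\subseteq\{0,1,\ldots,r-1\}$ of size $|\mathrm{Ind}(\C)|+1$ (permissible because $|\mathrm{Ind}(\C)|+1\leq r$). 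Reduction modulo $x^n-\lambda^j$ now turns $c(x)$ into the scalar $\sum_{i\in I} a_i\lambda^{ij}$, and a nonzero scalar cannot be divisible by the non-constant polynomial $\underline{g}_j$. Forcing $\sum_{i\in I} a_i\lambda^{ij}=0$ for each $j\in\mathrm{Ind}(\C)$ produces $|\mathrm{Ind}(\C)|$ equations in $|\mathrm{Ind}(\C)|+1$ unknowns and hence a nontrivial $(a_i)$; the resulting $c\in\C$ is nonzero by the same disjoint-support argument and has weight at most $|I|=|\mathrm{Ind}(\C)|+1$. The main technical obstacle is really only bookkeeping: once one internalises that $\sum_i a_i x^{in}$ reduces modulo each $x^n-\lambda^j$ to the scalar $\sum_i a_i\lambda^{ij}$, both weight bounds collapse to counting equations versus unknowns in a Vandermonde-type linear system, with nondegeneracy of the constructed codeword handled automatically by the disjointness of the shifted supports.
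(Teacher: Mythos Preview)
Your proof is correct. For parts (1) and (2), your linear-system approach works but is slightly more elaborate than the paper's: the paper simply exhibits the explicit codewords
\[
c(x)=\underline{c}(x)\prod_{i\in\mathrm{Ind}(\C)\setminus\{1\}}(x^n-\lambda^i)
\quad\text{and}\quad
c(x)=\prod_{i\in\mathrm{Ind}(\C)}(x^n-\lambda^i),
\]
respectively, each of which is manifestly divisible by $g(x)=\prod_{j\in\mathrm{Ind}(\C)}\underline{g}_j(x)$ and, being a polynomial in $x^n$ times $\underline{c}$ (or $1$), has weight at most $|\mathrm{Ind}(\C)|\cdot d(\underline{\C})$ (resp.\ $|\mathrm{Ind}(\C)|+1$). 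These explicit products are particular solutions of your Vandermonde-type systems, so the two arguments are really the same idea; the paper's version just avoids the existence step. Your treatment of part (3) via the surjective ring homomorphism $\pi$ and the identity $\langle g,x^n-\lambda\rangle=\langle\gcd(g,x^n-\lambda)\rangle$ is a bit cleaner than the paper's, which verifies both inclusions by hand using a B\'ezout identity for $(x^n-\lambda)/\underline{g}$ and $g/\underline{g}$.
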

\begin{proof} 1. For any $\underline{c}(x)\in \underline{\C}$, we have 
$$c(x):=\underline{c}(x)\prod_{i\in \mathrm{Ind}(\C)\backslash \{1\}}(x^n-\lambda^i)\in \C.$$
Since $\prod_{i\in  \mathrm{Ind}(\C)\backslash \{1\}}(x^n-\lambda^i)$ can be expanded as a sum of the form $\sum a_ix^{ni}$, we have 
$$\wt(c(x))=\wt(\underline{c}(x))\cdot \wt\left(\prod_{i\in  \mathrm{Ind}(\C)\backslash \{1\}}(x^n-\lambda^i)\right).$$ 
Consequently,
$$d(\C)\leq \min\left\{ \wt\left(\underline{c}(x)\prod_{i\in  \mathrm{Ind}(\C)\backslash \{1\}}(x^n-\lambda^i)\right):  \0 \neq \underline{c}(x)\in \underline{\C}\right\}\leq |\mathrm{Ind}(\C)|\cdot d(\underline{\C}).$$

2. If $1\leq |\mathrm{Ind}(\C)|\leq r-1$, then $\0 \neq \prod_{i\in  \mathrm{Ind}(\C)}(x^n-\lambda^{i})\in \C$. Note that $$\wt\left(\prod_{i\in  \mathrm{Ind}(\C)}(x^n-\lambda^{i})\right)\leq |\mathrm{Ind}(\C)|+1,$$ we have $d(\C)\leq | \mathrm{Ind}(\C)|+1$.

3. Let $\mathrm{Res}(\C)=\{c(x) \pmod{x^n-\lambda}: \ c(x)\in \C  \}$. Let $c(x)\in \C$, then there is $\underline{c}(x) \in \gf(q)[x]/\langle x^n-\lambda \rangle$ such that $\underline{c}(x)\equiv c(x)\pmod{x^n-\lambda}$. Clearly, $$\gcd(c(x), x^n-\lambda)=\gcd(\underline{c}(x), x^n-\lambda).$$ 
  Then $\underline{g}(x)$ divides $\underline{c}(x)$. It follows that $\underline{c}(x)\in \underline{\C}$. Consequently, $\mathrm{Res}(\C)\subseteq \underline{\C}$.
  
  Let $\underline{c}(x)\in \underline{\C}$. It is easily verified that $\gcd\left(\frac{x^n-\lambda}{\underline{g}(x)}, \frac{g(x)}{\underline{g}(x)}\right)=1$. Then there are $a_1(x)$ and $a_2(x)$ such that $ a_1(x)\frac{x^n-\lambda}{\underline{g}(x)}+a_2(x)\frac{g(x)}{\underline{g}(x)}=1$. It follows that
  $$ a_2(x)\frac{g(x)}{\underline{g}(x)} \underline{c}(x)=\underline{c}(x)-a_1(x)\frac{x^n-\lambda}{\underline{g}(x)}\underline{c}(x).$$
  Note that $\underline{g}(x)\mid \underline{c}(x)$, we have $g(x)\mid \frac{g(x)}{\underline{g}(x)} \underline{c}(x)$ and $(x^n-\lambda) \mid \frac{x^n-\lambda}{\underline{g}(x)}\underline{c}(x)$. Therefore, 
  $$ c(x):=a_2(x)\frac{g(x)}{\underline{g}(x)} \underline{c}(x)\in \C$$
  and $c(x)\equiv \underline{c}(x) \pmod{x^n-\lambda}$. Consequently, $\underline{\C} \subseteq \mathrm{Res}(\C)$. The desired conclusion follows. 
\end{proof}

The third conclusion of Theorem \ref{pro:13} shows that there is no clear connection between $d(\underline{\C})$ and $d(\C)$ 
in general.

\begin{example}
	Let $(q, m, r)=(3, 4, 2)$. Then $n=\frac{q^m-1}2=40$ and $N=80$. Let $\beta$ be a primitive element of $\gf(3^4)$ with $\beta^4-\beta^3-1=0$.
	\begin{enumerate}
		\item Let $\C$ be the cyclic code of length $N$ over $\gf(q)$ with generator polynomial $g(x)=\m_{\beta}(x)$, then $\underline{\C}$ is the negacyclic code of length $n$ over $\gf(q)$ with generator polynomial $\underline{g}(x)=\m_{\beta}(x)$. Clearly, $\mathrm{Ind}(\C)=\{ 1\}$.  Then $\C$ has parameters $[80, 76, 2]$ and $\underline{\C}$ has parameters $[40,36 ,3]$. It is clear that $d(\C)< d(\underline{\C})$.
		\item Let $\C$ be the cyclic code of length $N$ over $\gf(q)$ with generator polynomial $g(x)=(x-1)\m_{\beta}(x)$, then $\underline{\C}$ is the negacyclic code of length $n$ over $\gf(q)$ with generator polynomial $\underline{g}(x)=\m_{\beta}(x)$. Clearly, $\mathrm{Ind}(\C)=\{0, 1\}$. Then $\C$ has parameters $[80, 75, 3]$ and $\underline{\C}$ has parameters $[40,36 ,3]$. It is clear that $d(\C)=d(\underline{\C})$.
		\item Let $\C$ be the cyclic code of length $N$ over $\gf(q)$ with generator polynomial $g(x)=(x^n-1)\m_{\beta}(x)$, then $\underline{\C}$ is the negacyclic code of length $n$ over $\gf(q)$ with generator polynomial $\underline{g}(x)=\m_{\beta}(x)$. Clearly, $\mathrm{Ind}(\C)=\{0, 1\}$. Then $\C$ has parameters $[80, 36, 6]$ and $\underline{\C}$ has parameters $[40,36 ,3]$. It is clear that $d(\C)=2d(\underline{\C})$.
	\end{enumerate}
\end{example}

Later in this paper, we will use this general construction to obtain two classes of $\lambda$-constacyclic codes of length $\frac{q^m-1}{r}$ over $\gf(q)$, where $r>1$ and $r\mid (q-1)$. 


\section{The first class of constacyclic codes}\label{sec-1stclass} 

We follow the previous notation. Throughout this section, let $r>1$ and $r\mid (q-1)$. Let $n=\frac{q^m-1}{r}$, where $m$ is an integer with $m \geq 2$. Define $N=rn=q^m-1$. Then it follows from Lemma \ref{lem151} that 
$ 
\ord_n(q)=\ord_{N}(q)=m. 
$ 
Let $\Gamma_{(q,N)}$ be the set of $q$-cyclotomic coset leaders modulo $N$ and let 
$$
\Gamma_{(q,N,r)}^{(1)}=\{i: i \in \Gamma_{(q,N)}, \, i \equiv 1 \pmod{r} \}.
$$ 
Let $\beta$ be a primitive element of $\gf(q^m)$ and let $\lambda=\beta^{(q^m-1)/r}$. Then $\lambda\in  \gf(q)^*$ with $\ord(\lambda)=r$.  
Let $\ell$ be a positive integer with $1 \leq \ell \leq m$. Define 
\begin{eqnarray*}
g'_{(q, m, r,\ell)}(x) = \prod_{i \in \Gamma_{(q, N,r)}^{(1)}  \atop 1 \leq \wt(i) \le \ell} \m_{\beta^i}(x). 
\end{eqnarray*} 
Let 
\begin{eqnarray*}
D'_{(q,m, r,\ell)}=\bigcup_{i \in \Gamma_{(q, N,r)}^{(1)}  \atop 1 \leq \wt(i) \le \ell} C_i^{(q, N)}. 
\end{eqnarray*} 
Then $\{\beta^i: i \in D'_{(q, m, r, \ell)}\}$ is the set of all zeros of $g'_{(q, m, r,\ell)}(x)$. It is easily verified that $D'_{(q, m,r,\ell)}$ is invariant under the permutation $q y \mod N$ of $\Z_N$. Consequently, $g'_{(q, m, r,\ell)}(x)$ is over $\gf(q)$ and is a divisor of $x^n-\lambda$. Let $\C'(q, m, r,\ell)$ denote the $\lambda$-constacyclic code of length $n$ over $\gf(q)$ with generator polynomial $g'_{(q, m, r,\ell)}(x)$. By definition, $g'_{(q, m, r, m)}(x)=x^n-\lambda$ and the code $\C'(q, m, r, m)$ is the zero code and  $\C'(q, m, r, m)^\perp$ is the $[n, n, 1]$ code $\gf(q)^n$ over $\gf(q)$. Hence, we will consider the code $\C'(q, m, r,\ell)$ only for $1 \leq \ell \leq m-1$, and call $D'_{(q, m, r,\ell)}$ the \emph{defining set} of $\C'(q, m, r, \ell)$ with respect to the primitive element $\beta$ of $\gf(q^m)$.  
 
To  settle the dimension of this code $\C'(q, m, r, \ell)$, we need the following lemma. 

\begin{lemma}\label{lem-april91} 
Let $t$ be a positive integer and let $q$ be a prime power. Then the number of solutions $(x_1,x_2, \ldots, x_t)$ with 
$1 \leq x_i \leq q-1$ to the equation $x_1+x_2+\cdots+x_t \equiv 1 \pmod{r}$ is equal to $\frac{(q-1)^{t}}{r}$.  
\end{lemma}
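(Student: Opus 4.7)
The plan is to prove this by a roots-of-unity filter, exploiting the crucial hypothesis $r \mid (q-1)$, which ensures that the set $\{1,2,\ldots,q-1\}$ is equidistributed modulo $r$ (each residue class contains exactly $(q-1)/r$ representatives).

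First, I would set up the filter. Let $\omega$ be a primitive complex $r$-th root of unity. Then the indicator of the congruence $k \equiv 1 \pmod{r}$ can be written as $\frac{1}{r}\sum_{j=0}^{r-1}\omega^{j(k-1)}$, so the count $N$ we seek equals
\[
N \;=\; \frac{1}{r}\sum_{j=0}^{r-1}\omega^{-j}\left(\sum_{x=1}^{q-1}\omega^{jx}\right)^{\!t}.
\]
This reduces the problem to evaluating the inner character sum $S_j := \sum_{x=1}^{q-1}\omega^{jx}$ for each $j \in \{0,1,\ldots,r-1\}$.

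Next I would evaluate $S_j$ using the equidistribution observation. For $j=0$, clearly $S_0 = q-1$. For $1 \leq j \leq r-1$, since $r \mid (q-1)$, writing $(q-1)/r = M$ and grouping the values $x \in \{1,\ldots,q-1\}$ by residue class modulo $r$, each class $s \in \{0,1,\ldots,r-1\}$ contributes exactly $M$ terms, all equal to $\omega^{js}$. Hence
\[
S_j \;=\; M\sum_{s=0}^{r-1}\omega^{js} \;=\; 0,
\]
because $\omega^j \neq 1$ for $1 \leq j \leq r-1$ and the geometric sum of all $r$-th roots of unity vanishes.

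Plugging back, only the $j=0$ term survives, yielding $N = (q-1)^t/r$, as claimed. I anticipate no significant obstacle here: the only subtlety is making sure one uses $r \mid (q-1)$ in exactly the right place (the equidistribution step), since without this hypothesis the inner sum $S_j$ would not vanish and the clean answer would fail. A purely combinatorial alternative, in case one wishes to avoid complex characters, is induction on $t$: the base case $t=1$ is immediate from equidistribution, and the inductive step follows from $N^{(t)}_1 = \sum_{a=0}^{r-1}N^{(t-1)}_{1-a}\cdot\frac{q-1}{r}$ together with the stronger inductive hypothesis that $N^{(t-1)}_k = (q-1)^{t-1}/r$ for every residue $k$ modulo $r$.
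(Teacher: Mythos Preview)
Your proof is correct, but it takes a genuinely different route from the paper's. The paper's argument is a one-line direct count: fix $(x_1,\ldots,x_{t-1})$ arbitrarily in $\{1,\ldots,q-1\}^{t-1}$ (there are $(q-1)^{t-1}$ choices), set $a=x_1+\cdots+x_{t-1}$, and observe that the remaining condition $x_t\equiv 1-a\pmod{r}$ has exactly $(q-1)/r$ solutions in $\{1,\ldots,q-1\}$ because $r\mid(q-1)$. Multiplying gives $(q-1)^t/r$. No induction, no characters.

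Your roots-of-unity filter is perfectly valid and uses the divisibility hypothesis $r\mid(q-1)$ in exactly the right place (the vanishing of $S_j$ for $j\neq 0$). It is, however, more machinery than the problem needs. The paper's approach is strictly simpler, and in fact your proposed ``combinatorial alternative'' via induction is already closer to it---but even induction is unnecessary, since conditioning on the first $t-1$ coordinates handles all $t$ at once. What your character-sum setup buys is a framework that would adapt cleanly to situations where the range of the $x_i$ is \emph{not} equidistributed modulo $r$ (one would then get a non-trivial main term plus correction terms from the $S_j$), whereas the paper's argument relies entirely on that equidistribution and would not extend.
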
 

\begin{proof}
For any $(x_1,x_2,\ldots,x_{t-1})$ with $1\leq x_i\leq q-1$, let $a=x_1+x_2+\cdots+x_{t-1}$, then the equation $x_1+x_2+\cdots+x_t \equiv 1 \pmod{r}$ is equivalent to $x_t\equiv 1-a\pmod{r}$. For any $a$, the number of solutions $x_t$ with $1\leq x_t\leq q-1$ to the equation $x_t \equiv 1-a \pmod{r}$ is equal to $\frac{q-1}r$. The desired conclusion follows. 
\end{proof}

\begin{theorem}\label{thm-april92}
Let $1 \leq \ell \leq m-1$. Then 
$$
\dim(\C'(q,m, r,\ell))=\frac{q^m - \sum_{i=0}^\ell \binom{m}{i} (q-1)^i}{r} 
$$ 
and 
\begin{eqnarray}\label{eqn-distbound2}
 d(\C'(q,m, r, \ell)) \geq    \left\lfloor \frac{q^{\ell +1} -1 - 2(q-1) }{r  (q-1)} \right\rfloor +2.
\end{eqnarray}    
\end{theorem}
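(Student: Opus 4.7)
The plan is to handle the dimension and distance claims separately: for the dimension I will explicitly count the defining set $D'_{(q,m,r,\ell)}$, and for the distance I will apply the BCH bound of Lemma~\ref{lem-BCHbound} with a well-chosen arithmetic progression of roots.

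For the dimension, the starting observation is that $r\mid q-1$ forces $q\equiv 1\pmod{r}$, so every element of a $q$-cyclotomic coset $C_j^{(q,N)}$ is congruent to $j\pmod{r}$; combined with the fact that $\wt(\cdot)$ is constant on each coset, this gives
\[
D'_{(q,m,r,\ell)}=\{\,j:\,0\le j\le N-1,\ 1\le\wt(j)\le \ell,\ j\equiv 1\pmod{r}\,\}.
\]
For fixed $t=\wt(j)$, I first pick the $t$ positions carrying nonzero $q$-adic digits ($\binom{m}{t}$ choices), and then count the digit tuples in $\{1,\ldots,q-1\}^{t}$ whose sum is $\equiv 1\pmod{r}$; the latter count is $(q-1)^{t}/r$ by Lemma~\ref{lem-april91}. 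Summing from $t=1$ to $\ell$ and using $\sum_{t=0}^{m}\binom{m}{t}(q-1)^{t}=q^{m}$, the identity $\dim(\C'(q,m,r,\ell))=n-|D'_{(q,m,r,\ell)}|$ produces the claimed dimension.

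For the distance, I will apply Lemma~\ref{lem-BCHbound} with $e=1$ and $h=0$; the gcd condition is trivial. The key claim is that whenever $1+rk<\frac{q^{\ell+1}-1}{q-1}$, the integer $1+rk$ has at most $\ell$ nonzero $q$-adic digits and therefore lies in $D'_{(q,m,r,\ell)}$. This uses the elementary fact that $1+q+q^{2}+\cdots+q^{\ell}=\frac{q^{\ell+1}-1}{q-1}$ is the smallest nonnegative integer of $q$-adic Hamming weight $\ell+1$, so every smaller nonnegative integer has weight at most $\ell$. Solving $1+rk<\frac{q^{\ell+1}-1}{q-1}$ for $k\ge 0$ produces $\bigl\lceil\frac{q^{\ell+1}-q}{r(q-1)}\bigr\rceil$ consecutive valid values of $k$ starting at $k=0$, and the BCH bound then delivers $d(\C'(q,m,r,\ell))\ge \lceil\frac{q^{\ell+1}-q}{r(q-1)}\rceil+1$.

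The remaining work is to identify this ceiling expression with the floor expression in the statement. Setting $u=\frac{q^{\ell+1}-q}{r(q-1)}=\frac{qS}{r}$ with $S=1+q+\cdots+q^{\ell-1}$ and using $q\equiv 1\pmod{r}$, one finds $qS\equiv \ell\pmod{r}$, so the fractional part of $u$ equals $(\ell\bmod r)/r$. A case split on whether $r\mid\ell$ then verifies
\[
\left\lceil\frac{q^{\ell+1}-q}{r(q-1)}\right\rceil=\left\lfloor\frac{q^{\ell+1}-1-2(q-1)}{r(q-1)}\right\rfloor+1,
\]
converting the bound to the form in the statement. I expect this final floor/ceiling bookkeeping to be the only nontrivial obstacle; the dimension count and the choice of consecutive roots for the BCH bound are each straightforward given the machinery already developed in Section~\ref{sec-prel}.
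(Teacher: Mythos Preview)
Your proposal is correct and follows essentially the same route as the paper: both count the defining set via Lemma~\ref{lem-april91} to get the dimension, and both observe that every integer below $\frac{q^{\ell+1}-1}{q-1}$ has Hamming weight at most $\ell$ to locate a run of roots $\beta^{1+rj}$ for the BCH bound. The only cosmetic difference is that the paper writes the number of roots directly as $\lfloor\frac{q^{\ell+1}-1-2(q-1)}{r(q-1)}\rfloor+1$, whereas you obtain $\lceil\frac{q^{\ell+1}-q}{r(q-1)}\rceil$ and then convert; since $A:=\frac{q^{\ell+1}-q}{q-1}$ is an integer, this is just the standard identity $\lceil A/r\rceil=\lfloor (A-1)/r\rfloor+1$, so your case split on $r\mid\ell$ is unnecessary.
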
 

\begin{proof}
Let $i$ be an integer with $1 \leq i \leq q^m-2$. Let the $q$-adic expression of $i$ be 
$$ 
i=\sum_{j=0}^{m-1} i_j q^j, \ \ 0 \leq i_j \leq q-1. 
$$ 
Then $i \equiv \sum_{j=0}^{m-1} i_j  \pmod{r} $. It then follows from Lemma \ref{lem-april91} that the number of $i$ with $1 \leq i \leq q^m-2$ such that $\wt(i)=t$ and $i \equiv 1 \pmod{r}$ is $\binom{m}{t}\frac{(q-1)^{t}}{r}$. Consequently, 
$$
\deg(g'_{(q, m, r,\ell)}(x))=\sum_{i=1}^\ell \binom{m}{i} \frac{(q-1)^{i}}{r}. 
$$
Thus, 
\begin{align*}
\dim(\C'(q, m, r, \ell))&= \frac{q^m-1}{r}-\sum_{i=1}^\ell \binom{m}{i} \frac{(q-1)^i}{r} \\
&=\frac{q^m - \sum_{i=0}^\ell \binom{m}{i} (q-1)^i}{r}. 	
\end{align*}

We now prove the lower bound on the minimum distance of the code $\C'(q, m, r,\ell)$. It is straightforward to verify that every integer $a$ with $1 \leq a \leq \frac{q^{\ell+1}-1}{q-1}-1$ has Hamming weight $\wt(a) \leq \ell$. It then follows from the definition of the code $\C'(q, m, r, \ell)$ that $\beta^i$ is a zero of $\C'(q, m, r, \ell)$ for each $i$ in the set 
$$  
\left\{1+rj: 0 \leq j \leq \left\lfloor \frac{q^{\ell +1} -1 - 2(q-1) }{r(q-1)} \right\rfloor    \right\}. 
$$ 
The desired lower bound then follows from the BCH bound for constacyclic codes (see Lemma \ref{lem-BCHbound}). 
\end{proof} 

Next we study the dual code of the constacyclic code $\C'(q, m, r, \ell)$. We have the following theorem.
 

\begin{theorem}\label{thm-april101}
Let $q\geq 3$, $r>1$ and $r\mid (q-1)$. Let $1 \leq \ell \leq m-1$. Then 
$$
\dim(\C'(q, m, r, \ell)^\perp)= \frac{ \sum_{i=1}^\ell \binom{m}{i}(q-1)^i }{r}
$$ 
and 
\begin{eqnarray}\label{eqn-ddistbound2}
d(\C'(q, m, r, \ell)^\perp) \geq q^{m-\ell}.  
\end{eqnarray} 
\end{theorem}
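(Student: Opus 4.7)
The dimension formula is immediate from Theorem \ref{thm-april92} together with the duality relation $\dim\C'(q,m,r,\ell)+\dim\C'(q,m,r,\ell)^\perp = n = (q^m-1)/r$. The substance of the theorem lies in the distance bound, and my plan is to \emph{inflate} $\C'(q,m,r,\ell)^\perp$ to a cyclic code of length $N=rn$ over $\gf(q)$ and apply the ordinary BCH bound for cyclic codes there. Define the $\gf(q)$-linear map $\iota:\gf(q)^n\to\gf(q)^N$ by $\iota(c)_{kn+i}=\lambda^k c_i$ for $0\le k<r,\ 0\le i<n$. Since $\lambda^{r-1}=\lambda^{-1}$, a direct verification shows that $\iota$ converts the $\lambda^{-1}$-constacyclic shift on $\gf(q)^n$ into the ordinary cyclic shift on $\gf(q)^N$, so $\iota(\C'(q,m,r,\ell)^\perp)$ is a cyclic code of length $N$ over $\gf(q)$. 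Because each $\lambda^k$ is nonzero, $\wt(\iota(c))=r\,\wt(c)$, hence $d(\iota(\C'(q,m,r,\ell)^\perp))=r\cdot d(\C'(q,m,r,\ell)^\perp)$; it therefore suffices to prove that the inflated cyclic code has minimum distance at least $rq^{m-\ell}$.

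To identify its defining set I would translate $\iota$ to polynomial language: $\iota(c)(y)=c(y)P(y)$ where
\[
P(y)=\sum_{k=0}^{r-1}\lambda^k y^{kn}=\frac{y^N-1}{\lambda y^n-1}.
\]
A short computation gives $P(\beta^j)=\sum_{k=0}^{r-1}\lambda^{k(j+1)}$, which equals $0$ when $j\not\equiv -1\pmod r$ and equals $r\neq 0$ otherwise. Let $\widehat h(y)$ be the generator of $\C'(q,m,r,\ell)^\perp$ provided by Lemma \ref{lem-sdjoin2}; its zeros are $\beta^{-i}=\beta^{N-i}$ for $i\equiv 1\pmod r$ and $\wt(i)>\ell$, i.e., $\beta^j$ for $j\equiv -1\pmod r$ and $\wt(N-j)>\ell$. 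Since the zeros of $\widehat h$ and $P$ lie in complementary residue classes modulo $r$, they are disjoint, so the generator of the cyclic code $\iota(\C'(q,m,r,\ell)^\perp)$ is $\widehat h(y)P(y)$, and its defining set with respect to the primitive $N$-th root of unity $\beta$ is
\[
\{j\in\Z_N: j\not\equiv -1\pmod r\}\ \cup\ \{j\in\Z_N: j\equiv -1\pmod r,\ \wt(N-j)>\ell\}.
\]

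The heart of the proof is the combinatorial claim that the $rq^{m-\ell}-1$ consecutive integers $0,1,\ldots,rq^{m-\ell}-2$ all lie in this defining set; the BCH bound for cyclic codes then yields $d(\iota(\C'(q,m,r,\ell)^\perp))\ge rq^{m-\ell}$, which after dividing by $r$ delivers $d(\C'(q,m,r,\ell)^\perp)\ge q^{m-\ell}$. Equivalently, I must prove that the smallest $j\ge 1$ satisfying both $j\equiv -1\pmod r$ and $\wt(N-j)\le \ell$ equals $rq^{m-\ell}-1$. Because $N$ has every $q$-adic digit equal to $q-1$ and no borrow occurs when subtracting $j<N$, the digits of $N-j$ are simply $q-1-j_k$, so $\wt(N-j)\le\ell$ is equivalent to at least $m-\ell$ digits of $j$ equalling $q-1$; moreover, since $q\equiv 1\pmod r$, the residue condition becomes $\sum_k j_k\equiv r-1\pmod r$. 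A direct $q$-adic minimization—placing the $m-\ell$ forced $q-1$ digits at the lowest positions $0,\ldots,m-\ell-1$ and then using digit $r-1$ (valid because $r-1\le q-2$) at position $m-\ell$ with all higher digits zero—produces $j=(q^{m-\ell}-1)+(r-1)q^{m-\ell}=rq^{m-\ell}-1$, and a brief case analysis shows that any other configuration (enlarging the set of forced $q-1$ digits, shifting one upward, or placing the residue-fixing digit at a higher position) is strictly larger by a leading-term comparison. The main obstacle is precisely this minimization; once verified, the inflation framework and the BCH application are mechanical.
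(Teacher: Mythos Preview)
Your argument is correct, but it takes a longer route than the paper's. The paper applies the BCH bound for constacyclic codes (Lemma~\ref{lem-BCHbound}) directly to the $\lambda^{-1}$-constacyclic code $\C'(q,m,r,\ell)^\perp$ with respect to the primitive element $\beta^{-1}$: it exhibits the set
\[
B=\left\{1+rj:\ \tfrac{q^m-1}{r}-q^{m-\ell}+1\le j\le \tfrac{q^m-1}{r}-1\right\},
\]
checks that every $i\in B$ satisfies $i\equiv 1\pmod r$ and $\wt(i)\ge \ell+1$, and concludes $d\ge|B|+1=q^{m-\ell}$ from Lemma~\ref{lem-BCHbound}. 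Your inflation $\iota$ to a cyclic code of length $N$ and the subsequent use of the ordinary cyclic BCH bound is a genuine alternative: it rederives the constacyclic BCH bound in this instance rather than quoting it. The digit-minimisation you perform (finding the least $j$ with $j\equiv -1\pmod r$ and at least $m-\ell$ digits equal to $q-1$) is the mirror image of the paper's verification that each $i\in B$ has $\wt(i)\ge\ell+1$; under the substitution $j=N-i$ the two computations are equivalent. What the paper's approach buys is brevity, since Lemma~\ref{lem-BCHbound} is already on hand; what your approach buys is self-containment and a transparent reduction to the classical cyclic case, which could be useful if one wished to avoid invoking the constacyclic BCH bound as a black box.
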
  

\begin{proof}
The desired dimension of  $\C'(q,m, r, \ell)^\perp$ follows from the dimension of  $\C'(q, m, r, \ell)$. 
Note that $(\beta^{-1})^n = \lambda^{-1}$ and $\beta^{-1}$ is a primitive element of $\gf(q^m)$. By Lemma \ref{lem-sdjoin2}, 
the dual code $\C'(q, m, r, \ell)^\perp$ is a $\lambda^{-1}$-constacyclic code of length $n=\frac{q^m-1}{r}$ over $\gf(q)$ with generator polynomial 
$$
\prod_{i \in \Gamma_{(q, N,r)}^{(1)}  \atop \wt(i) > \ell} \m_{(\beta^{-1})^i}(x). 
$$ 
Let 
$$
D(q, m, r, \ell)=\{i \in \Z_N: \wt(i) \geq \ell +1, \ i \equiv 1 \pmod{r}\},
$$
then
$$
\bigcup_{i \in \Gamma_{(q, N,r)}^{(1)} \atop \wt(i) > \ell} C_i^{(q,N)}=D(q, m, r, \ell). 
$$
Let 
$$
B:=\left\{ 1+r j:  \frac{q^{m}-1}{r}-q^{m-\ell}+1 \leq j \leq \frac{q^m-1}{r}-1  \right\}. 
$$
It is easily checked that $\wt(i) \geq \ell+1$ and $i \equiv 1 \pmod{r}$ for all $i \in B$. Hence, $B$ is a subset of $D(q, m, r, \ell)$. Consequently, $(\beta^{-1})^i$ is a zero of $\C'(q, m, r, \ell)^\perp$ for each $i \in B$. Note that 
$$
\left(\frac{q^m-1}{r}-1\right) -\left(\frac{q^m-1}r-q^{m-\ell}+1 \right)+1=q^{m-\ell}-1. 
$$ 
The desired conclusion then follows from the BCH bound for constacyclic codes (see Lemma \ref{lem-BCHbound}). 
\end{proof}

An interesting fact about the family of constacyclic codes $\C'(q, m, r, \ell)$ is the following.

\begin{corollary}\label{cor-1}
Let $m\geq 2$ and $r=q-1$. Then the constacyclic code $\C'(q, m, r, 1)$ has parameters 
	$$\left[\frac{q^m-1}{q-1}, \frac{q^m-1}{q-1}-m, 3 \right] $$
	and is monomially-equivalent to the Hamming code.  In addition, $\C'(q, m, r, 1)^\perp$ has parameters 
	$$\left[\frac{q^m-1}{q-1}, m, q^{m-1} \right] $$ 
	and is monomially-equivalent to the Simplex code.
\end{corollary}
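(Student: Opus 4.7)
The plan is to reduce everything to a statement about the dual code, which by the trace representation turns out to be constant-weight and essentially the Simplex code. First, with $r=q-1$, the constraint $\wt(i)=1$ together with $i\equiv 1\pmod{q-1}$ forces $i=q^j$ for some $0\leq j\leq m-1$: if $i$ has a single nonzero $q$-adic digit $c$ at position $j$, then $i\equiv c\pmod{q-1}$ forces $c=1$. All such $i$ lie in a single $q$-cyclotomic coset $C_1^{(q,N)}=\{1,q,\ldots,q^{m-1}\}$, whose coset leader is $1$. Hence $g'_{(q,m,q-1,1)}(x)=\m_{\beta}(x)$. Theorem \ref{thm-april92} then gives $\dim(\C'(q,m,q-1,1))=(q^m-1)/(q-1)-m$, while Theorem \ref{thm-april101} gives $\dim(\C'(q,m,q-1,1)^\perp)=m$ and $d(\C'(q,m,q-1,1)^\perp)\geq q^{m-1}$.

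Next I would match this lower bound by computing weights directly. By Lemma \ref{lem-01} the codewords of $\C'(q,m,q-1,1)^\perp$ take the form $\mathbf{c}_a=(\tr_{q^m/q}(a\beta^{t}))_{t=0}^{n-1}$ for $a\in\gf(q^m)$. Since $\beta$ has order $N=q^m-1$ and $\beta^n=\lambda$ generates $\gf(q)^*$, the set $\{\beta^t:0\leq t\leq n-1\}$ is a complete transversal for $\gf(q^m)^*/\gf(q)^*$, and multiplication by any nonzero $a$ again yields such a transversal. Because $\ker\tr_{q^m/q}$ is an $(m-1)$-dimensional $\gf(q)$-subspace stable under scaling by $\gf(q)^*$, the set $(\ker\tr_{q^m/q})\setminus\{0\}$ meets exactly $(q^{m-1}-1)/(q-1)$ of these cosets. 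Hence each nonzero $\mathbf{c}_a$ vanishes in exactly $(q^{m-1}-1)/(q-1)$ positions, so $\wt(\mathbf{c}_a)=n-(q^{m-1}-1)/(q-1)=q^{m-1}$. This shows that $\C'(q,m,q-1,1)^\perp$ is a constant-weight $[n,m,q^{m-1}]$ code, meeting the BCH bound.

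Finally, to identify the dual with the Simplex code, I would fix any $\gf(q)$-basis $\{e_1,\ldots,e_m\}$ of $\gf(q^m)$; the rows $\mathbf{c}_{e_1},\ldots,\mathbf{c}_{e_m}$ form a generator matrix whose $t$-th column, read via the dual basis, is the coordinate vector of $\beta^{t}$ in $\gf(q)^m$. Since the elements $\beta^t$ for $0\leq t\leq n-1$ represent the $n$ distinct projective points of $\PG(m-1,\gf(q))$, this generator matrix contains, up to column permutation, exactly one nonzero representative per $1$-dimensional subspace of $\gf(q)^m$ — which is the standard generator matrix of the Simplex code. Dualizing yields the monomial equivalence of $\C'(q,m,q-1,1)$ with the Hamming code, together with its parameters $[n,n-m,3]$. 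The main subtlety is the last step — passing from a constant-weight code of the right parameters to an explicit monomial equivalence with Simplex — which is handled precisely by the column-per-projective-point observation above.
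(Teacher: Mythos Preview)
Your proof is correct, but it takes a different route from the paper's. The paper argues entirely on the primal side: it reads off the dimension from Theorem~\ref{thm-april92}, uses the sphere-packing bound (Lemma~\ref{lem-SPB1}) to get $d\leq 4$, rules out $d=4$ via Lemma~\ref{lem-SPB2}, and combines this with the BCH lower bound $d\geq 3$ from Theorem~\ref{thm-april92}. Having matched the Hamming parameters, it then invokes the classical uniqueness (up to monomial equivalence) of $\bigl[\tfrac{q^m-1}{q-1},\,\tfrac{q^m-1}{q-1}-m,\,3\bigr]$ codes. You instead work on the dual: after identifying $g'_{(q,m,q-1,1)}=\m_\beta$, you use the trace representation to show $\C'(q,m,q-1,1)^\perp$ is constant-weight $q^{m-1}$, and then observe that its generator matrix has one column per projective point of $\PG(m-1,\gf(q))$, which is exactly the Simplex generator matrix. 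Your argument yields the full weight distribution of the dual and an explicit identification with the Simplex code, without appealing to the uniqueness of the Hamming code; the paper's argument is shorter but leans on that uniqueness as a black box.
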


\begin{proof}
 The desired dimension of the code $\C'(q, m, q-1, 1)$ follows from Theorem \ref{thm-april92}. It follows from Lemma \ref{lem-SPB1} that $d(\C'(q, m, q-1, 1)) \leq 4$. It then follows from Lemma \ref{lem-SPB2} that $$d(\C'(q,m,q-1,1)) \neq 4.$$ Again by Theorem \ref{thm-april92}, $d(\C'(q, m, q-1, 1)) \geq 3$. Consequently, $d(\C'(q, m, q-1, 1))=3$. Hence, $\C'(q, m, q-1,1)$ has the same parameters as the Hamming code of length $\frac{q^m-1}{q-1}$ over $\gf(q)$. It is well known that all linear codes over $\gf(q)$ with  parameters $$\left[\frac{q^m-1}{q-1}, \frac{q^m-1}{q-1}-m, 3\right]$$ are unique up to monomial equivalence. Therefore, $\C'(q, m, q-1, 1)$ is monomially-equivalent to the Hamming code and $\C'(q, m, q-1, 1)^\perp$ is monomially-equivalent to the Simplex code.  
\end{proof}

\begin{corollary}\label{cor-2} 
Let $m\geq 2$. Let $q$ be an odd prime power and $r=\frac{q-1}2$. Then the constacyclic code $\C'(q, m, r, 1)$ has parameters 
	$$\left[2\left(\frac{q^m-1}{q-1}\right), 2\left(\frac{q^m-1}{q-1}\right)-2m, 4 \right] $$
	and is distance-optimal. The dual code $\C'(q, m, r, 1)^{\bot}$ has the following results:
	\begin{itemize}
	\item When $m\geq 3$ is odd and $q$ is an odd prime, $\C'(q, m, r, 1)^{\bot}$ has parameters $$\left[2\left(\frac{q^m-1}{q-1}\right), 2m, 2q^{m-1}-q^{\frac{m-1}2} \right],$$
	and the weight distribution of $\C'(q, m, r, 1)^{\bot}$ is given in Table \ref{tab-ding1}.
	\item When $m\geq 2$ is even and $q$ is an odd prime, $\C'(q, m, r, 1)^{\bot}$ has parameters $$\left[2\left(\frac{q^m-1}{q-1}\right), 2m, 2q^{m-1}-(q-1)q^{\frac{m-2}2} \right],$$
	and the weight distribution of $\C'(q, m, r, 1)^{\bot}$ is given in Table \ref{tab-ding2}. 	
	\end{itemize} 
\end{corollary}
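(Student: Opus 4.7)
The plan divides into two parts: first the parameters of $\C'(q,m,r,1)$, and then those of the dual.

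For $\C'(q,m,r,1)$: the dimension is immediate from Theorem~\ref{thm-april92} specialised to $\ell=1, r=(q-1)/2$. For the minimum distance, the BCH-type lower bound in the same theorem reduces to
\[
d \geq \left\lfloor \frac{(q-1)^2}{r(q-1)} \right\rfloor + 2 = \lfloor 2 \rfloor + 2 = 4.
\]
For the matching upper bound I would invoke the Sphere Packing Bound (Lemma~\ref{lem-SPB1}) to rule out $d \geq 5$. A direct expansion with $n = 2(q^m-1)/(q-1)$ gives
\[
1 + n(q-1) + \binom{n}{2}(q-1)^2 = 2q^{2m} - q^{m+1} - q^m + q,
\]
which strictly exceeds $q^{2m}$ for all $q \geq 3$, $m \geq 2$ since $q^{2m} > q^{m+1} + q^m - q$. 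Hence $d = 4$, and because this contradiction applies to \emph{every} $[n, n-2m]$-code over $\gf(q)$, the code is distance-optimal at the same time.

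For the dual, the dimension $2m$ is immediate from Theorem~\ref{thm-april101}. The key ingredient for the minimum distance (and the full weight distribution in Tables~\ref{tab-ding1}--\ref{tab-ding2}) is a trace representation via Lemma~\ref{lem-01}: the only $q$-cyclotomic coset leaders modulo $N$ with $q$-weight $1$ and residue $1 \pmod{(q-1)/2}$ are $1$ and $(q+1)/2$, so every codeword of $\C'(q,m,r,1)^\perp$ has the form
\[
c_t(a,b) = \tr_{q^m/q}\!\bigl(a\beta^t + b\beta^{(q+1)t/2}\bigr), \qquad 0 \leq t \leq n-1,
\]
with $(a,b) \in \gf(q^m)^2$. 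Orthogonality of additive characters then expresses $\wt(c(a,b))$ as $n(1-1/q)$ minus a double character sum over $\alpha\in\gf(q)^*$ and $t\in\{0,\ldots,n-1\}$, reducing the weight distribution computation to evaluating partial Weil-type character sums over the transversal $\{\beta^t\}_{t=0}^{n-1}$ of $\langle\lambda\rangle$ in $\gf(q^m)^*$.

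The principal obstacle is the exact evaluation of these character sums. The monomial $y^{(q+1)/2}$ squares to $y^{q+1}$, so the inner sums are controlled by the quadratic form $y \mapsto y^{q+1}$ on $\gf(q^m)$ regarded as a $\gf(q)$-space. When $q=p$ is an odd prime (the precise place where the hypothesis is used), classical quadratic Gauss sum identities give exact evaluations of magnitude $p^{m/2}$. The parity of $m$ governs whether the associated quadratic form is of parabolic type (odd $m$, producing the correction $\pm q^{(m-1)/2}$) or hyperbolic/elliptic type (even $m$, producing $\pm(q-1)q^{(m-2)/2}$), which accounts for the split into the two cases. Once the character sum values are pinned down, stratifying the $q^{2m}$ pairs $(a,b)$ by the rank and type of the underlying quadratic form yields the weight multiplicities listed in Tables~\ref{tab-ding1} and~\ref{tab-ding2}, and the asserted minimum distance is just the smallest positive entry in each table.
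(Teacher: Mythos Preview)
Your treatment of $\C'(q,m,r,1)$ itself is fine and matches the paper: dimension from Theorem~\ref{thm-april92}, $d\geq 4$ from the BCH bound there, and $d\leq 4$ from the Sphere Packing Bound (your expansion is correct).

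For the dual, you and the paper both start from the trace representation $c_t(a,b)=\tr_{q^m/q}(a\beta^t+b\beta^{(q+1)t/2})$, but then diverge. The paper does \emph{not} attack the partial sums over $0\leq t\leq n-1$ directly. Instead it observes that the full-length sequence $(c_t(a,b))_{t=0}^{q^m-2}$ is exactly
\[
(\bc(a,b)\,\|\,\lambda\cdot\bc(a,b)\,\|\,\cdots\,\|\,\lambda^{r-1}\cdot\bc(a,b)),
\]
i.e.\ a concatenation of $r=(q-1)/2$ nonzero scalar multiples of the constacyclic codeword. Hence the weight distribution of $\C'(q,m,r,1)^\perp$ is obtained from that of the cyclic code of length $q^m-1$ with check polynomial $\m_{\beta^{-1}}(x)\m_{\beta^{-(q+1)/2}}(x)$ by the substitution $z\mapsto z^{(q-1)/2}$. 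That cyclic code's weight distribution is already known from Luo--Feng~\cite{LF08}, so the paper simply cites it and is done.

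Your route, by contrast, proposes to evaluate the character sums from scratch. The sketch has a real gap: the Gauss-sum machinery you invoke (quadratic forms $y\mapsto y^{q+1}$, rank/type stratification) lives on all of $\gf(q^m)$, not on a transversal of $\langle\lambda\rangle$. You never explain how to pass from your ``partial Weil-type sums over the transversal'' to full-field sums where those techniques apply. The concatenation identity above is exactly that missing bridge; once you have it, you are left with precisely the computation carried out in~\cite{LF08}, which is itself nontrivial. So your approach is not wrong in spirit, but as written it both omits the key reduction step and commits you to reproving a published result, whereas the paper's concatenation-plus-citation argument is short and complete. (Minor point: you write ``$q$-weight~$1$'' where you mean Hamming weight~$1$; the leader $(q+1)/2$ has $q$-weight $(q+1)/2$, not~$1$.)
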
 

\begin{proof}
The desired dimension of the code $\C'(q, m, r, 1)$ follows from Theorem \ref{thm-april92}. It follows from Lemma \ref{lem-SPB1} that $d(\C'(q, m, r, 1)) \leq 4$. Again by Theorem \ref{thm-april92}, $d(\C'(q, m, r, 1)) \geq 4$. Consequently, $d(\C'(q, m, r, 1))=4$, and $\C'(q, m, r, 1)$ is distance-optimal.

  It is easily checked that the generator polynomial of $\C'(q, m, r, 1)$ is $\m_{\beta}(x)\m_{\beta^{(q+1)/2}}(x)$. By Lemma \ref{lem-01}, the code $\C'(q, m, r, 1)^{\bot}$ has the trace representation
    $$\C'(q, m, r, 1)^{\bot}=\{\bc(a_1,a_2)=(\tr_{q^m/q}(a_1 \beta^{i}+a_2 \beta^{(\frac{q+1}2)i}) )_{i=0}^{n-1}: a_1, a_2\in \gf(q^m) \}. $$
    Define 
    $$\mathcal{EC}'(q, m, r, 1)^{\bot}=\{\widetilde{\bc}(a_1,a_2)=(\tr_{q^m/q}(a_1 \beta^{i}+a_2 \beta^{(\frac{q+1}2)i}) )_{i=0}^{q^m-2}: a_1, a_2\in \gf(q^m) \}. $$
    It is easily verified that $\mathcal{EC}'(q, m, r, 1)^{\bot}$ is the cyclic code of length $q^m-1$ over $\gf(q)$ with check polynomial $\m_{\beta^{-1}}(x) \m_{\beta^{-(q+1)/2}}(x)$. For each $(a_1,a_2)\in \gf(q^m)^2$, we have
    $$\widetilde{\bc}(a_1,a_2)=(\bc(a_1,a_2) \| \lambda \cdot \bc(a_1,a_2) \| \cdots \| \lambda^{\frac{q-3}2} \cdot \bc(a_1,a_2)), $$
where $\lambda=\beta^n\in \gf(q)^*$ and $\|$ denotes the concatenation of vectors. It follows that the constacyclic code $\C'(q, m, r, 1)^{\bot}$ has weight distribution $W(z)$ if and only if the cyclic code $\mathcal{EC}'(q, m, r, 1)^{\bot}$ has weight distribution $W(z^{\frac{q-1}2})$. When $q$ is an odd prime, the weight distribution of $\mathcal{EC}'(q, m, r, 1)^{\bot}$ was determined in \cite{LF08}. The desired result follows.  
\end{proof}

\begin{table*}[h]
	\begin{center}
\caption{Weight distribution of the code $\C'(q, m, r, 1)^{\bot}$ for odd $m$ }\label{tab-ding1}
\begin{tabular}{cc} \hline
  Weight $\omega$ & No. of codewords $A_\omega$ \\ \hline
  $0$& $1$\\ \hline
  $2q^{m-1}-q^{\frac{m-1}2} $& $q^{\frac{m-1}2}(q^{\frac{m-1}2}+1)(q^m-1)$\\ \hline
  $2 q^{m-1} $& $(q^m-1)(q^m-2q^{m-1}+1)$\\ \hline
  $2q^{m-1}+q^{\frac{m-1}2}$& $q^{\frac{m-1}2}(q^{\frac{m-1}2}-1)(q^m-1)$\\ \hline
\end{tabular}
\end{center}
\end{table*}

\begin{table*}[h]
	\begin{center}
\caption{Weight distribution of the code $\C'(q, m, r, 1)^{\bot}$ for even $m$ }\label{tab-ding2}
\begin{tabular}{cc} \hline
  Weight $\omega$ & No. of codewords $A_\omega$ \\ \hline
  $0$& $1$\\ \hline
  $2q^{m-1}-(q-1)q^{\frac{m-2}2} $& $\frac{(q^{\frac{m-2}2}+1)(q^{\frac{m}2}-1)(q^m-1)}{q^2-1}$\\ \hline
  $2q^{m-1}-2q^{\frac{m-2}2} $& $\frac{(q^\frac{m}2+1)^2(q-1)(q^m-1)}{4(q+1)}$\\ \hline
  $2q^{m-1}-q^{\frac{m-2}2}$& $q^{\frac{m-2}2}(q^{\frac{m}2}+1)(q^m-1)$\\ \hline
   $2q^{m-1}$& $\frac{(q^{m+1}-3q^m+q+1)(q^m-1)}{2(q-1)}$\\ \hline
   $2q^{m-1}+q^{\frac{m-2}2}$& $q^{\frac{m-2}2}(q^{\frac{m}2}-1)(q^m-1)$\\ \hline
   $2q^{m-1}+2q^{\frac{m-2}2}$& $\frac{(q^{\frac{m}2}-1)^2(q-1)(q^m-1)}{4(q+1)}$\\ \hline
    $2q^{m-1}+(q-1)q^{\frac{m-2}2}$& $\frac{(q^{\frac{m-2}2}-1)(q^{\frac{m}2}+1)(q^m-1)}{q^2-1}$\\ \hline
\end{tabular}
\end{center}
\end{table*}

\begin{example}
	Let $(q, m, r, \ell)=(5,2, 2,1)$. Let $\beta$ be the primitive element of $\gf(5^2)$ with $\beta^2+4\beta+2=0$. Then $\C'(5,2,2,1)$ has parameters $[12, 8, 4]$ and is distance-optimal. The dual code $\C'(5,2,2,1)^{\bot}$ has parameters $[12, 4, 6]$ and weight enumerator $1+8z^6+144z^8+144z^9+168z^{10}+96z^{11}+64z^{12}$.
\end{example}

\begin{example}
	Let $(q, m, r, \ell)=(5,3, 2,1)$. Let $\beta$ be the primitive element of $\gf(5^3)$ with $\beta^3+3\beta+3=0$. Then $\C'(5,3,2,1)$ has parameters $[62, 56, 4]$ and is distance-optimal. The dual code $\C'(5,3,2,1)^{\bot}$ has parameters $[62, 6, 45]$ and weight enumerator 
	$$1+3720z^{45}+9424z^{50}+2480z^{55}.$$
	Moreover, the code $\C'(5,3,2,1)^{\bot}$ has the best parameters known \cite{Grassl}. 
\end{example}

\begin{corollary}\label{cor-3} 
Let $m\geq 2$. Let $q$ be a prime power with $q\equiv 1\pmod{3}$, and let $r=\frac{q-1}3>1$. Then the constacyclic code $\C'(q, m, r, 1)$ has parameters 
	$$\left[3\left(\frac{q^m-1}{q-1}\right), 3\left(\frac{q^m-1}{q-1}\right)-3m, 5\leq d\leq 6 \right].$$
	\end{corollary}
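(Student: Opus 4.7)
The plan is to mirror the structure of Corollaries \ref{cor-1} and \ref{cor-2}: read off the dimension and lower bound on $d$ directly from Theorem \ref{thm-april92}, and obtain the upper bound on $d$ from the Sphere Packing Bound (Lemma \ref{lem-SPB1}). For the length and dimension, Theorem \ref{thm-april92} with $\ell=1$ gives
\[
\dim \C'(q,m,r,1) \;=\; \frac{q^m - 1 - m(q-1)}{r} \;=\; 3\cdot\frac{q^m-1}{q-1} - 3m,
\]
since $r=(q-1)/3$. The length $n = (q^m-1)/r = 3(q^m-1)/(q-1)$ is immediate from the definition of $\C'(q,m,r,\ell)$.

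For the lower bound $d\geq 5$, I would substitute $\ell=1$ into the bound of Theorem \ref{thm-april92}. The numerator $q^2-1-2(q-1) = (q-1)^2$ simplifies nicely, so
\[
d \;\geq\; \left\lfloor \frac{(q-1)^2}{r(q-1)} \right\rfloor + 2 \;=\; \left\lfloor \frac{q-1}{r} \right\rfloor + 2 \;=\; 3+2 \;=\; 5.
\]

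For the upper bound $d\leq 6$, I would apply Lemma \ref{lem-SPB1} to rule out $d\geq 7$. If $d\geq 7$, then $\lfloor(d-1)/2\rfloor \geq 3$ forces
\[
\binom{n}{3}(q-1)^3 \;\leq\; \sum_{i=0}^{3}\binom{n}{i}(q-1)^i \;\leq\; q^{n-k} \;=\; q^{3m}.
\]
I would contradict this by writing
\[
\binom{n}{3}(q-1)^3 \;=\; \frac{[n(q-1)]\,[(n-1)(q-1)]\,[(n-2)(q-1)]}{6}
\]
and using the identity $n(q-1)=3(q^m-1)$. Note that the hypotheses $r=(q-1)/3\geq 2$ and $q\equiv 1\pmod 3$ force $q\geq 7$, and $m\geq 2$ gives $q^m \geq 49$. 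Under these bounds, one checks that each of $3(q^m-1)-(q-1)$ and $3(q^m-1)-2(q-1)$ exceeds $2q^m$, whence
\[
\binom{n}{3}(q-1)^3 \;\geq\; \frac{(q^m-1)\cdot 2q^m\cdot 2q^m}{2} \;=\; 2q^{2m}(q^m-1) \;>\; q^{3m},
\]
the last inequality being equivalent to $q^m>2$. This contradicts the sphere packing bound, so $d\leq 6$.

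The only non-routine step is the concrete estimate $\binom{n}{3}(q-1)^3 > q^{3m}$; everything else is direct substitution into previously established formulas. The estimate itself is straightforward because both the second and third factors in the numerator lose only a lower-order term $(q-1)$ or $2(q-1)$ relative to $3(q^m-1) = 3q^m - 3$, which for $m\geq 2$ and $q\geq 7$ is a negligible correction, so no genuine obstacle arises. Unlike in Corollaries \ref{cor-1} and \ref{cor-2}, here the sphere packing bound does not pin $d$ exactly, so the statement must leave the gap $5\leq d\leq 6$.
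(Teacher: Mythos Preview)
Your proof is correct and follows exactly the same approach as the paper's own proof: dimension and the lower bound $d\geq 5$ come from Theorem~\ref{thm-april92}, and the upper bound $d\leq 6$ comes from the Sphere Packing Bound of Lemma~\ref{lem-SPB1}. The paper states the sphere packing step without any details, so your explicit estimate $\binom{n}{3}(q-1)^3 > q^{3m}$ (valid since $q\geq 7$ under the hypothesis $r=(q-1)/3>1$) simply fills in what the paper leaves implicit.
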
 
	
\begin{proof}
	The desired dimension of the code $\C'(q, m, r, 1)$ follows from Theorem \ref{thm-april92}. It follows from Lemma \ref{lem-SPB1} that $d(\C'(q, m, r, 1)) \leq  6$. Again by Theorem \ref{thm-april92}, $d(\C'(q, m, r, 1)) \geq 5$. The desired result follows.
\end{proof}
	
\begin{example}
Let $(q, m, r, \ell)=(7,2, 2, 1)$. Let $\beta$ be the primitive element of $\gf(7^2)$ with $\beta^2+6\beta+3=0$. Then the constacyclic code $\C'(7,2,2,1)$ has parameters $[24, 18, 5]$ and has the best parameters known \cite{Grassl}. 
\end{example}

Let $\Omega(q, m, \ell)$ denote the punctured Dilix code constructed in \cite{DLX18} (see also Section \ref{sec-DilixCode}).  Theorem  \ref{thm-april92} tells us that 
$$
\dim(\Omega(q, m, r, \ell))=r \cdot \dim(\C'(q, m, r, \ell)).  
$$
Experimental data indicates that the lower bound in (\ref{eqn-distbound2}) is good  in general. 
But the following problem is worth of investigation. 

\begin{open} 
Determine the minimum distance of $\C'(q, m, r, \ell)$ or improve the lower  in (\ref{eqn-distbound2})
for $2 \leq \ell \leq m-1$. 
\end{open} 

Experimental data shows that the lower bound in (\ref{eqn-ddistbound2}) is quite away from the true minimum distance. 

\begin{open} 
Determine the minimum distance of $\C'(q, m, r, \ell)^\perp$ or improve the lower bound in  (\ref{eqn-ddistbound2}) for $2 \leq \ell \leq m-1$. 
\end{open} 

\begin{example} 
Let $(q, m, r, \ell)=(3, 4, 2, 1)$. Let $\beta$ be the primitive element of $\gf(3^4)$ with $\beta^4 + 2\beta^3 + 2=0$. Then the code $\C'(3, 4, 2, 1)$ has parameters $[40, 36, 3]$ and $\C'(3, 4, 2, 1)^\perp$ has parameters $[40, 4, 27]$. The former is a perfect code and the latter meets the Griesmer bound. 
\end{example} 

\begin{example}\label{exam-342}  
Let $(q, m, r, \ell)=(3, 4, 2, 2)$. Let $\beta$ be the primitive element of $\gf(3^4)$ with $\beta^4 + 2\beta^3 + 2=0$. Then the code $\C'(3, 4, 2, 2)$ has parameters $[40, 24, 8]$ and $\C'(3, 4, 2, 2)^\perp$ has parameters $[40, 16, 12]$. The best ternary code known of length $40$ and dimension $24$ has minimum distance $9$ \cite{Grassl}.   
\end{example} 

\begin{example}\label{exam-343} 
Let $(q, m, r, \ell)=(3, 4, 2, 3)$. Let $\beta$ be the primitive element of $\gf(3^4)$ with $\beta^4 + 2\beta^3 + 2=0$. Then the code $\C'(3, 4, 2, 3)$ has parameters $[40, 8, 21]$  and has the best parameters known \cite{Grassl}, and $\C'(3, 4, 2, 3)^\perp$ has parameters $[40, 32, 4]$.  
\end{example} 

\begin{example} 
Let $(q, m, r, \ell)=(4, 3, 3, 2)$. Let $\beta$ be the primitive element of $\gf(4^3)$ with $\beta^6 + \beta^4 + \beta^3 + \beta + 1=0$. Then the code $\C'(4, 3, 3, 2)$ has parameters $[21, 9, 8]$ and $\C'(4, 3, 3, 2)^\perp$ has parameters $[21, 12, 6]$.  
\end{example} 




The forgoing examples demonstrate that the code $\C'(q, m, r, \ell)$ and its dual $\C'(q, m, r, \ell)^\perp$ may be optimal or have the best parameters known sometimes. Below we explain some connection and difference among the code $\C'(q, m, q-1, \ell)$, the projective Reed-Muller codes and the nonprimitive generalized Reed-Muller codes. 

By Corollary \ref{cor-1}, $\C'(q, m, q-1, 1)^\perp$ is monomially-equivalent to $\PRM(q, m,1)$, as both codes are monomially-equivalent to the Simplex code. This is one connection between the codes $\C'(q, m,q-1,\ell)$ and the projective Reed-Muller codes. Consider now all the projective codes $\PRM(3,4, \ell)$ for all $\ell$ with $1 \leq \ell \leq 6$. It follows from Theorem \ref{thm-PRMcode7} that 
\begin{eqnarray*}
d(\PRM(3,4,1)) &=& 27, \\
d(\PRM(3,4,2)) &=& 18, \\
d(\PRM(3,4,3)) &=& 9, \\
d(\PRM(3,4,4)) &=& 6, \\ 
d(\PRM(3,4,5)) &=& 3, \\
d(\PRM(3,4,6)) &=& 2. 
\end{eqnarray*} 
By Example \ref{exam-342}, $d(\C'(3,4,2,2))=8$ and  $d(\C'(3,4,2,2)^\perp)=12$. This means that both $\C'(3,4,2,2)$ and  $\C'(3,4,2,2)^\perp$ cannot be monomially-equivalent to a code $\PRM(3, 4, \ell)$ for all $\ell$ with $1 \leq \ell \leq 6$. Hence, the two families of codes $\C'(q, m, q-1, \ell)$ and $\PRM(q, m, \ell)$ are different in general. Notice that $\C'(q, m, q-1, \ell)$ and the punctured Dilix code $\Omega(q, m, \ell)$ are not monomially-equivalent when $q>2$, as they have different lengths.

Compared with parameters of the codes $\textrm{NGRM}(3, 4, 2, \ell)$ in Example \ref{exam-PGRMc}, both $\C'(3,4,2, 2)$ and $\C'(3,4,2, 2)^\perp$ cannot be monomially-equivalent to a code $\textrm{NGRM}(3, 4,2,\ell)$ for all $\ell$ with $0 \leq \ell \leq 3$. Hence, the class of codes $\C'(q, m, q-1, \ell)$ and the class of codes  $\textrm{NGRM}(q, m, q-1, \ell)$ are different.   

\section{The second class of constacyclic codes}\label{sec-2ndclass} 

We follow the previous notation. Throughout this section, let $r>1$ and $r\mid(q-1)$. Let $n=\frac{q^m-1}{r}$, where $m$ is an integer with $m \geq 2$. Define $N=rn=q^m-1$, then it follows from Lemma \ref{lem151} that  
$ 
\ord_n(q)=\ord_{N}(q)=m. 
$ 
Let $\Gamma_{(q,N)}$ be the set of $q$-cyclotomic coset leaders modulo $N$ and let 
$$
\Gamma_{(q,N,r)}^{(1)}=\{i: i \in \Gamma_{(q,N)}, \, i \equiv 1 \pmod{r} \}.
$$ 
Recall that $r\mid(q-1)$, we have $\wt_q(i)\equiv i \pmod{r}$. Then $\wt_q(i)\equiv 1\pmod{r}$ for $i\in \Gamma_{(q,N,r)}^{(1)}$.

\subsection{Definition and basic properties of the constacyclic codes} 

Let $\beta$ be a primitive element of $\gf(q^m)$ and let $\lambda=\beta^{(q^m-1)/r}$. Then $\lambda \in \gf(q)^*$ with $\ord(\lambda)=r$. Let $\ell$ be an integer with $0 \leq \ell < (q-1)m-1$. Define 
\begin{eqnarray*}
g_{(q, m, r, \ell)}(x) = \prod_{i \in \Gamma_{(q, N,r)}^{(1)}  \atop \wt_q(i) <(q-1)m-\ell} \m_{\beta^i}(x). 
\end{eqnarray*}
Let 
\begin{eqnarray*}
D_{(q,m, r, \ell)}=\bigcup_{i \in \Gamma_{(q, N,r)}^{(1)}  \atop \wt_q(i) <(q-1)m-\ell  } C_i^{(q, N)}. 
\end{eqnarray*} 
Note that $\wt_q(i)\equiv i \pmod {r}$. It is easily checked that
$$D_{(q, m, r, \ell)}=\{ i\in \Z_N: \wt_q(i)<(q-1)m-\ell, \ \wt_q(i)\equiv 1\pmod{r} \}.$$
By definition, $\{\beta^i: i \in D_{(q, m, r, \ell)}\}$ is the set of all zeros of $g_{(q, m, r,\ell)}(x)$. It is easily verified that $D_{(q, m, r, \ell)}$ is invariant under the permutation $qy \mod N$ of $\Z_N$. Consequently, $g_{(q, m, r, \ell)}(x)$ is over $\gf(q)$ and is a divisor of $x^n-\lambda$. Let $\C(q, m, r, \ell)$ denote the $\lambda$-constacyclic code of length $n$ over $\gf(q)$ with generator polynomial $g_{(q, m, r, \ell)}(x)$. We call $D_{(q, m, r, \ell)}$ the \emph{defining set} of $\C(q, m, r, \ell)$ with respect to the primitive element $\beta$ of $\gf(q^m)$.  

\begin{theorem}\label{Thm-1}
Let $0 \leq \ell= r \ell_1+\ell_0 < m(q-1)-1$, where $0\leq \ell_0\leq r-1$. If $\ell_1=0$ and $0\leq \ell_0\leq r-2$, then $\C(q,m, r, \ell)=\{ \bzero\}$. Otherwise, $\C(q,m, r, \ell)=\C(q,m, r, r \ell_2+r-1)$, where 
\begin{align*}
	\ell_2=\begin{cases}
	\ell_1~& {\rm if~} \ell_0= r-1,\\
	\ell_1-1~&{ \rm if~} \ell_0\neq r-1.
\end{cases}
\end{align*}
\end{theorem}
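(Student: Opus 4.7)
The plan is to reduce the claimed equality $\C(q,m,r,\ell) = \C(q,m,r,r\ell_2+r-1)$ to equality of defining sets, then perform a short modular-arithmetic analysis of $D_{(q,m,r,\ell)}$. By the identity $\wt_q(i) \equiv i \pmod{r}$ noted just before Section~5.1, the defining set takes the simpler form
\[ D_{(q,m,r,\ell)} = \{ i \in \Z_N : i \equiv 1 \pmod{r},\ \wt_q(i) < (q-1)m - \ell \}. \]
Since $(q-1)m \equiv 0 \pmod{r}$, the threshold $T := (q-1)m - \ell$ has residue $-\ell_0 \pmod{r}$, and $D_{(q,m,r,\ell)}$ is completely determined by the largest integer $w^*(\ell)$ satisfying $w^*(\ell) \equiv 1 \pmod{r}$ and $w^*(\ell) < T$.

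Next I would compute $w^*(\ell)$ explicitly by writing $(q-1)m = Mr$ and $\ell = r\ell_1 + \ell_0$ and reducing $T-1$ modulo $r$. If $0 \le \ell_0 \le r-2$, then $T-1 = (M-\ell_1-1)r + (r-\ell_0-1)$ has nonzero remainder, so $w^*(\ell) = (M-\ell_1-1)r + 1 = (q-1)m - (\ell_1+1)r + 1$; while if $\ell_0 = r-1$, then $T-1 = (M-\ell_1-1)r$ has zero remainder and the nearest residue-$1$ value strictly below it is $(M-\ell_1-2)r + 1 = (q-1)m - (\ell_1+2)r + 1$. Applying the second case to the representative $\ell' = r\ell_2 + r - 1$ (which has $\ell'_0 = r-1$, $\ell'_1 = \ell_2$) yields $w^*(\ell') = (q-1)m - (\ell_2+2)r + 1$. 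Matching $w^*(\ell)$ with $w^*(\ell')$ then forces $\ell_2 = \ell_1 - 1$ when $\ell_0 \ne r-1$ and $\ell_2 = \ell_1$ when $\ell_0 = r-1$, precisely the two subrules claimed.

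For the degenerate case $\ell_1 = 0$, $\ell_0 \le r-2$, I would show that $w^*(\ell) = (q-1)m - r + 1$ already meets or exceeds the maximum $q$-weight of any $i \in \Z_N$ with $i \equiv 1 \pmod{r}$. Indeed, $i = q^m - r$ lies in $\Z_N$ (since $r \ge 2$), has $q$-expansion $(q-r,\,q-1,\ldots,q-1)$ of weight $(q-1)m - r + 1$, and satisfies $i \equiv 1 \pmod{r}$ using $q \equiv 1 \pmod{r}$; moreover $(q-1)m - r + 1$ is visibly the largest value $\le (q-1)m - 1$ congruent to $1$ modulo $r$. Hence every $i \equiv 1 \pmod{r}$ in $\Z_N$ satisfies $\wt_q(i) < T$, so $D_{(q,m,r,\ell)} = \Gamma_{(q,N,r)}^{(1)}$, the generator polynomial equals $x^n - \lambda$, and $\C(q,m,r,\ell) = \{\bzero\}$.

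The whole argument is essentially modular bookkeeping; the only mildly subtle point is the full $r$-step descent of $w^*(\ell)$ as $\ell_0$ crosses from $r-2$ to $r-1$, which is what collapses each block of $r$ consecutive $\ell$-values onto the single representative $r\ell_2 + r - 1$. I anticipate no serious obstacle.
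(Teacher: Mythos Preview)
Your proposal is correct and follows essentially the same route as the paper: both arguments reduce to showing equality of defining sets and then do modular arithmetic on the threshold $(q-1)m-\ell$, using that admissible $q$-weights lie in the residue class $1\pmod r$. Your introduction of $w^*(\ell)$ and the explicit witness $i=q^m-r$ for the degenerate case make the bookkeeping slightly more transparent than the paper's version, but the underlying idea is identical.
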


\begin{proof}
Since $r\mid (q-1)$, we have $(q-1)m-\ell \equiv 1 \pmod r$ if and only if $\ell \equiv r-1 \pmod r$. If $\ell_1=0$ and $0\leq \ell_0\leq r-2$, i.e., $0\leq \ell\leq r-2$. Then $(q-1)m-\ell \leq (q-1)m-r+2$. In this case, $D_{(q, m, r, \ell)}=\{i\in \Z_N: \wt_q(i)\equiv 1\pmod{r} \}$, i.e., $g_{(q, m, r,\ell )}(x)=x^n-\lambda$. Consequently, $\C(q, m, r, \ell)=\{ \bzero\}$. If $\ell\geq r-1$ and $\ell_0<r-1$, then $\wt_q(i)<(q-1)m-\ell$ with $\wt_q(i)\equiv 1\pmod{r}$ if and only if 
 	$$\wt_q(i)\leq (q-1)m-(r\ell_1+r-1)<(q-1)m-(r\ell_1+r-1)+r$$
 	with $\wt_q(i)\equiv 1\pmod{r}$. The desired conclusion follows. 
\end{proof}
 
 It follows from Theorem \ref{Thm-1} that the class of $\lambda$-constacyclic codes $\C(q, m, r, \ell)$ contains only the following distinct codes 
$$
\C(q,m, r, r \ell_1 +r-1), \ 0 \leq \ell_1 \leq \left(\frac{q-1}r\right)m-2.  
$$ 

To determine the dimension of the $\lambda$-constacyclic code $\C(q, m, r, \ell)$, we need the following lemma.

\begin{lemma}\cite{Sorensen}\label{Sun-1}
	The number of ways one can place $t$ objects in $m$ cells such that no cell contains more than $s$ objects is 
	\begin{equation*}
		N(t, m, s)=\sum_{j=0}^m(-1)^j\binom{m}{j}\binom{t-j(s+1)+m-1}{t-j(s+1)}.
	\end{equation*}
\end{lemma}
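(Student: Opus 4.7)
The plan is to prove the identity by a standard inclusion-exclusion argument on the set of cells that overflow. First I will interpret the quantity $N(t,m,s)$ combinatorially as the number of nonnegative integer solutions to $x_1+x_2+\cdots+x_m=t$ subject to $0\leq x_i\leq s$ for every $i$. The unrestricted count (dropping the upper bound) is the classical stars-and-bars number $\binom{t+m-1}{t}$, which corresponds to the $j=0$ term in the claimed formula.

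Next I would introduce, for each $i\in\{1,2,\ldots,m\}$, the ``bad event'' $A_i$ consisting of those solutions in which $x_i\geq s+1$. To count $|A_i|$ I perform the substitution $y_i=x_i-(s+1)\geq 0$, reducing the problem to counting nonnegative solutions of $y_i+\sum_{k\neq i}x_k=t-(s+1)$, which again by stars-and-bars has $\binom{t-(s+1)+m-1}{t-(s+1)}$ solutions. More generally, for any index set $I\subseteq\{1,\ldots,m\}$ with $|I|=j$, the same shift applied simultaneously to every coordinate in $I$ yields
\[
\bigl|\textstyle\bigcap_{i\in I}A_i\bigr|=\binom{t-j(s+1)+m-1}{t-j(s+1)},
\]
with the convention that a binomial coefficient with a negative upper argument is zero; this convention is what makes the sum terminate correctly when $j(s+1)>t$. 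Since this count depends only on $|I|$, there are $\binom{m}{j}$ such intersections at level $j$.

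Finally, I would apply the inclusion-exclusion principle
\[
N(t,m,s)=\bigl|\{x_1+\cdots+x_m=t\}\bigr|-\bigl|\textstyle\bigcup_{i=1}^m A_i\bigr|
=\sum_{j=0}^{m}(-1)^j\binom{m}{j}\binom{t-j(s+1)+m-1}{t-j(s+1)},
\]
which is exactly the claimed formula. The main obstacle, such as it is, is purely notational: one must be careful that the vanishing convention on binomials with negative (or otherwise invalid) parameters is consistent with the natural cut-off $j\leq\lfloor t/(s+1)\rfloor$ in the inclusion-exclusion, so that extending the summation up to $m$ introduces only zero terms and does not change the value. Once this convention is fixed, each step of the argument is routine.
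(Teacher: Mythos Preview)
Your argument is correct: this is the standard inclusion--exclusion proof of the bounded stars-and-bars formula, and the care you take with the vanishing convention on binomial coefficients is exactly what is needed. The paper itself does not prove this lemma at all---it merely quotes the result from \cite{Sorensen} and uses it as a black box---so there is no ``paper's own proof'' to compare against; your inclusion--exclusion derivation is the canonical one.
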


The dimension of the $\lambda$-constacyclic code $\C(q, m, r, \ell)$ is documented in the next theorem.  

\begin{theorem}\label{thm-mycode5221} 
Let $\ell= r \ell_1+r-1$, where $0\leq \ell_1\leq (\frac{q-1}r) m-2$. Then 
\begin{align}\label{eqn-mycode5221}
\dim(\C(q,m, r,\ell))=&\sum_{t \equiv r-1 \pmod{r} \atop 0 < t \leq \ell} \left(  \sum_{j=0}^m (-1)^j \binom{m}{j} \binom{t-jq+m-1}{t-jq} \right) \notag \\ 
=& \sum_{t=0}^{\ell_1}  \sum_{j=0}^m (-1)^j \binom{m}{j}  \binom{tr+ r-1-jq+m-1}{tr+ r-1-jq}. 
\end{align}
\end{theorem}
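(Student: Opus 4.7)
The plan is to compute $\dim(\C(q,m,r,\ell))$ from the defining set and then convert the resulting count into the S\o rensen-type formula supplied by Lemma \ref{Sun-1}. Since $g_{(q,m,r,\ell)}(x)=\prod_{i\in\Gamma_{(q,N,r)}^{(1)},\,\wt_q(i)<(q-1)m-\ell}\m_{\beta^i}(x)$ and the cyclotomic cosets partition $D_{(q,m,r,\ell)}$, we have $\deg(g_{(q,m,r,\ell)})=|D_{(q,m,r,\ell)}|$, whence $\dim(\C(q,m,r,\ell))=n-|D_{(q,m,r,\ell)}|$. The task is therefore purely combinatorial.

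First I would exploit the identity $\wt_q(i)\equiv i\pmod r$, which is available because $r\mid(q-1)$ forces $q\equiv 1\pmod r$. Combined with the elementary observation that $|\{i\in\Z_N:i\equiv 1\pmod r\}|=n$ (recall $N=rn$), this yields
\[
\dim(\C(q,m,r,\ell))=\bigl|\{i\in\Z_N:\wt_q(i)\geq(q-1)m-\ell,\ i\equiv 1\pmod r\}\bigr|.
\]

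Next I would apply the substitution $t=(q-1)m-\wt_q(i)$. The exclusion $i\neq q^m-1$ (the unique element of $\{0,\dots,q^m-1\}$ with $q$-weight $(q-1)m$) translates to $t\geq 1$; the inequality on $\wt_q(i)$ becomes $t\leq\ell$; and the congruence $\wt_q(i)\equiv 1\pmod r$, coupled with $(q-1)m\equiv 0\pmod r$, becomes $t\equiv r-1\pmod r$. For each such $t$, counting $i\in\{0,1,\dots,q^m-1\}$ with $\wt_q(i)=(q-1)m-t$ is, via the complementary bijection $a_j=q-1-i_j$ on $q$-ary digits, the same as counting tuples $(a_0,\dots,a_{m-1})\in\{0,1,\dots,q-1\}^m$ with $a_0+\cdots+a_{m-1}=t$. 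Lemma \ref{Sun-1} applied with $s=q-1$ then evaluates this count as $N(t,m,q-1)=\sum_{j=0}^m(-1)^j\binom{m}{j}\binom{t-jq+m-1}{t-jq}$.

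Summing over admissible $t$ produces the first displayed expression, and the reindexing $t\mapsto tr+r-1$ with $t$ now running from $0$ to $\ell_1$ (which is valid because $\ell=r\ell_1+r-1$) yields the second. I do not foresee a genuine obstacle; the only real care lies in the bookkeeping that $t=0$ (corresponding to the excluded $i=q^m-1$) does not arise in the range of summation and that the residue condition on $\wt_q(i)$ translates exactly into $t\equiv r-1\pmod r$, both of which hinge on the standing hypothesis $r\mid(q-1)$.
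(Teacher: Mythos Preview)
Your proposal is correct and follows essentially the same route as the paper: both compute $\dim(\C(q,m,r,\ell))$ as the cardinality of $\{i\in\Z_N:\wt_q(i)\geq(q-1)m-\ell,\ \wt_q(i)\equiv 1\pmod r\}$, pass to the complementary count via the digit-complement (equivalently, $i\mapsto N-i$) to obtain $\{i:\wt_q(i)\leq\ell,\ \wt_q(i)\equiv r-1\pmod r\}$, and then evaluate each fibre $\{\wt_q(i)=t\}$ by Lemma~\ref{Sun-1} with $s=q-1$. The only cosmetic difference is that the paper invokes the map $i\mapsto N-i$ directly, whereas you describe it as the digit-wise substitution $a_j=q-1-i_j$; these are the same bijection.
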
  

\begin{proof}
Define 
$$H(q, m, r, \ell)=\{i \in \Z_N: \wt_q(i) \geq (q-1)m-\ell, ~ \wt_q(i) \equiv 1 \pmod{r}\}.$$
    By definition, $\dim(\C(q,m, r, \ell)) =|H(q,m, r, \ell)|$. We now determine $|H(q,m, r, \ell)|$. 

For each $i \in \Z_N$, $i \equiv 1 \pmod{ r}$ if and only if $N-i \equiv r-1 \pmod{r }$. 
Furthermore, 
$$
\wt_q(N-i)=(q-1)m - \wt_q(i). 
$$ 
Consequently, $\wt_q(i) \geq (q-1)m-\ell$ if and only if $\wt_q(N-i) \leq \ell$. We then deduce that 
\begin{eqnarray}\label{eqn-may241}
|H(q,m, r, \ell)|=|\{i \in \Z_N: \wt_q(i) \leq \ell, \, \wt_q(i) \equiv r-1 \pmod{r}\}|. 
\end{eqnarray}

 From Lemma \ref{Sun-1}, the number of ways of picking $t$ objects from a set of $m$ objects, under the restriction that no objects can be chosen more than $q-1$ times, is equal to 
$$ 
N(t, m, q-1)=\sum_{j=0}^{m} (-1)^j \binom{m}{j} \binom{t-jq+m-1}{t-jq}.  
$$
The desired dimension then follows from (\ref{eqn-may241}). 
\end{proof} 

The formula in (\ref{eqn-mycode5221}) looks complicated. The following theorem documents a upper bound on the dimension of  the code $\C(q, m, r, \ell)$. 

\begin{theorem}\label{thm-dimmycode}
Let $\ell = r \ell_1 + r-1$, where $0 \leq \ell_1 \leq (\frac{q-1}r)m-2$. Let $\ell_2=\lceil\frac{\ell+1}{q-1}\rceil$. Then 
\begin{eqnarray*}
\dim(\C(q, m, r, \ell)) \leq \frac{q^m-\sum_{t=0}^{m-\ell_2} \binom{m}{t} (q-1)^t }{r}.  
\end{eqnarray*}  
\end{theorem}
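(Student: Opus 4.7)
The plan is to bound $|H(q, m, r, \ell)|$ directly, using the identity from the proof of Theorem \ref{thm-mycode5221} that
$$\dim(\C(q, m, r, \ell)) = |H(q, m, r, \ell)| = |\{i \in \Z_N : \wt_q(i) \geq (q-1)m - \ell, \, \wt_q(i) \equiv 1 \pmod{r}\}|.$$
The argument splits into two independent pieces: a containment step that gives the set $S$ whose size is $q^m - \sum_{t=0}^{m-\ell_2}\binom{m}{t}(q-1)^t$, and an averaging step that extracts the factor $1/r$ from the congruence condition.

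For the containment step, I would show $H(q,m,r,\ell) \subseteq S$, where
$$S := \{i \in \{0, 1, \ldots, q-1\}^m : \wt(i) \geq m - \ell_2 + 1\}.$$
If $i$ has $z$ zero digits, each contributes $q-1$ to the ``defect'' $(q-1)m - \wt_q(i)$, so $z(q-1) \leq \ell$, giving $z \leq \lfloor \ell/(q-1) \rfloor$. A short check on the two cases ($(q-1) \mid (\ell+1)$ or not) shows $\lfloor \ell/(q-1) \rfloor \leq \ell_2 - 1 = \lceil (\ell+1)/(q-1) \rceil - 1$, so $i$ has at least $m - \ell_2 + 1$ nonzero digits. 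Counting $S$ by Hamming weight immediately yields $|S| = q^m - \sum_{t=0}^{m-\ell_2} \binom{m}{t}(q-1)^t$. A minor sanity check: $\Z_N$ excludes the integer $q^m-1$ (the all-$(q-1)$ vector), but that vector has $\wt_q = (q-1)m \equiv 0 \pmod{r}$ since $r \mid (q-1)$ and $r \geq 2$, so it never lies in $H(q,m,r,\ell)$ anyway and can be safely added back to $S$.

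The main step is the averaging argument. For each $i \in S$, let $k^*(i)$ be the smallest index with $i_{k^*} \neq 0$, and declare $i \sim i'$ whenever $i'$ agrees with $i$ in every coordinate except possibly $k^*(i)$, with $i'_{k^*} \in \{1, 2, \ldots, q-1\}$. Since $i_{k^*}$ remains nonzero across the orbit, both $\wt(i)$ and $k^*(i)$ are preserved, so this is a well-defined equivalence relation partitioning $S$ into orbits of size exactly $q-1$. Within a single orbit, $\wt_q$ takes the $q-1$ consecutive values $\wt_q(i) - i_{k^*} + 1, \ldots, \wt_q(i) - i_{k^*} + (q-1)$, so since $r \mid (q-1)$ these values hit each residue class modulo $r$ exactly $(q-1)/r$ times. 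Summing over orbits,
$$|\{i \in S : \wt_q(i) \equiv 1 \pmod{r}\}| = \frac{|S|}{r},$$
and combining with the containment gives the claimed bound. I expect the orbit step to be the most delicate point, mainly in justifying that the partition is well-defined (the invariance of $k^*$ under the orbit operation is the crucial detail); the congruence-counting within each orbit is then essentially arithmetic in $\Z/r\Z$ and should go through routinely.
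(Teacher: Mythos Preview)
Your argument is correct and is essentially the complementary version of the paper's proof. The paper works with the defining set $D_{(q,m,r,\ell)}$ rather than $H(q,m,r,\ell)$: it shows $D_{(q,m,r,\ell)}\supseteq\{i\in\Z_N:1\le\wt(i)\le m-\ell_2,\ i\equiv 1\pmod r\}$ and then invokes Lemma~\ref{lem-april91} to count that set as $\sum_{t=1}^{m-\ell_2}\binom{m}{t}(q-1)^t/r$, giving the bound via $\dim(\C)=n-|D|$. Your containment $H\subseteq S\cap\{\wt_q\equiv 1\}$ is exactly the complement of the paper's containment inside $\{i:i\equiv 1\pmod r\}$, and your orbit argument (fix all coordinates but the first nonzero one, vary it over $\{1,\dots,q-1\}$) is precisely the content of Lemma~\ref{lem-april91}, just rephrased as a partition into $(q-1)$-element fibres rather than ``fix $x_1,\dots,x_{t-1}$ and count $x_t$.'' So the two proofs are the same idea read from opposite ends; yours is self-contained where the paper cites its earlier lemma.
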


\begin{proof}
Recall
$$
D_{(q,m, r, \ell)}=\{i \in \Z_N: 1 \leq \wt_q(i) < (q-1)m-\ell, \ i \equiv 1 \pmod{r}\}.  
$$ 
Note that $(q-1)m-\ell=(q-1)m-r\ell_1-r+1$. Then by definition, 
\begin{align*}
D_{(q,m, r, \ell)} &= \{i \in \Z_N: 1 \leq \wt_q(i) \leq (q-1)m-r\ell_1-2r+1, \ i \equiv 1 \pmod{r}\} \\
&\supseteq   \{i \in \Z_N: 1 \leq \wt_q(i) \leq (q-1)(m-\ell_2), \ i \equiv 1 \pmod{r}\} \\ 
&\supseteq   \{i \in \Z_N: 1 \leq \wt(i) \leq m-\ell_2, \ i \equiv 1 \pmod{r}\}.  
\end{align*} 
It then follows from Lemma \ref{lem-april91} that 
\begin{align*}
|D_{(q,m, r, \ell)}| & \geq   |\{i \in \Z_N: 1 \leq \wt(i) \leq m-\ell_2, \ i \equiv 1 \pmod{r}\}| \\
&= \sum_{t=1}^{m-\ell_2} \binom{m}{t} |\{(x_1,x_2, \ldots, x_t) \in \{1, 2, \cdots, q-1 \}^t: x_1+x_2+\cdots+x_t \equiv 1 \pmod{r} \}| \\ 
&= \sum_{t=1}^{m-\ell_2} \binom{m}{t}\frac{(q-1)^t}{r}.  
\end{align*} 
Consequently, 
\begin{eqnarray*}
\dim(\C(q, m, r, \ell)) = \frac{q^m-1}{r}-|D_{(q,m, r, \ell)}| \leq \frac{q^m-\sum_{t=0}^{m-\ell_2} \binom{m}{t} (q-1)^t }{r}. 
\end{eqnarray*}
The desired conclusion follows.
\end{proof} 

In order to determine the minimum distance of the $\lambda$-constacyclic code $\C(q, m, r, \ell)$, we will give another form 
of this code.

Let $\beta$ be a primitive element of $\gf(q^m)$ and let $\m_{\beta}(x)=\sum_{i=0}^{m-1}\epsilon_i x^i+x^m$, where $\epsilon_i\in \gf(q)$. Since $\m_{\beta}(x)$ is the minimal polynomial of $\beta$ over $\gf(q)$, $\epsilon_0 \neq 0$. 
Let
 $$\mathbf{M}=\begin{pmatrix}
0& 1& 0& \cdots &0\\
0& 0& 1& \cdots &0\\
\vdots &\vdots & \vdots &\ddots&\vdots \\
0& 0&0&\cdots& 1\\	
-\epsilon_0& -\epsilon_1 &-\epsilon_2&\cdots& -\epsilon_{m-1}	
\end{pmatrix}
 $$
be the companion matrix of $\m_{\beta}(x)$. Note that $n=\frac{q^m-1}r$ and $\beta^n=\lambda$, then $\mathbf{M}^n=\lambda \mathbf{E}$, where $\mathbf{E}$ is the identity matrix of order $m$. Furthermore, 
\begin{equation}\label{EEQ-6}
	\gf(q)^m=\{\0\}\cup\{\mathbf{e} \mathbf{M}^i:0\leq i\leq q^m-2 \}, 
\end{equation}
where $\0=(0,0,\ldots, 0)$ and $\mathbf{e}=(1,0,\ldots,0)$. It is clear that $\{1, \beta, \cdots, \beta^{m-1} \}$ is a basis for $\gf(q^m)$ as a vector space over $\gf(q)$. Let $\overline{\beta}=(1,\beta, \ldots, \beta^{m-1})$, then  
$$\mathbf{M} \overline{\beta}^T=( \beta, \beta^2,\ldots, \beta^{m-1}, -\sum_{j=0}^{m-1}\epsilon_j \beta^{j})^T=( \beta, \beta^2,\ldots, \beta^{m-1}, \beta^{m})^T=\beta \cdot \overline{\beta}^T.
 $$
 It follows that $\mathbf{M}^i \overline{\beta}^T= \beta^i \cdot \overline{\beta}^T$ for $0\leq i\leq q^m-2$. Therefore, $\beta^i=(\mathbf{e} \mathbf{M}^i, \overline{\beta})$ for $0\leq i\leq q^m-2$, where $(\cdot,\cdot)$ denotes the inner product of two vectors. It follows that the mapping $\0\mapsto 0$, $\mathbf{e} \mathbf{M}^i \mapsto \beta^i=(\mathbf{e} \mathbf{M}^i, \overline{\beta})$ is an isomorphism between the vector space structures of $\gf(q)^m$ and $\gf(q^m)$. Let $r-1\leq \ell<(q-1)m-1$ with $\ell \equiv r-1\pmod{r}$, and let $M(q, m, r, \ell)$ be the linear subspace of $\gf(q)[x_1,x_2, \dots, x_{m}]$, which is spanned by all monomials $x_1^{i_1}x_2^{i_2} \cdots x_{m}^{i_{m}}$ satisfying the following three conditions: 
\begin{enumerate}
\item $0\leq i_j\leq q-1$, for $1\leq j\leq m$,
\item $\sum_{j=1}^{m} i_j \equiv r-1 \pmod{r}$,
\item $\sum_{j=1}^{m} i_j  \leq \ell $.
\end{enumerate} 
Define
$${\mathcal{GC}}(q, m, r, \ell) =\left\{\bc_f=(f(\mathbf{e}),f(\mathbf{e} \mathbf{M}),\ldots, f(\mathbf{e} \mathbf{M}^{n-1})): f(x_1,x_2,\ldots, x_m)\in M(q, m, r, \ell) \right\}.$$
Below we will prove that ${\mathcal{GC}}(q, m, r, \ell)=\C(q, m, r, \ell)$. For this purpose, we need the following two lemmas.

\begin{lemma}\cite{DGM70}\label{slem-2}
	Let $f(x_1,x_2,\ldots, x_m)\in \gf(q)[x_1,x_2,\ldots, x_m]$, then we have the following:
\begin{enumerate}
    \item If $f(\mathbf{P})=0$ for all $\mathbf{P}\in \gf(q)^m$, then $f\equiv 0$.
	\item If $\deg(f)<(q-1)m$, then $\sum_{\mathbf{P}\in \gf(q)^m}f(\mathbf{P})=0$.
\end{enumerate}
\end{lemma}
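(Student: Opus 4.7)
The plan is to prove the two parts of Lemma~\ref{slem-2} separately. Before starting I would note that part (1) as literally stated is false without a restriction on $f$ (since $x_1^q-x_1$ vanishes identically on $\gf(q)$ but is a nonzero polynomial); the intended hypothesis, which is in force throughout the paper because every $f\in M(q,m,r,\ell)$ is a combination of monomials with $0\le i_j\le q-1$, is that $f$ has degree at most $q-1$ in each individual variable. I would prove part (1) under that assumption.

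For part (1), I would proceed by induction on $m$. For $m=1$, a univariate polynomial of degree less than $q$ that has $q$ roots must be identically zero. For the inductive step, write
\[
f(x_1,\ldots,x_m)=\sum_{i=0}^{q-1}f_i(x_1,\ldots,x_{m-1})\,x_m^i,
\]
where each $f_i$ has degree at most $q-1$ in each of its variables. Fixing any $(\alpha_1,\ldots,\alpha_{m-1})\in\gf(q)^{m-1}$, the univariate polynomial $f(\alpha_1,\ldots,\alpha_{m-1},x_m)$ has degree less than $q$ and vanishes on all of $\gf(q)$, so each coefficient $f_i(\alpha_1,\ldots,\alpha_{m-1})$ is zero. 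Since this holds for every $(\alpha_1,\ldots,\alpha_{m-1})$, the induction hypothesis forces $f_i\equiv 0$ for each $i$, and hence $f\equiv 0$.

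For part (2), the key input is the character-sum identity
\[
\sum_{a\in\gf(q)}a^k=\begin{cases}-1 & \text{if } k\ge 1 \text{ and } (q-1)\mid k,\\ 0 & \text{otherwise,}\end{cases}
\]
together with the convention $0^0=1$. This follows from the fact that $\gf(q)^*$ is cyclic of order $q-1$ (so a nontrivial character sums to zero), combined with $\sum_{a\in\gf(q)}1=q=0$ in characteristic $p$. I would then expand $f=\sum_{\mathbf{i}}c_{\mathbf{i}}x_1^{i_1}\cdots x_m^{i_m}$ and observe that each monomial contributes
\[
c_{\mathbf{i}}\prod_{j=1}^{m}\Bigl(\sum_{a\in\gf(q)}a^{i_j}\Bigr)
\]
to $\sum_{\mathbf{P}\in\gf(q)^m}f(\mathbf{P})$. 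For this product to be nonzero, every factor must be nonzero, which by the identity above forces $i_j\ge q-1$ for every $j$, and hence $\sum_j i_j\ge (q-1)m$; this contradicts $\deg(f)<(q-1)m$, so every monomial contributes $0$ and the total sum vanishes.

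There is no real obstacle: both assertions are classical and each reduces to a one-line counting or identity argument. The only delicate point is the implicit degree-in-each-variable restriction needed for part (1), which is worth flagging explicitly even though it is automatic in every use of the lemma in this paper.
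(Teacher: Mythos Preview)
The paper does not supply its own proof of this lemma; it simply cites \cite{DGM70} and moves on. So there is nothing to compare your argument against within the paper itself.

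That said, your proof is correct and is exactly the standard argument one finds in the literature (including, essentially, in \cite{DGM70}). Your induction for part~(1) and your monomial-by-monomial power-sum computation for part~(2) are both sound. Your observation that part~(1) is literally false as stated---since, e.g., $x_1^q-x_1$ vanishes on $\gf(q)^m$ yet is a nonzero polynomial---is well taken: the intended ambient space is the reduced polynomial ring $\gf(q)[x_1,\ldots,x_m]/\langle x_1^q-x_1,\ldots,x_m^q-x_m\rangle$, equivalently polynomials with degree at most $q-1$ in each variable, which is precisely the setting in which the paper invokes the lemma (every $f\in M(q,m,r,\ell)$ satisfies this). Flagging that implicit hypothesis is a genuine improvement in clarity over the paper's bare citation.
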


\begin{lemma}\label{slem-1}
	Let $f(x_1,x_2,\ldots, x_m)\in M(q, m, r, \ell)$, then the following hold:
	\begin{enumerate}
	\item Let $0\leq i\leq n-1$ and $0\leq j\leq r-1$, then $f(\mathbf{e}\mathbf{M}^{jn+i})=\lambda^{-j}\cdot f(\mathbf{e} \mathbf{M}^i)$.
	\item If $f(\mathbf{e}\mathbf{M}^i)=0$ for all $0\leq i\leq n-1$, then $f\equiv 0$.  	
	\end{enumerate}
\end{lemma}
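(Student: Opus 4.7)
The plan is to exploit two facts: (i) $\mathbf{M}^n = \lambda\mathbf{E}$, already recorded just before the lemma, and (ii) the residue condition $\sum_k i_k \equiv r-1 \pmod r$ built into every monomial of $M(q,m,r,\ell)$.

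For part (1), since $\mathbf{M}^n = \lambda \mathbf{E}$, we get $\mathbf{M}^{jn} = \lambda^j \mathbf{E}$, and hence $\mathbf{e}\mathbf{M}^{jn+i} = \lambda^j \cdot \mathbf{e}\mathbf{M}^i$. If $x_1^{i_1}\cdots x_m^{i_m}$ is a monomial appearing in $f$ with $\sum_{k=1}^m i_k = s$, then substituting a scalar multiple $\lambda^j \mathbf{y}$ for the argument $\mathbf{y}$ multiplies the value of this monomial by $\lambda^{js}$. By the defining property of $M(q,m,r,\ell)$, $s \equiv r-1 \equiv -1 \pmod r$, and since $\lambda^r = 1$, this factor equals $\lambda^{-j}$, independent of the particular monomial. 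Summing over the monomials occurring in $f$ yields $f(\mathbf{e}\mathbf{M}^{jn+i}) = \lambda^{-j} f(\mathbf{e}\mathbf{M}^i)$, as desired.

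For part (2), apply (1): the hypothesis $f(\mathbf{e}\mathbf{M}^i) = 0$ for $0 \leq i \leq n-1$ propagates to $f(\mathbf{e}\mathbf{M}^i) = 0$ for every $0 \leq i \leq q^m-2$. Every monomial of $M(q,m,r,\ell)$ has positive total degree (at least $r-1 \geq 1$ since $r \geq 2$), so $f(\mathbf{0}) = 0$ as well. Combining this with the decomposition (\ref{EEQ-6}) shows that $f$ vanishes on every point of $\gf(q)^m$. Because each monomial of $M(q,m,r,\ell)$ has all variable exponents at most $q-1$, Lemma \ref{slem-2}(1) then forces $f \equiv 0$.

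The only mildly subtle step is the congruence computation $\lambda^{js}=\lambda^{-j}$, which is the reason the residue condition $\sum_k i_k \equiv r-1 \pmod r$ was placed in the definition of $M(q,m,r,\ell)$; without all monomials sharing the same residue class of total degree, the factor $\lambda^{-j}$ could not be pulled out of the sum. No other obstacle is anticipated, as the rest reduces to standard facts about reduced polynomial functions on $\gf(q)^m$.
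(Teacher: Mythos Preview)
Your proof is correct and follows essentially the same approach as the paper: for part (1) you use $\mathbf{M}^n=\lambda\mathbf{E}$ and the congruence $\sum_k i_k\equiv r-1\pmod r$ to pull out the common factor $\lambda^{-j}$, and for part (2) you propagate the vanishing via part (1), observe $f(\mathbf{0})=0$, and invoke Lemma~\ref{slem-2}(1). Your explicit justification that every monomial has total degree at least $r-1\geq 1$ (forcing $f(\mathbf{0})=0$) is a small clarification the paper leaves implicit.
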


\begin{proof}
	1. Suppose $f(x_1,x_2,\ldots, x_m)=\sum c_{i_1,i_2,\cdots,i_m} x_1^{i_1}x_2^{i_2}\cdots x_{m}^{i_m}$. It follows from $\mathbf{M}^n=\lambda \mathbf{E}$ that $$\mathbf{e}\mathbf{M}^{j n+i}=\lambda^j \mathbf{e}\mathbf{M}^{i}.$$
	Note that $\sum_{k=1}^{m}i_k\equiv r-1\pmod {r}$ and $\ord(\lambda)=r$, then $(\lambda^j)^{i_1+i_2+\cdots+i_m}=\lambda^{-j}$. Consequently,
	$$f(\mathbf{e}\mathbf{M}^{j n+i})=f( \lambda^j \mathbf{e}\mathbf{M}^i)=\lambda^{-j} \cdot f(\mathbf{e}\mathbf{M}^i).$$
	
	2. Let $\mathbf{P}\in \gf(q)^m \backslash \{ \0 \}$, it follows from (\ref{EEQ-6}) that there are $0\leq i\leq n-1$ and $0\leq j\leq r-1$ such that $\mathbf{P}=\mathbf{e}\mathbf{M}^{j n+i}$. Then, $f(\mathbf{P})=\lambda^{-j} \cdot f(\mathbf{e}\mathbf{M}^{i})=0$. Note that $f(\mathbf{0})=0$. Therefore, $f(\mathbf{P})=0$ for all $\mathbf{P}\in \gf(q)^m$. By the first conclusion of Lemma \ref{slem-2}, we have $f\equiv 0$. The desired conclusion follows.
\end{proof}

\begin{theorem}\label{thm:32}
	Let notation be the same as before. Then ${\mathcal{GC}}(q, m, r, \ell)=\C(q, m, r, \ell)$.
\end{theorem}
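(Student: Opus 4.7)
The plan is to prove the two inclusions by one direct evaluation argument plus a dimension match. The dimension side is essentially bookkeeping: Lemma \ref{slem-1}.2 makes the evaluation map $f\mapsto \mathbf{c}_f$ injective on $M(q,m,r,\ell)$, so $\dim \mathcal{GC}(q,m,r,\ell)=\dim M(q,m,r,\ell)$, and stratifying the monomials in $M(q,m,r,\ell)$ by total degree $t$ (and applying the counting formula $N(t,m,q-1)$ of Lemma \ref{Sun-1}) reproduces exactly the expression for $\dim \mathcal{C}(q,m,r,\ell)$ in Theorem \ref{thm-mycode5221}; note the $t=0$ term drops out of both counts since $r>1$ forces $0\not\equiv r-1\pmod r$.

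For the inclusion $\mathcal{GC}(q,m,r,\ell)\subseteq \mathcal{C}(q,m,r,\ell)$, fix $f\in M(q,m,r,\ell)$. I need $c_f(\beta^j)=0$ for every $j\in D_{(q,m,r,\ell)}$, where $c_f(x)=\sum_{i=0}^{n-1}f(\mathbf{e}\mathbf{M}^i)x^i$. Rather than handle $c_f$ directly, I would lift to length $N$ and form $\tilde c_f(\beta^a)=\sum_{k=0}^{N-1}f(\mathbf{e}\mathbf{M}^k)\beta^{ka}$. Writing $k=un+i$ with $0\le i\le n-1$, $0\le u\le r-1$, applying Lemma \ref{slem-1}.1 to pull out $\lambda^{-u}$, and using $\beta^{na}=\lambda^a$, one obtains
\[
\tilde c_f(\beta^a)=\Bigl(\sum_{u=0}^{r-1}\lambda^{u(a-1)}\Bigr)\,c_f(\beta^a),
\]
and the inner geometric sum equals $r$ when $a\equiv 1\pmod r$ and $0$ otherwise. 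Since $r\mid q-1$ means $r$ is invertible in $\gf(q)$, it suffices to prove $\tilde c_f(\beta^j)=0$ for every $j$ with $j\equiv 1\pmod r$ and $\wt_q(j)<(q-1)m-\ell$.

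To compute $\tilde c_f(\beta^j)$ I would use the $\gf(q)$-linear bijection $\phi\colon \gf(q)^m\to\gf(q^m)$, $\mathbf{v}\mapsto\sum_{s=1}^m v_s\beta^{s-1}$, under which $\phi(\mathbf{e}\mathbf{M}^k)=\beta^k$. The sum becomes
\[
\tilde c_f(\beta^j)=\sum_{\mathbf{v}\in\gf(q)^m\setminus\{0\}} f(\mathbf{v})\,\phi(\mathbf{v})^j=\sum_{\mathbf{v}\in\gf(q)^m} f(\mathbf{v})\,\phi(\mathbf{v})^j,
\]
since $f(0)=0$ (every monomial in $M$ has total degree at least $r-1\ge 1$). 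Writing the $q$-adic expansion $j=\sum_t j_t q^t$, the Frobenius identity gives $\phi(\mathbf{v})^{q^t}=\sum_s v_s^{q^t}\beta^{(s-1)q^t}$, and collapsing $v_s^{q^t}=v_s$ on $\gf(q)$ shows that the function $\mathbf{v}\mapsto\phi(\mathbf{v})^j$ is represented on $\gf(q)^m$ by the polynomial $\prod_t \bigl(\sum_s v_s\beta^{(s-1)q^t}\bigr)^{j_t}$ of total degree $\wt_q(j)$. Consequently $f(\mathbf{v})\phi(\mathbf{v})^j$ is represented by a polynomial over $\gf(q^m)$ of total degree at most $\ell+\wt_q(j)<(q-1)m$, and Lemma \ref{slem-2}.2 (applied monomial-by-monomial, which works for any coefficient field) forces the sum over $\gf(q)^m$ to vanish.

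The main subtlety is the reduction step in the last paragraph: a priori $\phi(\mathbf{v})^j$ is a polynomial of degree $j$, and only after pushing Frobenius through each $q^t$-block and using $v_s^q=v_s$ on $\gf(q)$ do we realise the same function by a polynomial of degree $\wt_q(j)$. Once this reduction is in hand, the bound $\ell+\wt_q(j)<(q-1)m$ is precisely what the defining set $D_{(q,m,r,\ell)}$ was set up to guarantee, so Lemma \ref{slem-2}.2 finishes the inclusion, and the dimension comparison promotes it to equality.
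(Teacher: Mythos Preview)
Your proof is correct and follows essentially the same route as the paper: the dimension match via Lemma \ref{slem-1}.2 and monomial counting, and the inclusion $\mathcal{GC}\subseteq\mathcal{C}$ via lifting $c_f(\beta^i)$ to a length-$N$ sum (your $\tilde c_f$ is the paper's $\frac{1}{r}\sum_{j=0}^{rn-1}$), rewriting it as $\sum_{\mathbf{v}\in\gf(q)^m} f(\mathbf{v})\phi(\mathbf{v})^j$, reducing $\phi(\mathbf{v})^j$ to degree $\wt_q(j)$ by Frobenius, and invoking Lemma \ref{slem-2}.2. The only organisational difference is that the paper first verifies $\mathcal{GC}$ is $\lambda$-constacyclic before proving the inclusion, whereas you go straight to the inclusion; since dimension equality plus inclusion already forces $\mathcal{GC}=\mathcal{C}$, your streamlining is harmless.
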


\begin{proof}
First, we prove that ${\mathcal{GC}}(q, m, r, \ell)$ is a $\lambda$-constacyclic code of length $n$ over $\gf(q)$. For any $f(\mathbf{x})\in M(q, m, r, \ell)$, let $g(\mathbf{x})= \lambda\cdot f(\mathbf{x}\mathbf{M}^{n-1})$. It is easily verified that $g(\mathbf{x})\in M(q, m, r, \ell)$. By Lemma \ref{slem-1}, $$g(\mathbf{e}\mathbf{M}^i)=\lambda \cdot f(\mathbf{e}\mathbf{M}^{n+i-1})=\lambda \cdot \lambda^{-1}\cdot f(\mathbf{e} \mathbf{M}^{i-1})= f(\mathbf{e}\mathbf{M}^{i-1})$$
	for $1\leq i\leq n-1$. Therefore,
	$$\bc_g=(\lambda \cdot f(\mathbf{e}\mathbf{M}^{n-1}), f(\mathbf{e}),\ldots, f(\mathbf{e}\mathbf{M}^{n-2})) \in {\mathcal{GC}}(q, m, r, \ell).$$
	It follows that ${\mathcal{GC}}(q, m, r, \ell)$ is a $\lambda$-constacyclic code of length $n$ over $\gf(q)$. 
	
	Secondly, we prove that $\dim({\mathcal{GC}}(q, m, r, \ell))=\dim(\C(q, m, r, \ell))$. By the second conclusion of Lemma \ref{slem-1}, the evaluations of all monomials 
	$$\left \{x_1^{i_1}x_2^{i_2}\cdots x_m^{i_m}: \sum_{k=1}^{m}i_k\equiv r-1\pmod{r},\ 0\leq i_k\leq q-1, \ \sum_{k=1}^{m}i_k\leq \ell \right \}$$
 give linearly independent codewords. It follows that 
	\begin{align*}
		\dim({\mathcal{GC}}(q, m, r, \ell))&=|\{(i_1,i_2,\ldots,i_m)\in \{0,1,\cdots,q-1\}^m: \sum_{k=1}^{m}i_k\equiv r-1\pmod{r} \ \sum_{k=1}^{m}i_k\leq \ell \}|\\
		&=|\{ i\in \Z_N: \wt_q(i)\equiv r-1\pmod{r}, \ \wt_q(i) \leq \ell \}|.
	\end{align*}
	The desired dimension then follows from Theorem \ref{thm-mycode5221}.  
	
	Finally, we prove that ${\mathcal{GC}}(q, m, r, \ell)\subseteq \C(q, m, r, \ell)$. Let $g(x)$ be the generator polynomial of ${\mathcal{GC}}(q, m, r, \ell)$. Then we only need to prove $g_{(q, m, r, \ell)}(x)\mid g(x)$. Suppose that $i=\sum_{l=0}^{m-1}i_l q^l$, $\wt_q(i)<(q-1)m-\ell$ and $\wt_q(i)\equiv 1\pmod {r}$. For any $f\in M(q, m, r, \ell)$, let
	$$ c_f(x)=\sum_{j=0}^{n-1}f(\mathbf{e}\mathbf{M}^{j})x^j.$$
be the polynomial corresponding to the codeword $\bc_f\in {\mathcal{GC}}(q, m, r, \ell)$. For each $1\leq t\leq r-1$, by Lemma \ref{slem-1}, 
	\begin{align*}
		\sum_{j=0}^{n-1}f(\mathbf{e}\mathbf{M}^{t n+j})(\beta^i)^{t n+j}&=\sum_{j=0}^{n-1}f(\mathbf{e}\mathbf{M}^{j}) \lambda^{-t} (\beta^i)^{tn+j}\\
		&=\sum_{j=0}^{n-1}f(\mathbf{e}\mathbf{M}^{j}) (\beta^i)^{j}\\
		&=c_f(\beta^i).
	\end{align*}
Then we have
\begin{align*}
	c_f(\beta^i)&=\sum_{j=0}^{n-1}f(\mathbf{e}\mathbf{M}^j)(\beta^i)^j\\
	&=\frac{1}{r} \sum_{j=0}^{rn-1}f(\mathbf{e}\mathbf{M}^j)(\beta^i)^j\\
	&=\frac{1}{r} \sum_{j=0}^{q^m-2}f(\mathbf{e}\mathbf{M}^j)(\beta^j)^i\\
	&=\frac{1}{r} \sum_{j=0}^{q^m-2}f(\mathbf{e}\mathbf{M}^j)[(\mathbf{e}\mathbf{M}^j,\overline{\beta})]^i\\
	&=\frac{1}{r} \sum_{j=0}^{q^m-2}f(\mathbf{e}\mathbf{M}^j)[(\mathbf{e}\mathbf{M}^j,\overline{\beta})]^{\sum_{l=0}^{m-1}i_l q^l}\\
	&=\frac{1}{r} \sum_{j=0}^{q^m-2}f(\mathbf{e}\mathbf{M}^j)\prod_{l=0}^{m-1} [(\mathbf{e}\mathbf{M}^j,\overline{\beta})^{q^l}]^{i_l}.
\end{align*} 
For $\overline{e}M^j=(x_{1},x_{2},\ldots,x_{m})\in \gf(q)^m$,
\begin{align*}
h(\mathbf{e}\mathbf{M}^j):&=\prod_{l=0}^{m-1} [(\mathbf{e}\mathbf{M}^j,\overline{\beta})^{q^l}]^{i_l }\\
&=\prod_{l=0}^{m-1} [x_{1}+x_{2}\beta^{q^l}+\cdots+ x_{m} \beta^{(m-1)q^l}]^{i_l }	
\end{align*}
is a homogenous polynomial of degree $\wt_q(i)$ in indeterminates $x_j$. It is clear that 
$$\deg(fh)<\ell+(q-1)m-\ell=(q-1)m$$ 
and $f(\mathbf{0})h(\mathbf{0})=0$. It follows from Lemma \ref{slem-2} that 
 $$\sum_{j=0}^{q^m-2}f(\mathbf{e}\mathbf{M}^j)\prod_{l=0}^{m-1} [(\mathbf{e}\mathbf{M}^j,\overline{\beta})^{q^l}]^{i_l }=\sum_{\mathbf{P}\in \gf(q)^m} f(\mathbf{P})\prod_{l=0}^{m-1} [(\mathbf{P},\overline{\beta})^{q^l}]^{i_l }=0.$$
Therefore, $\beta^i$ is a root of $g(x)$. It follows that $g_{(q, m, r, \ell)}(x)$ divides $g(x)$. This completes the proof.
\end{proof}

It is hard to settle the minimum distance of the constacyclic code $\C(q, m, r, \ell)$. Below we develop some bounds on $d(\C(q, m, r, \ell))$.

\begin{theorem}\label{thm-april17}
Let $\ell= (q-1) \ell_1+\ell_0<(q-1)m-1$, where $0 \leq \ell_0\leq q-2$ and $\ell_0\equiv r-1\pmod{r}$. Then 
\begin{eqnarray}\label{eqn-LB11}
 \frac{(q-\ell_0)q^{m-\ell_1-1}-2}r+1 \leq d(\C(q,m, r, \ell)) \leq \frac{(q-\ell_0+r-2)q^{m-\ell_1-1}}{r}.
\end{eqnarray}  
\end{theorem}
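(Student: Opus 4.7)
The plan is to establish the two bounds separately. The lower bound will follow from the BCH bound for constacyclic codes (Lemma \ref{lem-BCHbound}) combined with a careful $q$-adic analysis of the defining set, while the upper bound will be obtained by exhibiting a concrete codeword through the polynomial-evaluation description of $\C(q, m, r, \ell)$ furnished by Theorem \ref{thm:32}.

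For the lower bound, I would take $e = 1$, $h = 0$, and $\delta = \frac{(q-\ell_0)q^{m-\ell_1-1} - 2}{r} + 1$ in Lemma \ref{lem-BCHbound}, so it suffices to prove that $1 + rj \in D_{(q,m,r,\ell)}$ for all $0 \leq j \leq \delta - 2$, i.e.\ $\wt_q(1 + rj) < (q-1)m - \ell = (q-1)(m-\ell_1) - \ell_0$. The key combinatorial step is: if $1 \leq a < (q-\ell_0)q^{m-\ell_1-1}$, the $q$-adic expansion of $a$ has $a_i = 0$ for $i \geq m-\ell_1$ and $a_{m-\ell_1-1} \leq q-\ell_0-1$, so $\wt_q(a) \leq (q-\ell_0-1) + (m-\ell_1-1)(q-1) = (q-1)(m-\ell_1) - \ell_0$, with equality if and only if $a = (q-\ell_0)q^{m-\ell_1-1} - 1$. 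Because $\ell_0 \equiv r-1$ and $q \equiv 1 \pmod r$, this unique extremal $a$ is itself congruent to $1 \pmod r$, so the longest usable BCH run inside the progression $\{1+rj\}$ terminates at $1 + r(\delta-2) = (q-\ell_0)q^{m-\ell_1-1} - 1 - r$, yielding $\delta - 1$ consecutive roots and hence $d(\C(q,m,r,\ell)) \geq \delta$.

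For the upper bound, I would exhibit an explicit codeword. Put $k = (\ell_0 - r + 1)/r$, choose $k$ distinct nonzero $r$-th powers $c_1, \ldots, c_k$ in $\gf(q)^*$ (possible since $k \leq (q-r-1)/r < (q-1)/r$), set $g(y) = \prod_{j=1}^k (y^r - c_j)$, and define
\[
  f(\mathbf{x}) = x_1^{r-1}\, g(x_1) \prod_{i=2}^{\ell_1+1} \bigl(1 - x_i^{q-1}\bigr).
\]
A direct expansion shows $f \in M(q, m, r, \ell)$: the $x_1$-exponents are of the form $r-1+rs \leq \ell_0 \leq q-1$, all other exponents lie in $\{0, q-1\}$, the total degree is at most $\ell_0 + \ell_1(q-1) = \ell$, and the exponent-sum of every monomial is $\equiv r-1 \pmod r$ (since $r \mid q-1$). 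The nonzero locus of $f$ in $\gf(q)^m$ is $\{\mathbf{v}: v_1 \neq 0,\ g(v_1) \neq 0,\ v_2 = \cdots = v_{\ell_1+1} = 0\}$, which has size $(q - \ell_0 + r - 2)\, q^{m-\ell_1-1}$ because $g$ has exactly $kr = \ell_0 - r + 1$ distinct nonzero roots. Since $g(\lambda v_1) = g(v_1)$ (as $\lambda^r = 1$) and the other conditions are manifestly $\langle\lambda\rangle$-invariant, this locus is a union of orbits of size $r$, and by Theorem \ref{thm:32} the weight of the corresponding codeword $\bc_f$ equals its orbit count $(q - \ell_0 + r - 2) q^{m-\ell_1-1}/r$, matching the claimed upper bound.

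The principal obstacle is the design of $g$: it must simultaneously (i) consist only of monomials of degree divisible by $r$, to preserve the $r-1 \pmod r$ congruence of exponent-sums in $f$, and (ii) have as many distinct $\gf(q)$-roots as possible. Condition (i) forces $g$ to be a polynomial in $y^r$, whence its roots must come from the subgroup of $r$-th powers in $\gf(q)^*$; maximizing the root count then produces exactly $\ell_0 - r + 1$ distinct roots, which explains the coefficient $q - \ell_0 + r - 2 = q - 1 - (\ell_0 - r + 1)$ appearing in the upper bound. A secondary subtlety is the $q$-adic case analysis in the lower bound, where one must both identify the unique extremal $a = (q-\ell_0)q^{m-\ell_1-1}-1$ and verify that it lies in the arithmetic progression modulo $r$, which is precisely what produces the shift $-2$ in the numerator of the lower bound.
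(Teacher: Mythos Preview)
Your proposal is correct and follows essentially the same approach as the paper: the lower bound via the BCH run $\{1+rj\}$ up to $H-r$ with $H=(q-\ell_0)q^{m-\ell_1-1}-1$, and the upper bound via an explicit $f\in M(q,m,r,\ell)$ of the form $x^{r-1}\prod(x^r-c_j)\prod(1-x_i^{q-1})$ whose nonzero locus is counted directly. The only differences are cosmetic (you put the special variable first and use arbitrary distinct $r$-th powers $c_j$, while the paper puts it in position $\ell_1+1$ and takes $c_j=\omega^{rj}$; you divide by $r$ via $\langle\lambda\rangle$-orbits, the paper via the repetition structure of the extended length-$N$ code), and your handling of the endpoint $H\equiv 1\pmod r$ is in fact slightly more explicit than the paper's.
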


\begin{proof} By definition, we have 
$$ 
(q-1)m-\ell = (q-1)(m-\ell_1-1)+q-1-\ell_0. 
$$
Let $H$ be the smallest integer with $\wt_q(H)=(q-1)m-\ell$. Then 
$$ 
H=(q-1-\ell_0)q^{m-\ell_1-1} + \sum_{i=0}^{m-\ell_1-2} (q-1)q^i = (q-\ell_0)q^{m-\ell_1-1}-1.  
$$
It is easily verified that every integer $u$ with $0 < u <H$ satisfies $\wt_q(u) 
<(q-1)m-\ell.$ Define 
$$ 
B=\left\{ 1+rj: 0 \leq j \leq \frac{(q-\ell_0)q^{m-\ell_1-1}-2}r-1  \right\}.  
$$
Then $B$ is a subset of $\{1,2,\cdots, H-r\}$ and $\beta^i$ is a zero of $\C(q,m, r, \ell)$ for each $i \in B$. The desired lower bound then follows from Lemma \ref{lem-BCHbound}. 

Let $${\mathcal{D}}(q, m, r, \ell) =\left \{ \widetilde{\mathbf{c}}_f=(f(\mathbf{e}\mathbf{M}^i))_{i=0}^{rn-1}: f(x_1,x_2,\ldots, x_{m})\in M(q, m, r, \ell) \right \}.$$	
 For any $f(x_1,x_2,\ldots, x_{m})\in M(q, m, r, \ell)$, it follows from Lemma \ref{slem-1} that 
 $$ \widetilde{\mathbf{c}}_f=(\bc_f \parallel  \lambda^{-1}\bc_f \parallel\cdots \parallel \lambda^{-(r-1)}\bc_f ), $$
 where $\mathbf{c}_f=(f(\mathbf{e}\mathbf{M}^i))_{i=0}^{n-1}$ and $\parallel$ denotes the concatenation of vectors. By Theorem \ref{thm:32}, we have 
 $$d(\C(q,m, r, \ell))=\frac{1}{r} \cdot d({\mathcal{D}}(q, m, r, \ell)).$$ 
 Let 
 $$f(x_1,x_2,\ldots, x_{m})=\prod_{i=1}^{\ell_1} [1-x_i^{(q-1)}]\cdot  x_{\ell_1+1}^{r-1} \cdot \prod_{i=1}^{\frac{\ell_0-r+1}r} [x_{\ell_1+1}^r-\omega^{r i}],$$
 where $\omega$ is a primitive element of $\gf(q)$. It is easily verified that
 $\deg(f)=(q-1)\ell_1+\ell_0$ and $f\in M(q, m, r, \ell)$. Clearly, $f(x_1,x_2,\ldots,x_m)$ is zero in $\gf(q)^m$ unless
 \begin{align}\label{eqn-wxq84}
& x_i =0,\ \mathrm{for}\ i=1, 2, \cdots, \ell_1, \nonumber \\
&  x_{\ell_1+1} \notin \{0\}\cup \left\{\lambda^j \omega^i: 1\leq i\leq  \frac{\ell_0-r+1}r, \ 0\leq j\leq r-1 \right \}	.
 \end{align}
 For any $1\leq i, i' \leq  \frac{\ell_0-r+1}r$, $\ 0\leq j, j'\leq r-1$, if $\lambda^{j'}\omega^{i'}=\lambda^{j}\omega^{i}$, then $\omega^{i-i'}=\lambda^{j'-j}$. It follows that $\omega^{r(i-i')}=1$. Then we have $\frac{q-1}r$ divides $i-i'$. Note hat $0\leq |i-i'|\leq \frac{\ell_0-2r+1}r<\frac{q-1}r$. Therefore, $i=i'$. Consequently, $j=j'$. That is to say, there are $[q-r( \frac{\ell_0-r+1}r )-1 ] q^{m-1-\ell_1}$ vectors in $\gf(q)^m\backslash \{\mathbf{0} \}$ satisfying both equations in (\ref{eqn-wxq84}) and $\wt(\bc_f)=(q-\ell_0+r-2)q^{m-1-\ell_1}$. It follows that $$d(\mathcal{D}(q, m, r, \ell))\leq (q-\ell_0+r-2)q^{m-1-\ell_1}.$$
 The desired upper bound follows.
\end{proof} 

If $r=2$ or $\ell_1=m-1$, it is easily verified that the upper and lower bounds in (\ref{eqn-LB11}) are equal. Therefore, we have following two conclusions.

\begin{corollary}\label{THM:24}
Let $\ell=(q-1)(m-1)+\ell_0$, where $m\geq 2$, $0\leq \ell_0 \leq q-2$ and $\ell_0\equiv r-1\pmod{r}$. Then 	
$$d(\C(q,m, r , \ell)) = \frac{q-\ell_0+r-2}r. $$
\end{corollary}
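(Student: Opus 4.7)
The plan is to deduce this corollary as a direct special case of Theorem~\ref{thm-april17}. The hypothesis $\ell = (q-1)(m-1)+\ell_0$ with $\ell_0 \equiv r-1 \pmod{r}$ and $0 \leq \ell_0 \leq q-2$ corresponds precisely to the setting of Theorem~\ref{thm-april17} with the parameter $\ell_1 = m-1$. So the first step is to simply substitute $\ell_1 = m-1$ into both the lower and upper bounds in (\ref{eqn-LB11}) and observe that they collapse to the same value.

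Concretely, with $\ell_1 = m-1$ we have $q^{m-\ell_1-1} = q^0 = 1$. The lower bound from the BCH argument becomes
\[
\frac{(q-\ell_0)\cdot 1 - 2}{r} + 1 \;=\; \frac{q-\ell_0-2+r}{r} \;=\; \frac{q-\ell_0+r-2}{r},
\]
while the upper bound from the explicit evaluation codeword becomes
\[
\frac{(q-\ell_0+r-2)\cdot 1}{r} \;=\; \frac{q-\ell_0+r-2}{r}.
\]
Since both bounds coincide, they pin the minimum distance $d(\C(q,m,r,\ell))$ to this common value.

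The only subtle point is to check that the applicability hypothesis $\ell < (q-1)m-1$ of Theorem~\ref{thm-april17} is compatible with the stated range $0 \leq \ell_0 \leq q-2$: indeed, since the code $\C(q,m,r,\ell)$ is defined only for $\ell < (q-1)m-1$, the extreme value $\ell_0 = q-2$ is implicitly excluded (or, if one wishes to include it, the resulting code is trivial and requires no argument). Apart from this boundary bookkeeping, no further work is needed, since both directions of Theorem~\ref{thm-april17}—the BCH-type lower bound and the explicit product-of-affine-factors codeword giving the upper bound—are already established. Thus the corollary follows with essentially no additional technical content; the main observation is the numerical coincidence at $\ell_1 = m-1$.
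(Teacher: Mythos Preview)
Your proposal is correct and matches the paper's approach exactly: the paper simply remarks that when $\ell_1=m-1$ the upper and lower bounds in (\ref{eqn-LB11}) coincide, which is precisely the substitution and computation you carry out. Your observation about the boundary case $\ell_0=q-2$ is also apt, since that value forces $\ell=(q-1)m-1$, which lies outside the stated domain of definition of $\C(q,m,r,\ell)$.
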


\begin{corollary}\label{thm:35}
Let $q$ be an odd prime power and $r=2$. Let $r-1\leq \ell=(q-1)\ell_1+\ell_0<(q-1)m-1$, where $m\geq 2$, $0\leq \ell_0\leq q-2$ and $\ell_0\equiv r-1 \pmod{ r }$. Then 	
$$d(\C(q,m, r, \ell)) = \left(\frac{q-\ell_0}2 \right)q^{m-1-\ell_1}.$$
\end{corollary}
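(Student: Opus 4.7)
The plan is to derive Corollary \ref{thm:35} directly from the two-sided bound established in Theorem \ref{thm-april17}, by observing that the lower and upper bounds coincide when $r=2$. So the proof is essentially an algebraic verification rather than any new argument.

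First I would substitute $r=2$ into the lower bound from Theorem \ref{thm-april17}. The lower bound reads
\[
\frac{(q-\ell_0)q^{m-\ell_1-1}-2}{r} + 1,
\]
which for $r=2$ simplifies to
\[
\frac{(q-\ell_0)q^{m-\ell_1-1}-2}{2} + 1 = \frac{(q-\ell_0)q^{m-\ell_1-1}}{2}.
\]
Next I would substitute $r=2$ into the upper bound, which reads
\[
\frac{(q-\ell_0+r-2)q^{m-\ell_1-1}}{r} = \frac{(q-\ell_0)q^{m-\ell_1-1}}{2}.
\]
Since the two bounds coincide, they pin down $d(\C(q,m,r,\ell))$ exactly as claimed.

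The only hypotheses to check are those of Theorem \ref{thm-april17}, namely that $\ell=(q-1)\ell_1+\ell_0<(q-1)m-1$ with $0\leq \ell_0\leq q-2$ and $\ell_0\equiv r-1\pmod r$; all of these are assumed in the statement of the corollary. Note that when $r=2$ and $q$ is odd, the congruence $\ell_0\equiv 1\pmod 2$ forces $q-\ell_0$ to be even, so the expression $\tfrac{(q-\ell_0)q^{m-1-\ell_1}}{2}$ is indeed a positive integer, which is a consistency check on the formula. There is no real obstacle here: the corollary is a clean special case where the BCH-type lower bound from the first half of the proof of Theorem \ref{thm-april17} and the explicit codeword constructed in the second half happen to meet.
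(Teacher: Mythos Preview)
Your proof is correct and matches the paper's approach exactly: the paper also derives this corollary by observing that the lower and upper bounds in (\ref{eqn-LB11}) of Theorem \ref{thm-april17} coincide when $r=2$.
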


\begin{example}
Let $(q, m, r, \ell)=(3, 3, 2, 3)$. Let $\beta$ be the primitive element of $\gf(3^3)$ with $\beta^3+2\beta+1=0$. Then $\C(3, 3, 2, 3)$ has parameters $[13, 10, 3]$ and is distance-optimal. 	
\end{example}

\begin{example}
Let $(q, m, r, \ell)=(5, 2, 2, 3)$. Let $\beta$ be the primitive element of $\gf(5^2)$ with $\beta^2+4\beta+2=0$. Then $\C(5, 2, 2, 3)$ has parameters $[12, 6, 5]$. Furthermore, $\C(5, 2, 2, 3)$ is self-dual and almost-distance optimal. 	
\end{example}

For $\ell$ with $r-1\leq \ell \leq  (q-1)(m-1)-1$ and $\ell \equiv r-1\pmod{r}$, we will give an improved bound on the minimum distance of the code $\C(q, m, r, \ell)$. To this end, we consider the subcode of the $\lambda$-constacyclic code $\C(q, m, r, \ell)$. Let $\widetilde{M}(q, m, r, \ell)$ be the linear subspace of $\gf(q)[x_1,x_2, \dots, x_{m}]$, which is spanned by all monomials $x_1^{i_1}x_2^{i_2} \cdots x_{m}^{i_{m}}$ satisfying the following three conditions: 
\begin{enumerate}
\item $0\leq i_j\leq q-1$, for $1\leq j\leq m$,
\item $\sum_{j=1}^{m} i_j \equiv \ell \pmod{q-1}$,
\item $\sum_{j=1}^{m} i_j  \leq \ell $.
\end{enumerate} 
It is easily verified that $\widetilde{M}(q, m, r, \ell)\subseteq M(q, m, r, \ell)$. In the special case $r=q-1$, $\widetilde{M}(q, m, r, \ell)=M(q, m, r, \ell)$. Associated with the $\lambda$-constacyclic code $\C(q, m, r, \ell)$ are the following two codes over $\gf(q)$:
$$\widetilde{\C}(q, m, r, \ell) =\left\{\widetilde{\bc}_f=(f(\mathbf{e}),f(\mathbf{e} \mathbf{M}),\ldots, f(\mathbf{e} \mathbf{M}^{n-1})): f(x_1,x_2,\ldots, x_m)\in \widetilde{M}(q, m, r, \ell) \right\},$$
and
$$\mathrm{P}(\widetilde{\C}(q, m, r, \ell)) =\left\{  \overline{\bc}_f=(f(\mathbf{e}),f(\mathbf{e} \mathbf{M}),\ldots, f(\mathbf{e} \mathbf{M}^{\overline{n}-1})): f(x_1,x_2,\ldots, x_m)\in \widetilde{M}(q, m, r, \ell) \right\},$$
where $\overline{n}=\frac{q^m-1}{q-1}$. In the special case $r=q-1$, $\C(q, m, r, \ell)$, $\widetilde{\C}(q, m, r, \ell)$ and $\mathrm{P}(\widetilde{\C}(q, m, r, \ell))$ are identical.

\begin{theorem}\label{THM:23}
Let notation be the same as before. Then the following hold:
\begin{enumerate}
\item The linear code $\widetilde{\C}(q, m, r, \ell)$ is a $\lambda$-constacyclic code of length $\frac{q^m-1}{r}$ over $\gf(q)$.
\item The $\lambda$-constacyclic code $\widetilde{\C}(q, m, r, \ell)\subseteq \C(q, m, r, \ell)$. In the special case $r=q-1$, 
$$\widetilde{\C}(q, m, r, \ell)= \C(q, m, r, \ell).$$
\item If $r=q-1$, the $\lambda$-constacyclic code $\widetilde{\C}(q, m, r, \ell)= \mathrm{P}(\widetilde{\C}(q, m, r, \ell))$. If $r<q-1$, the $\lambda$-constacyclic code $$\widetilde{\C}(q, m, r, \ell)=\{ (\overline{\bc}_f\| \omega^{\ell}\cdot \overline{\bc}_f\|\cdots \|\omega^{(\frac{q-1}{r}-1)\ell}\cdot \overline{\bc}_f):\overline{\bc}_f \in \mathrm{P}(\widetilde{\C}(q, m, r, \ell) )\},$$
where $\omega=\beta^{\overline{n}}$ is a primitive element of $\gf(q)$.
\end{enumerate}	
\end{theorem}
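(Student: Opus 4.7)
My plan is to mirror the structure of the proof of Theorem~\ref{thm:32}, using one new ingredient throughout: the characterization of $\widetilde{M}(q,m,r,\ell)$ via the scaling identity $f(c\mathbf{x})=c^{\ell}f(\mathbf{x})$ for all $c\in\gf(q)^*$, which holds because every spanning monomial $\mathbf{x}^{\alpha}$ of $\widetilde{M}$ has $|\alpha|\equiv \ell\pmod{q-1}$ and $c^{q-1}=1$. For part~(1), I take $f\in\widetilde{M}(q,m,r,\ell)$ and set $g(\mathbf{x})=\lambda\,f(\mathbf{x}\mathbf{M}^{n-1})$, then reduce modulo $\langle x_j^q-x_j\rangle$ so that each variable has degree $\leq q-1$. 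Since the substitution is linear, it produces monomials of the same total degrees as those of $f$, so every term has $|\alpha|\leq \ell$ and $|\alpha|\equiv \ell\pmod{q-1}$; the reduction step $x_j^{a}\mapsto x_j^{a-(q-1)}$ only decreases total degrees by multiples of $q-1$, so both properties survive, giving $g\in\widetilde{M}(q,m,r,\ell)$. Since $\widetilde{M}\subseteq M$, Lemma~\ref{slem-1} applies to $f$ and yields $g(\mathbf{e}\mathbf{M}^i)=f(\mathbf{e}\mathbf{M}^{i-1})$ for $1\leq i\leq n-1$ and $g(\mathbf{e})=\lambda f(\mathbf{e}\mathbf{M}^{n-1})$, so $\widetilde{\bc}_g$ is exactly the $\lambda$-constacyclic shift of $\widetilde{\bc}_f$.

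Part~(2) has two halves. The inclusion $\widetilde{\C}(q,m,r,\ell)\subseteq\C(q,m,r,\ell)$ is immediate from $\widetilde{M}\subseteq M$. For $r=q-1$, the congruence $\sum_j i_j\equiv r-1\pmod r$ used in $M$ becomes $\sum_j i_j\equiv q-2\pmod{q-1}$, and the hypothesis $\ell\equiv r-1\pmod r$ forces $\ell\equiv q-2\pmod{q-1}$, so this matches the congruence $\sum_j i_j\equiv \ell\pmod{q-1}$ used in $\widetilde{M}$. Hence $\widetilde{M}=M$ and $\widetilde{\C}=\C$.

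For part~(3), the case $r=q-1$ is automatic since then $n=\overline{n}$ and the two evaluation codes coincide. For $r<q-1$, I invoke the companion-matrix identity $\mathbf{M}^{\overline{n}}=\omega\mathbf{E}$ (proved exactly as $\mathbf{M}^{n}=\lambda\mathbf{E}$ in the lead-up to Theorem~\ref{thm:32}, using that $\omega=\beta^{\overline{n}}\in\gf(q)$ and that all eigenvalues of $\mathbf{M}^{\overline{n}}$ equal $\omega$). Then $\mathbf{e}\mathbf{M}^{j\overline{n}+i}=\omega^{j}\mathbf{e}\mathbf{M}^i$, and applying the scaling identity with $c=\omega^{j}$ gives
\[
f(\mathbf{e}\mathbf{M}^{j\overline{n}+i})=\omega^{j\ell}\,f(\mathbf{e}\mathbf{M}^i).
\]
Since $n=\tfrac{q-1}{r}\,\overline{n}$, this exhibits $\widetilde{\bc}_f$ as the concatenation of $\tfrac{q-1}{r}$ length-$\overline{n}$ blocks, the $j$-th being $\omega^{j\ell}\,\overline{\bc}_f$, which is the stated form.

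The main obstacle is part~(1): verifying that $g$ still lies in $\widetilde{M}$ after the substitution and the reduction modulo $x_j^q-x_j$. Without the scaling-identity viewpoint, one has to chase individual monomials through the substitution and a potentially tangled reduction process; with the identity in hand, both the invariance under the substitution and the degree bookkeeping become essentially automatic, and parts~(2) and~(3) then follow as fairly direct consequences of the same characterization.
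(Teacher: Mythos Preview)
Your proposal is correct and follows essentially the same approach as the paper. For part~(1) the paper simply says ``similar to Theorem~\ref{thm:32}'' and omits the details; your explicit verification that the linear substitution followed by reduction modulo $\langle x_j^q-x_j\rangle$ preserves both the degree bound and the congruence $\sum i_j\equiv \ell\pmod{q-1}$ is precisely the care needed to justify $g\in\widetilde{M}(q,m,r,\ell)$. Parts~(2) and~(3) match the paper line by line: the paper also derives part~(3) from $\mathbf{M}^{\overline{n}}=\omega\mathbf{E}$ and the identity $f(\omega^j\mathbf{e}\mathbf{M}^i)=\omega^{j\ell}f(\mathbf{e}\mathbf{M}^i)$, which is your scaling identity $f(c\mathbf{x})=c^{\ell}f(\mathbf{x})$ made explicit.
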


\begin{proof}
	1. The proof is similar to Theorem \ref{thm:32}, and the details are omitted here. 
	
	2. Note that $\widetilde{M}(q, m, r, \ell)\subseteq M(q, m, r, \ell)$ and  $\widetilde{M}(q, m, r, \ell)= M(q, m, r, \ell)$ for $r=q-1$. The desired result follows.
	
	3. If $r=q-1$, the desired result is obvious. If $r<q-1$, since $\beta$ is a primitive element of $\gf(q^m)$, $\omega:=\beta^{\overline{n}}$ is a primitive element of $\gf(q)$. Therefore, $\mathbf{M}^{\overline{n}}=\omega \mathbf{E}$. It follows that $$\mathbf{e}\mathbf{M}^{j  \overline{n}+i}=\omega^j \mathbf{e}\mathbf{M}^{i}$$ 
	for any $0\leq j\leq \frac{q-1}r-1$, $0\leq i\leq \overline{n}-1$. Let $f(x_1,x_2,\ldots, x_m)\in \widetilde{M}(q, m, r, \ell)$, then $$f(\mathbf{e}\mathbf{M}^{j\overline{n}+i})=f(\omega^{j} \mathbf{e}\mathbf{M}^{i})=\omega^{j\ell}\cdot f(\mathbf{e}\mathbf{M}^{i}).$$
	 Let $$\widetilde{\bc}_f=(f(\mathbf{e}),f(\mathbf{e} \mathbf{M}),\ldots, f(\mathbf{e} \mathbf{M}^{n-1}))$$ be the codeword corresponding to the polynomial $f(x_1,x_2,\ldots,x_m)$. Then
		$$\widetilde{\bc}_f=(\overline{\bc}_f\| \omega^{\ell}\cdot \overline{\bc}_f\|\cdots \|\omega^{(\frac{q-1}{r}-1)\ell}\cdot \overline{\bc}_f).$$
		The third desired result follows.
\end{proof}

Below we will prove that the code $\mathrm{P}(\widetilde{\C}(q, m, r, \ell))$ is scalar-equivalent to the projective Reed-Muller code $\mathrm{PRM}(q, m, \ell)$.

Let $T$ be the mapping from $\gf(q)[x_1,x_2,\ldots,x_m]$ to the quotient ring $$\gf(q)[x_1,x_2,\ldots,x_m]/\langle x_1^q-x_1,x_2^q-x_2,\ldots, x_m^q-x_m \rangle$$ defined by
\begin{align*}
T: \	\gf(q)[x_1,x_2,\ldots,x_m] &\rightarrow \gf(q)[x_1,x_2,\ldots,x_m]/\langle x_1^q-x_1,x_2^q-x_2,\ldots, x_m^q-x_m \rangle\\
f=\sum c_{i_1,i_2,\ldots,i_m}x_1^{i_1} x_2^{i_2}\cdots x_{m}^{i_m} &\mapsto \sum c_{i_1,i_2,\ldots,i_m}x_1^{i_1'} x_2^{i_2'}\cdots x_{m}^{i_m'} 
\end{align*}
where these $i_j'$ satisfy the following conditions:
\begin{enumerate}
\item If $i_j=0$, then $i_j'=0$.
\item If $i_j>0$, then $1\leq i_j'\leq q-1$ and $i_j'\equiv i_j\pmod{q-1}$.	
\end{enumerate}
For each $a\in \gf(q)$, $a^q=a$. It follows that $a^{i_j}=a^{i_j'}$. Consequently, $f(\mathbf{x})=T(f)(\mathbf{x})$ for any $\mathbf{x}\in \gf(q)^m$. 

Recall that $A(q, m, \ell)$ denotes the subspace of $\gf(q)[x_1, x_2,\dots, x_m]$ generated by all the homogeneous polynomials of degree $\ell$, where $\ell<(q-1)m$. 

 \begin{lemma}\label{lem:37}
 Let notation be the same as before. Let $\overline{n}=\frac{q^m-1}{q-1}$, where $m\geq 2$. Then $$\{ \mathbf{e} \mathbf{M}^i:0\leq i\leq \overline{n}-1 \}$$ is the set of points in $\PG(m-1,\gf(q))$.
 \end{lemma}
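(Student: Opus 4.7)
The plan is to verify three facts: each vector $\mathbf{e}\mathbf{M}^i$ for $0\le i\le \overline{n}-1$ is nonzero, the $\overline{n}$ vectors lie in pairwise distinct $1$-dimensional subspaces of $\gf(q)^m$, and there are exactly $\overline{n} = |\PG(m-1,\gf(q))|$ of them. Together these three facts show that the given set provides a complete set of coordinate representatives for the projective points.

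The key tool is already in hand: the excerpt establishes an explicit $\gf(q)$-linear isomorphism
\[
\phi:\gf(q)^m \longrightarrow \gf(q^m), \qquad \phi(\mathbf{e}\mathbf{M}^i)=(\mathbf{e}\mathbf{M}^i,\overline{\beta})=\beta^i,\qquad \phi(\mathbf{0})=0.
\]
First, since $\beta$ is a primitive element, $\beta^i\neq 0$ for every $0\le i\le \overline{n}-1$, so each $\mathbf{e}\mathbf{M}^i$ is nonzero. Second, suppose $\mathbf{e}\mathbf{M}^i = c\,\mathbf{e}\mathbf{M}^j$ for some $c\in\gf(q)^*$ and $0\le j\le i\le \overline{n}-1$. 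Applying $\phi$ (which is $\gf(q)$-linear), this equation becomes $\beta^i=c\beta^j$, i.e.\ $\beta^{i-j}=c\in\gf(q)^*$. Since $\gf(q)^*$ consists of the $(q-1)$-th roots of unity in $\gf(q^m)^*$, raising to the $(q-1)$-th power gives $\beta^{(q-1)(i-j)}=1$, hence $(q^m-1)\mid (q-1)(i-j)$, i.e.\ $\overline{n}\mid (i-j)$. Because $0\le i-j<\overline{n}$, this forces $i=j$ and consequently $c=1$.

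Thus the $\overline{n}$ vectors $\mathbf{e},\mathbf{e}\mathbf{M},\ldots,\mathbf{e}\mathbf{M}^{\overline{n}-1}$ are pairwise projectively inequivalent nonzero vectors in $\gf(q)^m$. Since $\PG(m-1,\gf(q))$ has exactly $\overline{n}=(q^m-1)/(q-1)$ points, each projective point must be represented by exactly one of these vectors, which is the desired conclusion. The argument is essentially routine once the identification $\mathbf{e}\mathbf{M}^i\leftrightarrow \beta^i$ has been set up; the only subtle point is invoking $\gf(q)$-linearity of $\phi$ to transport the scalar relation across the isomorphism, and so there is no serious obstacle to anticipate.
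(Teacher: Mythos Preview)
Your proof is correct and follows essentially the same approach as the paper: both arguments use the identification $\mathbf{e}\mathbf{M}^i\leftrightarrow\beta^i$ to transport a scalar relation $\mathbf{e}\mathbf{M}^i=c\,\mathbf{e}\mathbf{M}^j$ into $\beta^{i-j}\in\gf(q)^*$, and then conclude $\overline{n}\mid(i-j)$ from the order of $\beta$. Your write-up is slightly more explicit about the nonzeroness check and the final cardinality count, but these are the same ideas.
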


\begin{proof}
Suppose there are $0\leq i<j\leq \overline{n}-1$ such that $\mathbf{e}\mathbf{M}^i=\gamma \cdot  \mathbf{e} \mathbf{M}^j$, where $\gamma\in \gf(q)^*$. Then
\begin{align*}
	\beta^i&=(\mathbf{e}\mathbf{M}^i, \overline{\beta})\\
	&= (\gamma \cdot\mathbf{e}\mathbf{M}^j, \overline{\beta})\\
	&=\gamma \cdot(\mathbf{e}\mathbf{M}^j, \overline{\beta})\\
	&=\gamma \cdot \beta^j.
\end{align*}
It follows that $\beta^{j-i}\in \gf(q)^*$, which deduces that $\overline{n}\mid (j-i)$, a contradiction. Therefore, any two distinct elements in the set $\{ \mathbf{e} \mathbf{M}^i:0\leq i\leq \overline{n}-1 \}$ are linearly independent over $\gf(q)$. The desired result follows.
\end{proof}

According to Lemma \ref{lem:37}, the projective Reed-Muller code $\PRM(q, m, \ell)$ is scalar-equivalent to the code
$$\widehat{\C} (q, m, \ell)=\left\{\widehat{\bc}_f=(f(\mathbf{e}),f(\mathbf{e} \mathbf{M}),\ldots, f(\mathbf{e} \mathbf{M}^{\overline{n}-1})): f(x_1,x_2,\ldots, x_m)\in A(q, m, \ell ) \right\}, $$
where $\overline{n}=\frac{q^m-1}{q-1}$.

\begin{theorem}\label{THM:30}
Let notation be the same as before. Let $r-1\leq \ell \leq (q-1)(m-1)-1$ and $\ell \equiv r-1\pmod{r}$. Then we have the following hold.
\begin{enumerate}
\item The code $\mathrm{P}(\widetilde{\C}(q, m, r, \ell))$ has dimension 
$$\sum_{t \equiv \ell \pmod{q-1} \atop 0 < t \leq \ell} \left(  \sum_{j=0}^m (-1)^j \binom{m}{j} \binom{t-jq+m-1}{t-jq} \right).$$
\item The projective Reed-Muller code $\PRM(q, m, \ell)$ is scalar-equivalent to the code $$\mathrm{P}(\widetilde{\C}(q, m, r, \ell)).$$
In particular, if $r=q-1$, the code $\PRM(q, m, \ell)$ is scalar-equivalent to the $\lambda$-constacyclic code $\C(q, m, r, \ell)$.
\item Let $\ell= (q-1)\ell_1+\ell_0$, where $\ell_1\geq 0$ and $0<\ell_0<q-1$. then the code $\mathrm{P}(\widetilde{\C}(q, m, r, \ell))$ has minimum distance $$(q-\ell_0+1)q^{m-2-\ell_1}.$$ 
\item The $\lambda$-constacyclic code $\widetilde{\C}(q, m, r, \ell)$ has parameters $[n, k, d]$, where
\begin{align*}
n=&\frac{q^m-1}{r},\\	
k=&\sum_{t \equiv \ell \pmod{q-1} \atop 0 < t \leq \ell} \left(  \sum_{j=0}^m (-1)^j \binom{m}{j} \binom{t-jq+m-1}{t-jq} \right),\\
d=&\left(\frac{q-1}r\right)(q-\ell_0+1)q^{m-2-\ell_1}.
\end{align*}
\end{enumerate}
\end{theorem}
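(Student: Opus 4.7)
The plan is to reduce the whole statement to Theorem \ref{thm-PRMcode7} via part (2), which is the conceptual core. Parts (1) and (3) will then be immediate translations of the projective Reed--Muller dimension and minimum-distance formulas, while part (4) will follow by combining them with the block-repetition description in part 3 of Theorem \ref{THM:23}.

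For part (2), I would exploit the reduction map $T$ introduced just before Lemma \ref{lem:37} together with the identification of $\PRM(q,m,\ell)$ with $\widehat{\C}(q,m,\ell)$. The key observation is that $f(\mathbf{x})=T(f)(\mathbf{x})$ for every $\mathbf{x}\in\gf(q)^m$. For any homogeneous $f\in A(q,m,\ell)$, each monomial $\prod_{j} x_j^{i_j}$ of $f$ satisfies $\sum_j i_j=\ell$, so its image under $T$ has exponents $i_j'$ with $\sum_j i_j'\equiv \ell\pmod{q-1}$ and $\sum_j i_j'\leq \ell$; hence $T(f)\in \widetilde{M}(q,m,r,\ell)$, giving $\widehat{\C}(q,m,\ell)\subseteq \mathrm{P}(\widetilde{\C}(q,m,r,\ell))$. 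For the reverse inclusion I would start from a monomial $\mu(\mathbf{x})=\prod_j x_j^{i_j}\in \widetilde{M}(q,m,r,\ell)$ of total degree $\ell'\leq \ell$ and promote it to the degree-$\ell$ homogeneous polynomial $g(\mathbf{x})=x_{j_0}^{\ell-\ell'}\mu(\mathbf{x})$, where $j_0$ is any index with $i_{j_0}>0$; such a $j_0$ exists because $\ell\equiv r-1\pmod r$ with $r>1$ and $r\mid q-1$ forces $\ell\not\equiv 0\pmod{q-1}$, ruling out the constant monomial. Since $\ell-\ell'$ is a nonnegative multiple of $q-1$, the factor $x_{j_0}^{\ell-\ell'}$ equals $1$ whenever $x_{j_0}\neq 0$ (in particular whenever $\mu(\mathbf{x})\neq 0$), and both sides vanish otherwise, so $g$ and $\mu$ evaluate identically on $\gf(q)^m$. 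By Lemma \ref{lem:37}, both evaluation codes use the same projective-point set $\{\mathbf{e}\mathbf{M}^i:0\leq i\leq \overline{n}-1\}$, so the two inclusions together yield scalar equivalence with $\PRM(q,m,\ell)$; the $r=q-1$ special case is then immediate from part 3 of Theorem \ref{THM:23}, which gives $\widetilde{\C}=\mathrm{P}(\widetilde{\C})$ in that range.

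Parts (1) and (3) are then immediate from Theorem \ref{thm-PRMcode7}: the dimension formula for $\PRM(q,m,\ell)$ is exactly the stated sum, and with the decomposition $\ell-1=\ell_1(q-1)+(\ell_0-1)$, which is valid because $0<\ell_0<q-1$, we read off $u=\ell_1$ and $v=\ell_0-1$, whence the minimum distance $(q-v)q^{m-2-u}$ becomes $(q-\ell_0+1)q^{m-2-\ell_1}$. For part (4), the length $n=(q^m-1)/r$ is by definition, and the description in part 3 of Theorem \ref{THM:23} expresses every codeword of $\widetilde{\C}(q,m,r,\ell)$ as a concatenation of $(q-1)/r$ nonzero scalar multiples of the same codeword of $\mathrm{P}(\widetilde{\C}(q,m,r,\ell))$; this preserves the dimension and multiplies every nonzero weight by $(q-1)/r$, giving the stated $d$.

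The step I expect to be the main obstacle is making the reverse inclusion in part (2) fully rigorous: one must confirm that the boosting factor $x_{j_0}^{\ell-\ell'}$ really extends $\mu$ to a genuine homogeneous degree-$\ell$ polynomial without perturbing its evaluations on the projective representatives, which hinges on the identity $a^{q-1}=1$ for $a\in\gf(q)^*$ and on the careful exclusion of the constant monomial from $\widetilde{M}(q,m,r,\ell)$ by the congruence $\ell\equiv r-1\pmod r$.
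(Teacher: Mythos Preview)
Your proof is correct but inverts the paper's logical order in an instructive way. The paper proves part (1) first and independently, by a counting argument parallel to Theorem~\ref{thm-mycode5221}, and then obtains (2) from a single inclusion $\widehat{\C}(q,m,\ell)\subseteq \mathrm{P}(\widetilde{\C}(q,m,r,\ell))$ via the reduction map $T$ together with the dimension equality just established. You instead prove (2) first by exhibiting both inclusions directly, and then read off (1) from the known $\PRM$ dimension in Theorem~\ref{thm-PRMcode7}. Your boosting construction $g(\mathbf{x})=x_{j_0}^{\ell-\ell'}\mu(\mathbf{x})$ for the reverse inclusion is valid: since $\ell-\ell'$ is a multiple of $q-1$ and $i_{j_0}>0$, the factor $x_{j_0}^{\ell-\ell'}$ equals $1$ wherever $\mu$ is nonzero, and the constant monomial is indeed excluded because $\ell\equiv r-1\not\equiv 0\pmod{r}$ forces $\ell\not\equiv 0\pmod{q-1}$. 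This buys you a self-contained equality of codes without a separate combinatorial dimension count; the paper's route is shorter but relies on that count. One small omission: for the $r=q-1$ clause of (2) you need both parts 2 and 3 of Theorem~\ref{THM:23} (the former gives $\C=\widetilde{\C}$, the latter $\widetilde{\C}=\mathrm{P}(\widetilde{\C})$). Parts (3) and (4) are treated the same way in both proofs.
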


\begin{proof}
1. Similar to Theorem \ref{thm:32}, one can prove that 
\begin{align*}
		\dim(\mathrm{P}(\widetilde{\C}(q, m, r, \ell)))&=|\{(i_1,i_2,\ldots,i_m)\in \{0,1,\cdots,q-1 \}^m: \sum_{k=1}^{m}i_k\equiv \ell \pmod{q-1},  \sum_{k=1}^{m}i_k\leq \ell \}|.
	\end{align*}	
	The remaining proofs are similar to Theorem \ref{thm-mycode5221}, and details are omitted here.
	
	2. We claim that $\widehat{\C}(q, m, \ell)=\mathrm{P}(\widetilde{\C}(q, m, r, \ell))$. It follows from Theorem \ref{thm-PRMcode7} and Result 1 that $\dim(\widehat{\C}(q, m, \ell))=\dim(\mathrm{P}(\widetilde{\C}(q, m, r, \ell)))$. Therefore, in order to prove the desired conclusion, we only need to prove $\widehat{\C}(q, m, \ell)\subseteq \mathrm{P}(\widetilde{\C}(q, m, r, \ell))$. Let $f(x_1,x_2,\ldots,x_m)\in A(q, m, \ell)$ and let
	$$ \widehat{\bc}_f=(f(\mathbf{e}),f(\mathbf{e} \mathbf{M}),\ldots, f(\mathbf{e} \mathbf{M}^{\overline{n}-1}))$$
	be the codeword corresponding to the polynomial $f(x_1,x_2,\ldots,x_m)$. Since $f(\mathbf{x})=T(f)(\mathbf{x})$ for any $ \mathbf{x}\in \gf(q)^m$, we have $\widehat{\bc}_{f}=(T(f)(\mathbf{e}),T(f)(\mathbf{e} \mathbf{M}),\ldots, T(f)(\mathbf{e} \mathbf{M}^{n-1}))$.
	
	 If $f(x_1,x_2,\ldots,x_m)=\sum c_{i_1,i_2,\ldots,i_m} x_1^{i_1}x_2^{i_2}\cdots x_m^{i_m}\in A(q, m, \ell)$, then $T(f)=\sum c_{i_1,i_2,\ldots,i_m} x_1^{i_1'}x_2^{i_2'}\cdots x_m^{i_m'}$ where $i_j'=i_j=0$ or $1\leq i_j'\leq q-1$ and such that $i_j'\equiv i_j\pmod{q-1}$. It is clear that $$\sum_{j=1}^{m} i_j'\equiv \sum_{j=1}^{m} i_j\equiv \ell  \pmod{q-1} ,$$
	 and $\deg( T(f) )\leq \ell$. Therefore, $T(f)\in \widetilde{M}(q, m, r, \ell)$. It follows that $$\widehat{\bc}_{f}=(T(f)(\mathbf{e}),T(f)(\mathbf{e} \mathbf{M}),\ldots, T(f)(\mathbf{e} \mathbf{M}^{n-1}))\in \mathrm{P}(\widetilde{\C}(q, m, r, \ell)).$$
	Consequently, $\widehat{\C}(q, m, \ell)\subseteq \mathrm{P}(\widetilde{\C}(q, m, r, \ell))$. This proves the claim.
	
	Note that the code $\PRM(q, m, \ell)$ is scalar-equivalent to the code $\widehat{\C}(q, m, \ell)$, and $\mathrm{P}(\widetilde{\C}(q, m, r, \ell)) = \C(q, m, r, \ell)$ for $r=q-1$. The desired result follows.
	
	3. The desired result follows from Result 2 and Theorem \ref{thm-PRMcode7}. 
	
	4. By Result 3 of Theorem \ref{THM:23}, $\dim(\widetilde{\C}(q, m, r, \ell)=\dim(\mathrm{P}(\widetilde{\C}(q, m, r, \ell)))$ and  $$d(\widetilde{\C}(q, m, r, \ell)= \left(\frac{q-1}r\right)\cdot d(\mathrm{P}(\widetilde{\C}(q, m, r, \ell)))$$
	The desired result follows from Result 1 and Result 3.
\end{proof}

Note that $\widetilde{\C}(q, m, r, \ell)\subseteq \C(q, m, r, \ell)$, we have $d(\C(q, m, r, \ell))\leq d(\widetilde{\C}(q, m, r, \ell))$. By Result 4 of Theorem \ref{THM:30}, we can improve the upper bound in (\ref{eqn-LB11}).

\begin{theorem}
Let $r>2$ and $r\mid(q-1)$. Let $\ell= (q-1) \ell_1+\ell_0\leq (q-1)(m-1)-1$, where $\ell_1 \leq m-2$, $0 \leq \ell_0\leq q-2$ and $\ell_0\equiv r-1\pmod{r}$. Then 
 \begin{equation}\label{EEQ15}
 	\frac{(q-\ell_0)q^{m-\ell_1-1}-2}r+1 \leq d(\C(q,m, r, \ell)) \leq \left(\frac{q-1}r\right)(q-\ell_0+1)q^{m-2-\ell_1}.
 \end{equation}
\end{theorem}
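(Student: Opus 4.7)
The plan is to read off both inequalities as immediate corollaries of two results already in the preceding theorems, treating the heavy lifting as black boxes. For the lower bound, no new argument is needed: Theorem \ref{thm-april17} established
\[
d(\C(q,m,r,\ell)) \geq \frac{(q-\ell_0)q^{m-\ell_1-1}-2}{r}+1
\]
over the range $0\leq \ell <(q-1)m-1$, and the hypothesis $\ell_1\leq m-2$ of the present theorem forces $\ell \leq (q-1)(m-1)-1$, which is a subset of that range. So I would simply invoke Theorem \ref{thm-april17}; its proof is a straightforward application of the BCH bound for constacyclic codes to the block of consecutive zeros $\{\beta^{1+rj}\colon 0\leq j\leq \tfrac{(q-\ell_0)q^{m-\ell_1-1}-2}{r}-1\}$.

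For the upper bound, my approach is a subcode argument. The definition of $\widetilde{M}(q,m,r,\ell)$ imposes $\sum_{j=1}^{m} i_j \equiv \ell \pmod{q-1}$, whereas $M(q,m,r,\ell)$ imposes only $\sum_{j=1}^{m} i_j \equiv r-1 \pmod{r}$. Since $r\mid(q-1)$ and $\ell\equiv r-1\pmod{r}$ by hypothesis, the former condition implies the latter, so $\widetilde{M}(q,m,r,\ell)\subseteq M(q,m,r,\ell)$. This yields the subcode inclusion $\widetilde{\C}(q,m,r,\ell)\subseteq \C(q,m,r,\ell)$, and therefore
\[
d(\C(q,m,r,\ell)) \leq d(\widetilde{\C}(q,m,r,\ell)) = \left(\frac{q-1}{r}\right)(q-\ell_0+1)q^{m-2-\ell_1},
\]
where the equality is Result 4 of Theorem \ref{THM:30}.

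The main obstacle is not in the final stitching but inside Theorem \ref{THM:30}, which I am treating as already proved. Its hard step is Result 2, identifying $\mathrm{P}(\widetilde{\C}(q,m,r,\ell))$ with the projective Reed--Muller code $\PRM(q,m,\ell)$ up to scalar equivalence via the bijection $\mathbf{e}\mathbf{M}^i\mapsto \beta^i$ from Lemma \ref{lem:37} together with the exponent-reduction operator $T$ that sends each positive $i_j$ to its representative in $\{1,\ldots,q-1\}$ modulo $q-1$. Once that identification is in hand, the minimum distance $(q-\ell_0+1)q^{m-2-\ell_1}$ is read off Theorem \ref{thm-PRMcode7} with $u=\ell_1$ and $v=\ell_0-1$ (which uses $\ell_0\geq 1$, guaranteed by $r>2$ combined with $\ell_0\equiv r-1\pmod r$, forcing $\ell_0\geq r-1\geq 2$), and the factor $\tfrac{q-1}{r}$ arises because Result 3 of Theorem \ref{THM:23} expresses every codeword of $\widetilde{\C}(q,m,r,\ell)$ as the concatenation of $\tfrac{q-1}{r}$ scalar copies of a single codeword of $\mathrm{P}(\widetilde{\C}(q,m,r,\ell))$. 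Finally, one may verify that $(q-1)(q-\ell_0+1)<(q-\ell_0+r-2)q$ for $r>2$, which confirms that the new upper bound genuinely improves the one in (\ref{eqn-LB11}).
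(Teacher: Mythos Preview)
Your proposal is correct and follows exactly the paper's approach: the lower bound is inherited verbatim from Theorem \ref{thm-april17}, and the upper bound is obtained from the subcode inclusion $\widetilde{\C}(q,m,r,\ell)\subseteq \C(q,m,r,\ell)$ (Result 2 of Theorem \ref{THM:23}) combined with the exact minimum distance of $\widetilde{\C}(q,m,r,\ell)$ given in Result 4 of Theorem \ref{THM:30}. Your additional checks that $\ell_0\geq r-1\geq 2$ so that Result 3/4 of Theorem \ref{THM:30} applies, and that the new upper bound strictly improves the old one for $r>2$, are correct and make explicit what the paper leaves implicit.
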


When $r=q-1$, the minimum distance of the code $\C(q, m, r, \ell)$ just takes the upper bound in (\ref{EEQ15}).

\begin{corollary}\label{thm:38}
	Let $q\geq 3$ be a prime power and $r=q-1$. Let $\ell=(q-1)\ell_1+q-2$, where $0\leq \ell_1\leq m-2$. Then the code $\C(q, m, q-1, \ell)$ is scalar-equivalent to $\PRM(q, m, \ell)$. Furthermore, $$d(\C(q, m, q-1, \ell))=3\cdot q^{m-2-\ell_1}.$$
\end{corollary}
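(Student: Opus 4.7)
The plan is to derive this corollary as a clean specialization of Theorems \ref{THM:23} and \ref{THM:30} to the case $r = q-1$, $\ell_0 = q-2$, requiring essentially no new computation.

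My first step is to verify that the hypotheses of those theorems are met. Since $r = q-1$ and $\ell = (q-1)\ell_1 + (q-2)$, we have $\ell \equiv q-2 = r-1 \pmod{r}$. Combined with $0 \leq \ell_1 \leq m-2$, this shows $r-1 \leq \ell \leq (q-1)(m-1)-1$, which is precisely the range required in the hypotheses of Theorem \ref{THM:30}.

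The scalar-equivalence half of the corollary is then the ``In particular'' clause of Result 2 of Theorem \ref{THM:30}, so nothing further is needed. The underlying reason, traceable through Theorem \ref{THM:23}, is that when $r = q-1$ the defining congruences $\sum i_j \equiv r-1 \pmod{r}$ and $\sum i_j \equiv \ell \pmod{q-1}$ collapse to the same condition, so $\widetilde{M}(q,m,r,\ell) = M(q,m,r,\ell)$; together with $n = \overline{n}$, this forces the three codes $\C(q,m,r,\ell)$, $\widetilde{\C}(q,m,r,\ell)$, and $\mathrm{P}(\widetilde{\C}(q,m,r,\ell))$ to coincide, at which point Result 2 of Theorem \ref{THM:30} identifies this common code with $\PRM(q,m,\ell)$ up to scalar equivalence.

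For the minimum distance, I would proceed in either of two equivalent ways. The more transparent is to use that scalar equivalence preserves minimum weight and then apply Theorem \ref{thm-PRMcode7} to $\PRM(q,m,\ell)$: writing $\ell - 1 = u(q-1) + v$ with $0 \leq v < q-1$ one reads off $u = \ell_1$ and $v = q-3$, where the hypothesis $q \geq 3$ is exactly what ensures $v \geq 0$. The formula then gives $(q-v)q^{m-2-u} = 3\, q^{m-2-\ell_1}$. The alternative is to invoke Result 3 of Theorem \ref{THM:30} directly with $\ell_0 = q-2$, obtaining $(q - \ell_0 + 1)q^{m-2-\ell_1} = 3\, q^{m-2-\ell_1}$. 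No step is a genuine obstacle; the work is entirely bookkeeping, and the only subtlety worth flagging is the role of $q \geq 3$, which makes the specialization sensible at both the level of $r = q-1 \geq 2$ and the case split in Theorem \ref{thm-PRMcode7}.
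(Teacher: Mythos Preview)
Your proposal is correct and follows essentially the same route as the paper: when $r=q-1$ one has $\C(q,m,r,\ell)=\mathrm{P}(\widetilde{\C}(q,m,r,\ell))$, and then both the scalar-equivalence and the minimum distance are read off directly from Theorem \ref{THM:30}. Your treatment is simply more explicit, in particular in checking the hypothesis $r-1\leq \ell\leq (q-1)(m-1)-1$ and in flagging where $q\geq 3$ is used.
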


\begin{proof}
	 When $r=q-1$, $\C(q, m, r, \ell)=\mathrm{P}(\widetilde{\C}(q, m, r, \ell))$. The desired result follows directly from Theorem \ref{THM:30}.
\end{proof}

Corollary \ref{thm:38} shows that the constacyclic code $\C(q, m, q-1,(q-1)\ell_1+q-2)$ is scalar-equivalent to the projective Reed-Muller code $\PRM(q, m,(q-1)\ell_1+q-2)$, $0\leq \ell_1\leq m-2$. Although the two $q$-ary codes $\C(q, m, q-1, (q-1)\ell_1+q-2)$ and $\PRM(q, m,(q-1)\ell_1+q-2)$ are scalar-equivalent, the former is more interesting, as the former is a constacyclic code but the later is a linear code.

\begin{example}
	Let $(q, m, r, \ell)=(3, 4, 2, 1)$. Let $\beta$ be the primitive element of $\gf(3^4)$ with $\beta^4+2\beta^3+2=0$. Then $\C(3, 4, 2, 1)$ has parameters $[40, 4, 27]$ and is distance-optimal. 	
\end{example}

When $\ell_1=m-2$ and $\ell_0=r-1$, it is easily verified that the upper and lower bounds in (\ref{EEQ15}) are equal. Then we have the following conclusion.

\begin{corollary}\label{THM:34}
Let $\ell= (q-1)(m-2)+r-1$, where $m\geq 2$. Then 
\begin{eqnarray*}
 d(\C(q,m, r, \ell))=\frac{(q-1)(q-r+2)}{r}.
\end{eqnarray*}  
\end{corollary}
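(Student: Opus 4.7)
My plan is to obtain the exact value of $d(\C(q,m,r,\ell))$ by matching the lower bound of Theorem~\ref{thm-april17} against the upper bound coming from the subcode inclusion $\widetilde{\C}(q,m,r,\ell) \subseteq \C(q,m,r,\ell)$ whose minimum distance is pinned down by Result~4 of Theorem~\ref{THM:30}. First I would write $\ell$ in the canonical form of Theorem~\ref{thm-april17}, reading off $\ell_1 = m-2$ and $\ell_0 = r-1$. A quick sanity check shows the hypotheses all hold: $0 \le \ell_0 = r-1 \le q-2$ (using $r \mid q-1$), $\ell_0 \equiv r-1 \pmod r$, and $\ell = (q-1)m - (2q-r-1) < (q-1)m-1$ as needed; similarly $\ell \le (q-1)(m-1)-1$ since $r \le q-1$, so Theorem~\ref{THM:30} also applies.

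Plugging $\ell_1 = m-2$, $\ell_0 = r-1$ into the lower bound of Theorem~\ref{thm-april17} gives
\[
d(\C(q,m,r,\ell)) \;\geq\; \frac{(q-r+1)q - 2}{r} + 1 \;=\; \frac{q^2 - (r-1)q + r - 2}{r}.
\]
For the upper bound, the inclusion $\widetilde{\C}(q,m,r,\ell) \subseteq \C(q,m,r,\ell)$ yields $d(\C) \le d(\widetilde{\C})$, and substituting the same $\ell_1, \ell_0$ into Result~4 of Theorem~\ref{THM:30} gives
\[
d(\C(q,m,r,\ell)) \;\leq\; \frac{(q-1)(q-r+2)\, q^{m-2-\ell_1}}{r} \;=\; \frac{(q-1)(q-r+2)}{r}.
\]

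The remaining task is the routine algebraic check that the two bounds coincide, namely that $(q-1)(q-r+2) = q^2 - (r-1)q + r - 2$, which follows by direct expansion: $(q-1)(q-r+2) = q^2 - qr + 2q - q + r - 2 = q^2 - (r-1)q + r - 2$. This sandwiches $d(\C(q,m,r,\ell))$ between two equal quantities and yields the claimed value $\frac{(q-1)(q-r+2)}{r}$. There is no real obstacle; the only point requiring a moment of care is making sure the hypotheses of Theorem~\ref{thm-april17} and Theorem~\ref{THM:30} are simultaneously met for the boundary choice $\ell_1 = m-2$, $\ell_0 = r-1$, which they are for all $m \ge 2$ and all admissible $r$.
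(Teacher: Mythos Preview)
Your proof is correct and essentially the same as the paper's: the paper observes that for $\ell_1=m-2$, $\ell_0=r-1$ the lower bound of Theorem~\ref{thm-april17} and the improved upper bound coming from $\widetilde{\C}(q,m,r,\ell)\subseteq \C(q,m,r,\ell)$ (packaged there as inequality~(\ref{EEQ15})) coincide, which is exactly your sandwich argument invoking Theorem~\ref{thm-april17} and Result~4 of Theorem~\ref{THM:30} directly. If anything, your version is marginally cleaner, since (\ref{EEQ15}) is stated only for $r>2$, whereas your appeal to Theorem~\ref{THM:30} covers all admissible $r$ uniformly (the case $r=2$ being otherwise handled in the paper via Corollary~\ref{thm:35}).
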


The following problem is interesting and worth of investigation.

\begin{open} 
Let $q\geq 7$ and $m\geq 2$. Let $r$ be a divisor of $q-1$ and $2<r<q-1$. Let $\ell=r \ell_1+r-1 $, where $ 1\leq \ell_1 \leq (\frac{q-1}{r})m-3$. Determine the minimum distance of the code $\C(q, m, r, \ell )$ or improve the lower bound in (\ref{eqn-LB11}).
\end{open} 

We have the following results about the dual code of the constacyclic code $\C(q, m, r, \ell)$.

\begin{theorem}\label{thm:40}
	Let $\ell=r\ell_1+r-1$, where $0\leq \ell_1\leq (\frac{q-1}r)m-2$. Then the dual code $\C(q, m, r, \ell)^{\bot}$ of the constacyclic code $\C(q, m, r, \ell)$ is the $\lambda^{-1}$-constacyclic code of length $\frac{q^m-1}r$ over $\gf(q)$ with generator polynomial 
	\begin{align*}
	g_{(q, m, r, \ell)}^{\bot}(x)=\prod_{i\in \Gamma_{(q,N,r)}^{(r-1)}  \atop \wt_q(i)\leq \ell }\m_{\beta^i}(x),
	\end{align*}
	where $\Gamma_{(q,N,r)}^{(r-1)}=\{ i\in \Gamma_{(q,N)}: \wt_q(i)\equiv r-1\pmod{r} \}$.	In particular, if $r=2$, then $$\C(q, m, r, \ell)^{\bot}=\C(q, m, r, (q-1)m-\ell-r ).$$ 
 \end{theorem}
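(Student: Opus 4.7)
The plan is to read off the dual's generator polynomial from Lemma~\ref{lem-sdjoin2}, translate the zero set under reciprocation into the form appearing in the statement, and then handle the $r=2$ case by direct comparison of defining sets.

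First, I would identify the check polynomial of $\C(q,m,r,\ell)$. By the definition of the defining set $D_{(q,m,r,\ell)}$, the check polynomial $h(x)=(x^n-\lambda)/g_{(q,m,r,\ell)}(x)$ has zero set $\{\beta^i:i\in H\}$, where
$$H=\{i\in\Z_N:\wt_q(i)\geq (q-1)m-\ell,\ i\equiv 1\pmod{r}\}.$$
By Lemma~\ref{lem-sdjoin2}, the dual $\C(q,m,r,\ell)^{\perp}$ is a $\lambda^{-1}$-constacyclic code of length $n$, and up to a unit its generator polynomial $\widehat{h}(x)=h_0^{-1}x^{k}h(x^{-1})$ equals $\prod_{i\in H}(x-\beta^{-i})$.

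Next, I would rewrite this root set in the form $\{\beta^j:j\in J\}$ via $j=N-i$. Since $r\mid q-1$, we have $N=q^m-1\equiv 0\pmod{r}$, so $j\equiv -i\equiv -1\equiv r-1\pmod{r}$. Together with the standard identity
$$\wt_q(N-i)=(q-1)m-\wt_q(i)\qquad\text{for }1\leq i\leq N-1,$$
this yields
$$J=\{j\in\Z_N:\wt_q(j)\leq \ell,\ j\equiv r-1\pmod{r}\}.$$
(The case $i=0$ does not arise, since $r\geq 2$ rules out $0\equiv 1\pmod{r}$.) Because $r\mid q-1$ forces $\wt_q(j)\equiv j\pmod{r}$, the congruence $j\equiv r-1\pmod{r}$ is equivalent to $\wt_q(j)\equiv r-1\pmod{r}$. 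Both conditions defining $J$ are invariant under multiplication by $q$ modulo $N$, so $J$ is a union of $q$-cyclotomic cosets modulo $N$; grouping by coset leaders in $\Gamma_{(q,N,r)}^{(r-1)}$ gives
$$g_{(q,m,r,\ell)}^{\perp}(x)=\prod_{i\in\Gamma_{(q,N,r)}^{(r-1)} \atop \wt_q(i)\leq \ell}\m_{\beta^i}(x).$$
A brief sanity check confirms that this polynomial divides $x^n-\lambda^{-1}$: for every $j\in J$ we have $\beta^{jn}=\lambda^{j}=\lambda^{r-1}=\lambda^{-1}$.

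For the second statement, when $r=2$ we have $\lambda^{-1}=\lambda=-1$, so $\C(q,m,r,\ell)^{\perp}$ is itself a $\lambda$-constacyclic code, and I would compare it with $\C(q,m,r,\ell')$ for $\ell'=(q-1)m-\ell-2$ via their defining sets. The defining set of $\C(q,m,r,\ell')$ with respect to $\beta$ is
$$D_{(q,m,r,\ell')}=\{i\in\Z_N:\wt_q(i)<\ell+2,\ \wt_q(i)\equiv 1\pmod 2\}.$$
Since $\ell=2\ell_1+1$ is odd and $\wt_q(i)$ must be odd, the bound $\wt_q(i)\leq \ell+1$ is equivalent to $\wt_q(i)\leq \ell$; hence $D_{(q,m,r,\ell')}=J$, and the two codes share the same generator polynomial.

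The main obstacle is the bookkeeping in the second step: verifying that the involution $i\mapsto N-i$ carries $q$-cyclotomic cosets modulo $N$ to $q$-cyclotomic cosets (needed to re-express the product in terms of minimal polynomials $\m_{\beta^i}(x)$), together with the careful use of the weight-complement identity and the congruence $\wt_q(i)\equiv i\pmod r$. Once these observations are in place, the rest of the argument is a routine application of Lemma~\ref{lem-sdjoin2}.
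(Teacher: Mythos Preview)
Your proposal is correct and follows essentially the same route as the paper: both identify the check polynomial's zero set, pass to the reciprocal via Lemma~\ref{lem-sdjoin2}, apply the substitution $j=N-i$ together with the identities $N\equiv 0\pmod r$ and $\wt_q(N-i)=(q-1)m-\wt_q(i)$ to obtain the claimed product over $\Gamma_{(q,N,r)}^{(r-1)}$, and handle the $r=2$ case by the parity observation that $\wt_q(i)\le\ell$ and $\wt_q(i)<\ell+2$ coincide when both $\wt_q(i)$ and $\ell$ are odd. Your write-up is slightly more explicit about the bookkeeping (invariance of $J$ under $q$-multiplication, the check that $0\notin H$, the divisibility of $x^n-\lambda^{-1}$), but the argument is the same.
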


\begin{proof}
It is clear that
\begin{align*}
	x^n-\lambda &=\prod_{i\in \Gamma_{(q,N,r)}^{(1)} }\m_{\beta^i}(x)\\
	&=\prod_{i\in \Gamma_{(q,N,r)}^{(1)}  \atop \wt_q(i)<(q-1)m-\ell }\m_{\beta^i}(x)  \prod_{i\in \Gamma_{(q,N,r)}^{(1)}  \atop \wt_q(i)\geq (q-1)m-\ell }\m_{\beta^i}(x).
	\end{align*}
It follows that the generator polynomial of $\C(q, m, r, \ell)^{\bot}$ is 
\begin{align*}
	g_{(q, m, r, \ell)}^{\bot}(x):&=\prod_{i\in \Gamma_{(q,N,r)}^{(1)}  \atop \wt_q(i)\geq (q-1)m-\ell } \widehat{\m_{\beta^i}}(x)\\
	&=\prod_{i\in \Gamma_{(q,N,r)}^{(1)}  \atop \wt_q(i)\geq (q-1)m-\ell }\m_{\beta^{N-i}}(x),
	\end{align*}
	where $\widehat{\m_{\beta^i}}(x)$ denotes the reciprocal polynomial of $\m_{\beta^i}(x)$. Since $r\mid N$, $i\equiv 1\pmod{r}$ if and only if $N-i\equiv r-1\pmod {r}$. Note that $$\wt_q(N-i)=(q-1)m-\wt_q(i),$$
	then $\wt_q(i)\geq (q-1)m-\ell$ if and only if $\wt_q(N-i)\leq \ell$. Therefore,
	\begin{align*}
	g_{(q, m, r, \ell)}^{\bot}(x)=\prod_{i\in \Gamma_{(q,N,r)}^{(r-1)}  \atop \wt_q(i)\leq \ell }\m_{\beta^i}(x).
	\end{align*}
    When $r=2$, it is clear that $g_{(q, m, r, \ell)}^{\bot}(x)=g_{(q, m, r, (q-1)m -\ell -r)}(x)$. This completes the proof.
\end{proof}

\begin{theorem}
Let $r>1$ and $r\mid (q-1)$. Let $\ell=r \ell_1+r-1$, where $0 \leq \ell_1 \leq (\frac{q-1}r)m-2 $. Then 
$$
\dim(\C(q,m, r, \ell)^\perp)=\frac{q^m-1}{r}-\sum_{t=0}^{\ell_1}\sum_{j=0}^m (-1)^j \binom{m}{j} \binom{tr+ r-1-jq+m-1}{tr+ r-1-jq}
$$ 
and 
\begin{eqnarray}
d(\C(q,m, r, \ell )^\perp) \geq \left\lfloor \left(\frac{\ell'_0+1}{r}\right)q^{\ell'_1} \right\rfloor+1,
\end{eqnarray}
where $\ell'_0, \ell'_1$ such that $\ell=(q-1)\ell'_1+\ell'_0$ and $0\leq \ell'_0\leq q-2$.
\end{theorem}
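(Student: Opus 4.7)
The dimension claim requires essentially no work: since $\dim \C(q,m,r,\ell)+\dim \C(q,m,r,\ell)^\perp=n=(q^m-1)/r$, subtracting the formula of Theorem \ref{thm-mycode5221} from $n$ gives exactly the stated expression.

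For the distance bound, the plan is to apply Lemma \ref{lem-BCHbound} to $\C(q,m,r,\ell)^\perp$, viewed as a $\lambda^{-1}$-constacyclic code of length $n$ with primitive $rn$-th root of unity $\beta':=\beta^{-1}$ (so $(\beta')^n=\lambda^{-1}$). By Theorem \ref{thm:40}, $\beta^i$ is a zero of the generator polynomial of $\C(q,m,r,\ell)^\perp$ whenever $i\equiv r-1\pmod r$ and $\wt_q(i)\le \ell$. The key combinatorial step is the observation that every integer $j$ with $0\le j\le (\ell'_0+1)q^{\ell'_1}-1$ has $q$-weight at most $\ell$: writing $j=aq^{\ell'_1}+b$ with $0\le b<q^{\ell'_1}$ forces $a\le \ell'_0$, whence $\wt_q(j)=a+\wt_q(b)\le \ell'_0+(q-1)\ell'_1=\ell$.

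Set $t:=\lfloor(\ell'_0+1)q^{\ell'_1}/r\rfloor$. For each $k\in\{1,2,\dots,t\}$ the integer $i_k:=rk-1$ lies in the range above and satisfies $i_k\equiv r-1\pmod r$, so $\beta^{i_k}$ is a zero of the generator polynomial of $\C(q,m,r,\ell)^\perp$. Using $\beta^{rn}=\lambda^r=1$, I rewrite these zeros in terms of the primitive root $\beta'$ via
$$\beta^{i_k}=\beta^{rk-1-rn}=(\beta')^{1+r(n-k)}.$$
Taking $e=1$ and $h=n-t$ in Lemma \ref{lem-BCHbound}, the exponents $1+r(n-k)$ for $k=t,t-1,\dots,1$ form the consecutive block $\{1+rh':h'=n-t,n-t+1,\dots,n-1\}$, giving $\delta-1=t$ consecutive BCH zeros for $\C(q,m,r,\ell)^\perp$ with respect to $\beta'$. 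The BCH bound then yields $d(\C(q,m,r,\ell)^\perp)\ge t+1=\lfloor(\ell'_0+1)q^{\ell'_1}/r\rfloor+1$.

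The main nontrivial step is spotting this precise arithmetic progression of zeros hidden inside the defining set; the rest of the argument is bookkeeping. The side conditions needed to legitimately invoke Lemma \ref{lem-BCHbound} all hold under the stated hypotheses: $\gcd(1,n)=1$ is trivial, $t\ge 1$ follows because $\ell=r\ell_1+r-1\ge r-1$ forces $(\ell'_0+1)q^{\ell'_1}\ge r$, and the assumption $\ell_1\le\frac{q-1}{r}m-2$ (equivalently $\ell\le(q-1)m-r-1$) keeps $(\ell'_0+1)q^{\ell'_1}<rn$, ensuring $2\le t+1\le n$. I do not anticipate any obstacle beyond correctly executing the change of primitive root from $\beta$ to $\beta'$.
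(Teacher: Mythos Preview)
Your proof is correct and follows essentially the same approach as the paper's. Both identify the same arithmetic progression $\{r-1+rj:0\le j\le t-1\}$ of exponents giving zeros of the dual generator polynomial via the observation that every integer below $(\ell'_0+1)q^{\ell'_1}$ has $q$-weight at most $\ell$, and both conclude with the BCH bound; you are simply more explicit about the change of primitive root $\beta\mapsto\beta^{-1}$ needed to put the zeros into the form required by Lemma~\ref{lem-BCHbound} for the $\lambda^{-1}$-constacyclic dual, a step the paper leaves implicit.
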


\begin{proof} 
The desired conclusion on the dimension of the dual code follows from Theorem \ref{thm-mycode5221}. Below we prove the lower bound on the minimum distance of the dual code.

Suppose $\ell=(q-1)\ell_1'+\ell_0'$, where $0\leq \ell_0'\leq q-2$. Let $H$ be the smallest integer with $\wt_q(H)=\ell$. Then 
$$ 
H=\ell_0'  q^{\ell_1'} + \sum_{i=0}^{\ell_1'-1} (q-1)q^i = (\ell_0'+1) q^{\ell_1'}-1.  
$$
It is easily verified that every integer $u$ with $0 < u \leq H$ satisfies $\wt_q(u)\leq \ell.$ Define 
$$ 
B=\left\{ r-1+r j: 0 \leq j \leq \left\lfloor \left(\frac{\ell_0'+1}{r}\right)q^{\ell_1'} \right\rfloor -1  \right\}.  
$$
Then $\beta^i$ is a zero of $g_{(q, m, r, \ell)}^{\bot}(x)$ for each $i \in B$. The desired bound then follows from Lemma \ref{lem-BCHbound}. This completes the proof. 
\end{proof}

When $r=2$ or $q-1$, the minimum distance of $\C(q, m, r,\ell)^{\bot}$ can be completely determined.

\begin{theorem}\label{thm:44}
Let $q$ be an odd prime power and $r=2$. Let $1\leq \ell=(q-1)\ell_1+\ell_0<(q-1)m-1$, where $m\geq 2$, $0\leq \ell_0\leq q-2$ and $\ell_0\equiv r-1\pmod {r}$. Then 
$$d(\C(q, m, r, \ell)^{\bot})=\begin{cases}
	\left(\frac{3+\ell_0}2\right) q^{\ell_1}~&{\rm if}~\ell_0<q-2,\\
	q^{\ell_1+1}~&{\rm if}~\ell_0=q-2.
\end{cases}
$$
\end{theorem}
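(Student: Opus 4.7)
My plan is to derive this statement as a direct corollary of two earlier results: the self-duality-type identity from Theorem \ref{thm:40} (which pairs up constacyclic codes in this family when $r=2$) and the exact minimum distance formula from Corollary \ref{thm:35}. The key observation is that for $r=2$, Theorem \ref{thm:40} gives $\C(q,m,2,\ell)^{\perp} = \C(q,m,2,\tilde{\ell})$ where $\tilde{\ell} := (q-1)m - \ell - 2$. So I only need to rewrite $\tilde{\ell}$ in the standard form $\tilde{\ell} = (q-1)\tilde{\ell}_1 + \tilde{\ell}_0$ with $0 \leq \tilde{\ell}_0 \leq q-2$ and feed it into Corollary \ref{thm:35}.

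Before applying Corollary \ref{thm:35}, I should check that its hypotheses hold for $\tilde{\ell}$. Since $q$ is odd, $(q-1)m$ is even and $\ell + 2$ has the same parity as $\ell_0$, which is odd; therefore $\tilde{\ell}$ is odd, giving $\tilde{\ell} \equiv r-1 \pmod{r}$ as required. The bounds $\ell \geq 1$ and $\ell < (q-1)m-1$ translate to $\tilde{\ell} \leq (q-1)m - 3$ and $\tilde{\ell} \geq 0$, and combined with oddness this forces $1 \leq \tilde{\ell} < (q-1)m - 1$.

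Then I split into the two cases stated in the theorem. If $\ell_0 < q-2$, i.e.\ $\ell_0 \in \{1,3,\dots,q-4\}$, I write
\[
\tilde{\ell} = (q-1)m - (q-1)\ell_1 - \ell_0 - 2 = (q-1)(m - \ell_1 - 1) + (q - 3 - \ell_0),
\]
so $\tilde{\ell}_1 = m - \ell_1 - 1$ and $\tilde{\ell}_0 = q - 3 - \ell_0 \in \{1,3,\dots,q-4\}$ lies in the valid range. Corollary \ref{thm:35} then yields
\[
d(\C(q,m,2,\ell)^{\perp}) = \frac{q - (q-3-\ell_0)}{2}\, q^{m-1-(m-\ell_1-1)} = \frac{3+\ell_0}{2}\, q^{\ell_1}.
\]
If $\ell_0 = q-2$, then
\[
\tilde{\ell} = (q-1)(m - \ell_1 - 1) - 1 = (q-1)(m - \ell_1 - 2) + (q - 2),
\]
so $\tilde{\ell}_1 = m - \ell_1 - 2$ and $\tilde{\ell}_0 = q - 2$. (The bound $\ell < (q-1)m-1$ guarantees $\ell_1 \leq m-2$, so $\tilde{\ell}_1 \geq 0$.) Corollary \ref{thm:35} now gives
\[
d(\C(q,m,2,\ell)^{\perp}) = \frac{q - (q-2)}{2}\, q^{m-1-(m-\ell_1-2)} = q^{\ell_1+1},
\]
completing the proof.

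There is no real obstacle here; the only care needed is in the bookkeeping when splitting off the right multiple of $q-1$ from $\tilde{\ell}$, especially at the boundary $\ell_0 = q-2$ where the naive expression $(q-1)(m-\ell_1-1) - 1$ must be re-normalised into canonical $(q-1)\tilde{\ell}_1 + \tilde{\ell}_0$ form. Once that is done, the verification of the hypotheses of Corollary \ref{thm:35} (parity and range of $\tilde{\ell}$) is routine and the two distance formulas drop out by direct substitution.
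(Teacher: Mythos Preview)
Your proof is correct and follows essentially the same route as the paper: invoke Theorem \ref{thm:40} to identify $\C(q,m,2,\ell)^{\perp}$ with $\C(q,m,2,(q-1)m-\ell-2)$, rewrite the new parameter in canonical form $(q-1)\tilde{\ell}_1+\tilde{\ell}_0$, and apply Corollary \ref{thm:35} in each case. Your version is in fact more careful than the paper's, since you explicitly verify the parity and range hypotheses needed to invoke Corollary \ref{thm:35} on $\tilde{\ell}$.
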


\begin{proof}
By Theorem \ref{thm:40}, $\C(q, m, r, \ell)^{\bot}=\C( q, m, r, (q-1)m-\ell -r)$. Note that 
$$(q-1)m-\ell-r=(q-1)(m-\ell_1-1)+q-3-\ell_0.$$
If $\ell_0<q-2$, by Corollary \ref{thm:35}, $d(\C(q, m, r, \ell)^{\bot})=(\frac{3+\ell_0}2) q^{\ell_1}$. If $\ell_0=q-2$, by Corollary \ref{thm:35}, $$d(\C(q, m, r, \ell)^{\bot})=q^{\ell_1+1}.$$ 
This completes the proof.
\end{proof}

\begin{theorem}\label{thm:45}
Let $q>2$ be a prime power and $r=q-1$. Let $\ell=(q-1)\ell_1+q-2$, where $0\leq \ell_1 \leq m-2$. Then 
$d(\C(q, m, r, \ell)^{\bot})=q^{\ell_1+1}.$
\end{theorem}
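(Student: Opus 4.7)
The plan is to leverage the scalar-equivalence between the constacyclic code and the projective Reed-Muller code established earlier, and then invoke the known duality and minimum-distance formulas for $\PRM$. First, I would apply Corollary \ref{thm:38}: since $r=q-1$ and $\ell=(q-1)\ell_1+(q-2)$, the code $\C(q,m,q-1,\ell)$ is scalar-equivalent to $\PRM(q,m,\ell)$. Scalar equivalence is preserved under taking duals (if $\C_2=\C_1 D$ with $D$ invertible diagonal, then $\C_2^\perp=\C_1^\perp D^{-1}$), and scalar-equivalent codes share the same weight distribution. Hence
$$d(\C(q,m,q-1,\ell)^\perp)=d(\PRM(q,m,\ell)^\perp).$$

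Next I would check the hypothesis of Theorem \ref{prm-8}. Because $q>2$, the residue $\ell\equiv q-2\pmod{q-1}$ is nonzero, so $\ell\not\equiv 0\pmod{q-1}$, and therefore
$$\PRM(q,m,\ell)^\perp=\PRM(q,m,h),\quad h:=(m-1)(q-1)-\ell.$$
A short computation rewrites $h$ as $h=(q-1)(m-\ell_1-2)+1$. Since $0\leq \ell_1\leq m-2$, we have $1\leq h\leq (m-1)(q-1)$, so the parameter lies in the range covered by Theorem \ref{thm-PRMcode7}.

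Finally I would apply Theorem \ref{thm-PRMcode7} to $\PRM(q,m,h)$. Writing $h-1=u(q-1)+v$ with $0\leq v<q-1$ yields $u=m-\ell_1-2$ and $v=0$, so
$$d(\PRM(q,m,h))=(q-v)q^{m-2-u}=q\cdot q^{\ell_1}=q^{\ell_1+1},$$
which gives the claimed value. There is no genuine obstacle in this argument — everything reduces to bookkeeping once the scalar-equivalence to $\PRM(q,m,\ell)$ is invoked; the only point that needs care is verifying that $\ell\not\equiv 0\pmod{q-1}$ (so that Theorem \ref{prm-8} actually applies) and that the resulting $h$ lies in the admissible range of Theorem \ref{thm-PRMcode7}, both of which follow from $q>2$ and $0\leq\ell_1\leq m-2$.
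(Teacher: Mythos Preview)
Your proposal is correct and follows essentially the same route as the paper's own proof: invoke Corollary \ref{thm:38} for the scalar-equivalence with $\PRM(q,m,\ell)$, pass to duals, apply Theorem \ref{prm-8} to identify the dual as $\PRM(q,m,(m-1)(q-1)-\ell)$, and read off the minimum distance from Theorem \ref{thm-PRMcode7}. If anything, you are slightly more careful than the paper in explicitly verifying that $\ell\not\equiv 0\pmod{q-1}$ and that the dual parameter lies in the admissible range.
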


\begin{proof}
	By Corollary \ref{thm:38}, the code $\C(q, m, q-1, \ell)$ is scalar-equivalent to $\PRM(q, m, \ell)$. It follows that the code $\C(q, m, q-1, \ell)^{\bot}$ is scalar-equivalent to $\PRM(q, m, \ell)^{\bot}$. It then follows from Theorem \ref{prm-8} that $\PRM(q, m, \ell)^{\bot}=\PRM(q, m, (m-1)(q-1)-\ell)$. Thereby, $$d(\C(q, m, r, \ell))=d( \PRM(q, m, (m-1)(q-1)-\ell)).$$ 
Note that $(m-1)(q-1)-\ell-1=(m-2-\ell_1)(q-1)$, by Theorem \ref{thm-PRMcode7}, $$d( \PRM(q, m, (m-1)(q-1)-\ell))=q^{\ell_1+1}.$$
This completes the proof. 
\end{proof}

\subsection{Some special cases of the constacyclic code $\C(q, m, r, \ell)$}

In this subsection, we will study the code $\C(q, m, r, \ell)$ in some special cases. We do have the dimension formula of the code $\C(q, m, r, \ell)$ in Theorem \ref{thm-mycode5221}, which may not be easily simplified. Instead, we will determine the generator or check polynomial of $\C(q, m, r, \ell)$ and will then know the dimension of the code without using Theorem \ref{thm-mycode5221}.

\begin{theorem}\label{thm:31}
Let $r>1$ and $r \mid (q-1)$. Let $\ell=(q-1)m-r-1$, where $m\geq 2$.  Then the constacyclic code $\C(q, m, r, \ell)$ has parameters $[\frac{q^m-1}{r}, \frac{q^m-1}r-m, d]$, where 
\begin{equation*}
	d=\begin{cases}
	2 & {\rm if~}r<q-1,\\
	3 &{\rm if~} r=q-1.
\end{cases}
\end{equation*}
Moreover, the dual code $\C(q, m, r, \ell)^{\bot}$ has parameters $[\frac{q^m-1}{r}, m, (\frac{q-1}r)q^{m-1}]$.
\end{theorem}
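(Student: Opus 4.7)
The plan is to unwind the defining set of $\C(q,m,r,\ell)$ for $\ell=(q-1)m-r-1$, show the generator polynomial collapses to a single minimal polynomial, and then compute distances on both sides using the trace representation and an explicit construction. The parameter $(q-1)m-\ell=r+1$ means the defining set is $\{i\in\Z_N:\wt_q(i)\le r,\; i\equiv 1\pmod r\}$. Because $r\mid(q-1)$ implies $\wt_q(i)\equiv i\pmod r$, and $1$ is the only element of $\{1,2,\dots,r\}$ congruent to $1$ modulo $r$, every such $i$ has $\wt_q(i)=1$. Thus the defining set equals $\{q^j:0\le j\le m-1\}=C_1^{(q,N)}$, a single $q$-cyclotomic coset of size $m$, the generator polynomial is $\m_\beta(x)$, and $\dim(\C(q,m,r,\ell))=\frac{q^m-1}{r}-m$, so $\dim(\C(q,m,r,\ell)^\perp)=m$.

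For the minimum distance of $\C^\perp$ I would invoke the trace representation (Lemma \ref{lem-01}). Since the check polynomial of $\C^\perp$ has exactly the conjugates of $\beta^{-1}$ as its roots, one obtains
$$\C(q,m,r,\ell)^\perp=\{(\Tr_{q^m/q}(a\beta^t))_{t=0}^{n-1}:a\in\gf(q^m)\}.$$
Because $\beta^n=\lambda$, the set $\{\beta^t:0\le t\le n-1\}$ is a transversal of the cosets of the subgroup $V=\langle\lambda\rangle\subseteq\gf(q)^*$ in $\gf(q^m)^*$, and so is $\{a\beta^t:0\le t\le n-1\}$ for any $a\neq 0$. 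Since $V\subseteq\gf(q)^*$ and $\Tr$ is $\gf(q)$-linear, every $V$-coset lies either wholly in $\ker(\Tr)$ or wholly outside, so the number of $t$ with $\Tr(a\beta^t)=0$ equals the number of zero-trace $V$-cosets in $\gf(q^m)^*$, namely $(q^{m-1}-1)/r$. Hence every nonzero codeword of $\C^\perp$ has the uniform weight $n-(q^{m-1}-1)/r=\frac{q-1}{r}q^{m-1}$, giving $d(\C^\perp)=\frac{q-1}{r}q^{m-1}$ and confirming $\dim\C^\perp=m$ by injectivity of $a\mapsto(\Tr(a\beta^t))_t$.

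For $d(\C)$ I split into two cases. A weight-$1$ codeword $\gamma x^i$ forces $\gamma\beta^i=0$, which is impossible for $\gamma\neq 0$, so $d\ge 2$. When $r<q-1$, let $\bar n=(q^m-1)/(q-1)$; then $\bar n<n$, and $\beta^{\bar n}\in\gf(q)^*$ because $(\beta^{\bar n})^{q-1}=1$. I would verify that $f(x)=x^{\bar n}-\beta^{\bar n}$ lies in $\C$: for each $j=q^k\in C_1^{(q,N)}$, Frobenius gives $\beta^{\bar n\cdot j}=(\beta^{\bar n})^{q^k}=\beta^{\bar n}$ since $\beta^{\bar n}\in\gf(q)$, whence $f(\beta^j)=0$. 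This exhibits a weight-$2$ codeword, forcing $d=2$. When $r=q-1$, the chosen $\ell$ coincides with $(q-1)(m-2)+r-1$, so Corollary \ref{THM:34} immediately yields $d(\C)=(q-1)(q-r+2)/r=3$.

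The main obstacle is the dual-distance computation: the crucial but slightly delicate step is establishing the one-weight structure of $\C^\perp$ by combining the trace representation with the fact that $V$-cosets behave uniformly with respect to $\ker(\Tr)$. The remaining pieces---identifying the defining set, bounding $d(\C)\ge 2$, constructing the weight-$2$ codeword when $r<q-1$, and invoking Corollary \ref{THM:34} in the case $r=q-1$---become routine once $g_{(q,m,r,\ell)}(x)=\m_\beta(x)$ is in hand.
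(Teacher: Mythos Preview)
Your proof is correct, and the identification of the defining set as $C_1^{(q,N)}$ matches the paper exactly. The arguments diverge after that, however, and in interesting ways.

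For $d(\C)$ when $r<q-1$, the paper writes $\ell=(q-1)(m-1)+(q-2-r)$ and simply reads off $d=2$ from the two-sided bound of Theorem~\ref{thm-april17}; you instead exhibit an explicit weight-$2$ codeword $x^{\bar n}-\beta^{\bar n}$ with $\bar n=(q^m-1)/(q-1)$. Your construction is more self-contained and illuminates why the distance drops to $2$: the code contains a nontrivial ``binomial'' whenever the generator polynomial has all its roots in a proper multiplicative subgroup of $\gf(q^m)^*$ containing $\gf(q)^*$. For $r=q-1$ the paper uses the lower bound of Theorem~\ref{thm-april17} together with the Sphere Packing bound, whereas you invoke Corollary~\ref{THM:34}; these are essentially equivalent, since that corollary is itself a consequence of Theorem~\ref{thm-april17}.

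For the dual distance, the paper carries out a direct character-sum computation of $\wt(\bc(a))$. Your argument---that $\{\beta^t:0\le t\le n-1\}$ is a transversal of the $\langle\lambda\rangle$-cosets in $\gf(q^m)^*$, and that each such coset lies wholly inside or wholly outside $\ker(\Tr_{q^m/q})$ because $\langle\lambda\rangle\subseteq\gf(q)^*$---is a cleaner structural explanation of the one-weight phenomenon, bypassing any exponential sums. Both routes yield the same uniform weight $\frac{q-1}{r}q^{m-1}$.
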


\begin{proof}
When $\ell=(q-1)m-r-1$, we have  
\begin{align*}
D_{(q, m, r, \ell)}=&\{i\in \Z_N: \wt_q(i)<r+1,\ \wt_q(i)\equiv 1\pmod{r} \}\\
=&\{i\in \Z_N: \wt_q(i)=1\}=C_1^{(q,N)}.	
\end{align*}
By definition, we have $g_{(q, m, r,\ell)}(x)=\m_{\beta}(x)$. Then $$\dim(\C(q, m, r,\ell))=\frac{q^m-1}r-m.$$

Now consider the minimum distance of the code $\C(q, m, r, \ell)$. There are two cases.
\begin{enumerate}
\item $r<q-1$. Then $\ell=(q-1)(m-1)+q-2-r$. By Theorem \ref{thm-april17}, $d(\C(q,m, r, \ell))= 2$.	
\item $r=q-1$. Then $\ell=(q-1)(m-2)+(q-2)$. By Theorem \ref{thm-april17}, $d(\C(q,m, r, \ell))\geq 3$. By the Sphere Packing bound, $d(\C(q, m, r, \ell ))\leq 3$. The desired result follows. 
\end{enumerate}

By Lemma \ref{lem-01}, the trace representation of $\C(q, m, r, \ell)^{\bot}$ is given by 
$$
\C(q,m, r,\ell)^{\bot}=\{\bc(a)=(\tr_{q^m/q}(a \beta^{i}))_{i=0}^{n-1}: a \in \gf(q^m) \}. 
$$
For $a\in \gf(q^m)^*$,
\begin{align*}
\wt(\bc(a))&=n-\frac{1}{q}\sum_{i=0}^{n-1}\sum_{x\in \gf(q)} \zeta_p^{\tr_{q/p}(x \tr_{q^m/q}(a \beta^i))}  	\\
&=n-\frac{1}{rq}\sum_{i=0}^{rn-1}\sum_{x\in \gf(q)}\zeta_p^{\tr_{q/p}(x \tr_{q^m/q}(a \beta^i))}\\
&=n-\frac{1}{rq} \sum_{x\in \gf(q)} \sum_{y\in \gf(q^m)^*} \zeta_p^{\tr_{q^m/p}(a x y)}\\
&=n-\frac{1}{rq} [q^m-1+(q-1)\sum_{y\in \gf(q^m)^*} \zeta_p^{\tr_{q^m/p}(y)}]\\
&=n-\frac{1}{rq} [q^m-1-(q-1)]\\
&=(\frac{q-1}r)q^{m-1}.
\end{align*}
	The desired minimum distance of the dual code then follows.
\end{proof}  

Notice that the constacyclic code $\C(q, m, q-1, (q-1)(m-2)+q-2)$ has parameters 
$$ 
\left[  \frac{q^m-1}{q-1},   \frac{q^m-1}{q-1}-m,  3 \right] 
$$ 
and is monomially-equivalent to the Hamming code. The dual code $\C(q, m, (q-1)(m-2)+q-2)^\perp$ has parameters 
$$ 
\left[  \frac{q^m-1}{q-1},  m,  q^{m-1} \right] 
$$ 
and is monomially-equivalent to the Simplex code.

\begin{theorem}
Let $r>1$ and $r\mid(q-1)$. Let $\ell=(q-1)(m-1)+\ell_0$, where $m\geq 2$, $0\leq \ell_0< q-2$ and $\ell_0\equiv r-1\pmod{r}$. Then the constacyclic code $\C(q,m, r , \ell)$ has parameters	
$$\left[\frac{q^m-1}{r}, \frac{q^m-1}{r}-\sum_{t=0}^{\frac{q-2-\ell_0}r-1}\binom{m+rt}{rt+1},\frac{q-\ell_0+r-2}r\right]. $$
\end{theorem}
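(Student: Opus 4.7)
The length $\frac{q^m-1}{r}$ is immediate by construction, and the minimum distance $\frac{q-\ell_0+r-2}{r}$ is a direct application of Corollary~\ref{THM:24}, since $\ell=(q-1)(m-1)+\ell_0$ with $\ell_0\equiv r-1\pmod r$ fits that corollary's hypothesis. Thus the substantive content of the theorem is the dimension formula, and that is what my proof will focus on.

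Substituting $\ell=(q-1)(m-1)+\ell_0$ into the definition of $D_{(q,m,r,\ell)}$, the condition $\wt_q(i)<(q-1)m-\ell$ becomes $\wt_q(i)\le q-2-\ell_0$, so
\[
D_{(q,m,r,\ell)}=\{i\in\Z_N:\wt_q(i)\le q-2-\ell_0,\ \wt_q(i)\equiv 1\pmod r\}.
\]
Because $r\mid q-1$ and $\ell_0\equiv r-1\pmod r$, the quantity $s:=\frac{q-2-\ell_0}{r}$ is a nonnegative integer, and the admissible values of $\wt_q(i)$ are exactly $\{rj+1:0\le j\le s-1\}$ (the largest integer $\le rs$ congruent to $1\bmod r$ being $r(s-1)+1$).

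For each admissible weight $t=rj+1$, I will count the number of $i\in\Z_N$ with $\wt_q(i)=t$ via Lemma~\ref{Sun-1}, which counts tuples $(i_0,\dots,i_{m-1})$ with $0\le i_k\le q-1$ and $\sum i_k=t$. The key observation is that $t\le r(s-1)+1=q-1-\ell_0-r<q$, so no single digit can reach the cap $q-1$ without exceeding $t$; consequently only the $j=0$ term of $N(t,m,q-1)=\sum_{j=0}^m(-1)^j\binom{m}{j}\binom{t-jq+m-1}{t-jq}$ survives, giving $N(t,m,q-1)=\binom{t+m-1}{t}=\binom{rj+m}{rj+1}$. Summing over $j=0,\dots,s-1$ yields
\[
|D_{(q,m,r,\ell)}|=\sum_{t=0}^{(q-2-\ell_0)/r-1}\binom{m+rt}{rt+1},
\]
and the dimension formula follows from $\dim(\C(q,m,r,\ell))=\frac{q^m-1}{r}-|D_{(q,m,r,\ell)}|$.

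There is no serious obstacle: the entire proof is combinatorial bookkeeping, pivoting on the single observation that the weight cap $q-2-\ell_0<q-1$ prevents any digit from saturating, which collapses the inclusion–exclusion of Lemma~\ref{Sun-1} to its leading binomial term. The compatibility check (that $s$ is an integer) uses only $r\mid q-1$ and the residue hypothesis on $\ell_0$, and the minimum distance is imported verbatim from Corollary~\ref{THM:24}.
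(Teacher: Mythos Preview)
Your proof is correct and follows essentially the same route as the paper's: compute $(q-1)m-\ell=q-1-\ell_0$, identify the admissible weights $\wt_q(i)\in\{rt+1:0\le t\le (q-2-\ell_0)/r-1\}$, count each stratum, and invoke Corollary~\ref{THM:24} for the distance. You are simply more explicit than the paper about why each stratum has size $\binom{m+rt}{rt+1}$---the paper says ``it is easily checked'' while you spell out that $rt+1\le q-2-\ell_0<q$ forces the inclusion--exclusion of Lemma~\ref{Sun-1} to collapse to its leading term.
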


\begin{proof}
When $\ell=(q-1)(m-1)+\ell_0$, we have $(q-1)m-\ell=q-1-\ell_0$. Then 
\begin{align*}
D_{(q, m, r, \ell)}=&\{i\in \Z_N: \wt_q(i)<q-1-\ell_0,\ \wt_q(i)\equiv 1\pmod{r} \}\\
=&\bigcup_{t=0}^{\frac{q-2-\ell_0}r-1}\{i\in \Z_N: \wt_q(i)= r t+1 \}.
\end{align*}
It is easily checked that
\begin{align*}
|D_{(q, m, r, \ell)}|=\sum_{t=0}^{\frac{q-2-\ell_0}r-1}\binom{m+rt}{rt+1}. 
\end{align*}
The desired dimension follows. The desired minimum distance then follows from Corollary \ref{THM:24}.
\end{proof}

\begin{theorem}\label{}
Let $r>1$ and $r\mid(q-1)$. Let $\ell=(q-1)(m-2)+r-1$, where $m\geq 2$. Then the constacyclic code $\C(q, m, r, \ell)$ has parameters $$\left [\frac{q^m-1}{r}, \frac{q^m-1}r-\kappa, \frac{(q-1)(q-r+2)}{r} \right],$$
where 
$$\kappa=\begin{cases}
\sum_{t=0}^{\frac{2(q-1-r)}r}\binom{m+rt}{rt+1} &{\rm if}~ \frac{q+1}2\leq r\leq q-1,\\
\sum_{t=0}^{\frac{2(q-1-r)}{r}}\binom{m+rt}{rt+1}-m \sum_{t=\frac{q-1}{r}}^{\frac{2(q-1-r)}{r}}\binom{tr-q+m}{tr-q+1} &{\rm if}~ 2 \leq r\leq \frac{q-1}2.
\end{cases}
 $$
\end{theorem}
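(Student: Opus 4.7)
The plan is to establish the dimension via the defining-set formula $\dim(\C(q,m,r,\ell)) = n - |D_{(q,m,r,\ell)}|$ and to read off the minimum distance directly from Corollary \ref{THM:34}. The minimum distance claim is immediate, since $\ell = (q-1)(m-2) + r - 1$ is exactly the form handled by that corollary, giving $d(\C(q,m,r,\ell)) = (q-1)(q-r+2)/r$. So the real work is counting $|D_{(q,m,r,\ell)}|$.

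First I would substitute $\ell = (q-1)(m-2) + r - 1$ to get $(q-1)m - \ell = 2q - r - 1$, so that
\[
D_{(q,m,r,\ell)} = \{i \in \Z_N : \wt_q(i) \equiv 1 \pmod{r},\ \wt_q(i) \leq 2q - r - 2\}.
\]
Write $\wt_q(i) = rt+1$ and set $s = (q-1)/r$; then the admissible values of $t$ run from $0$ up to $2s - 2 = 2(q-1-r)/r$. The count of $i \in \Z_N$ with $\wt_q(i) = rt+1$ equals the number $N(rt+1, m, q-1)$ of ways to place $rt+1$ objects in $m$ cells of capacity $q-1$, which is given by Lemma \ref{Sun-1}.

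The key observation for case separation is that the maximum $rt+1$ in our range satisfies $rt + 1 \leq 2(q-1-r) + 1 < 2q$, so the index $j \geq 2$ in the inclusion-exclusion of Lemma \ref{Sun-1} never contributes. Hence only $j = 0$ and $j = 1$ survive, and the $j = 1$ term is nonzero precisely when $rt + 1 \geq q$, equivalently $t \geq s$ (using $r \mid q-1$). If $(q+1)/2 \leq r \leq q-1$, then since the largest proper divisor of $q-1$ is at most $(q-1)/2$, necessarily $r = q-1$, hence $s = 1$, so the maximal $t$ equals $0 < s$ and no $j=1$ correction activates; the stated formula reduces to $\binom{m}{1}$. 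If $2 \leq r \leq (q-1)/2$, then $s \geq 2$ and the range $\{s, s+1, \ldots, 2s-2\} = \{(q-1)/r, \ldots, 2(q-1-r)/r\}$ is nonempty, so the $j=1$ corrections assemble into exactly the second summand $m\sum_{t=(q-1)/r}^{2(q-1-r)/r}\binom{tr-q+m}{tr-q+1}$. Summing $N(rt+1, m, q-1)$ over $t$ then yields the claimed expression for $\kappa$, and subtracting from $n = (q^m-1)/r$ completes the dimension count.

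The main obstacle is the careful bookkeeping of which inclusion-exclusion indices are active, and in particular verifying that case 1 as stated is equivalent to $s = 1$ (so that the correction disappears) while case 2 is exactly $s \geq 2$; once this dichotomy is pinned down, the rest is mechanical application of Lemma \ref{Sun-1} together with the elementary identities $s = (q-1)/r$ and $2s - 2 = 2(q-1-r)/r$.
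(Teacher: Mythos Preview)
Your proposal is correct and follows essentially the same approach as the paper: compute $(q-1)m-\ell = 2(q-1)-r+1$, write $D_{(q,m,r,\ell)}$ as the union over $t=0,\dots,2(q-1-r)/r$ of the sets $\{i:\wt_q(i)=rt+1\}$, count each piece via Lemma~\ref{Sun-1} with only the $j=0,1$ terms surviving, and invoke Corollary~\ref{THM:34} for the minimum distance. Your additional remark that the hypothesis $\tfrac{q+1}{2}\le r\le q-1$ together with $r\mid(q-1)$ forces $r=q-1$ (and hence $s=1$) is a valid sharpening the paper does not make explicit; the paper instead argues directly that $2(q-1)-2r+1\le q-2$ in that range, so the capacity constraint is vacuous and only $j=0$ contributes.
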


\begin{proof}
When $\ell=(q-1)(m-2)+r-1$, we have $(q-1)m-\ell=2(q-1)-r+1$. Then
\begin{align*}
D_{(q, m, r, \ell)}=&\{i\in \Z_N: \wt_q(i)\leq 2(q-1)-2r+1,\ \wt_q(i)\equiv 1\pmod{r} \}\\
&=\bigcup_{t=0}^{\frac{2(q-1-r)}{r}}\{i\in \Z_N: \wt_q(i)=rt+1 \}. 
\end{align*}
When $\frac{q+1}2\leq r\leq q-1$, we have $2(q-1)-2r+1\leq q-2$. It is easily checked that
\begin{align*}
|D_{(q, m, r, \ell)}|=\sum_{t=0}^{\frac{2(q-1-r)}r}\binom{m+rt}{rt+1}.
\end{align*} 
When $2\leq r\leq \frac{q-1}{2}$, we have $q\leq 2(q-1)-2r+1<2(q-1)$. Clearly, 
\begin{align*}
|D_{(q, m, r, \ell)}|&=\sum_{t=0}^{\frac{q-1}{r}-1}|\{i\in \Z_N: \wt_q(i)=rt+1 \}|+ \sum_{t=\frac{q-1}{r}}^{\frac{2(q-1-r)}{r}}|\{i\in \Z_N: \wt_q(i)=rt+1 \}|\\
&=\sum_{t=0}^{\frac{q-1}{r}-1}\binom{m+rt}{rt+1}+\sum_{t=\frac{q-1}{r}}^{\frac{2(q-1-r)}{r}}\left[\binom{m+rt}{rt+1}-m \binom{tr-q+m}{tr-q+1}\right]\\
&=\sum_{t=0}^{\frac{2(q-1-r)}{r}}\binom{m+rt}{rt+1}-m \sum_{t=\frac{q-1}{r}}^{\frac{2(q-1-r)}{r}}\binom{tr-q+m}{tr-q+1}.
\end{align*}
The desired dimension follows. The desired minimum distance then follows from Corollary \ref{THM:34}.
\end{proof}

\begin{theorem}\label{thm-april181}
Let $q \geq 3$ and $m\geq 3$. Then the constacyclic code $\C(q, m, q-1, (q-1)(m-3)+q-2)$ has parameters 
$$ 
\left[  \frac{q^m-1}{q-1},   \frac{q^m-1}{q-1}-\binom{m+q-1}{q}, 3\cdot q   \right], 
$$
and the dual code $\C(q, m, q-1, (q-1)(m-3)+q-2)^{\bot}$ has parameters 
$$ 
\left[  \frac{q^m-1}{q-1},  \binom{m+q-1}{q}, q^{m-2}   \right]. 
$$
\end{theorem}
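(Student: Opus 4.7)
The plan is to dispatch the three parameters (length, dimension, minimum distance) for both $\C(q,m,q-1,\ell)$ and its dual by specializing the results already established in this section, with $r=q-1$ and $\ell=(q-1)(m-3)+q-2=(q-1)\ell_1+\ell_0$, where $\ell_1=m-3$ and $\ell_0=q-2$.

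The length $n=(q^m-1)/(q-1)$ is immediate from the definition. For the dimension, I would compute the defining set $D_{(q,m,q-1,\ell)}$ directly. Since $(q-1)m-\ell=2q-1$, the set $D_{(q,m,q-1,\ell)}$ consists of those $i\in\Z_N$ with $\wt_q(i)<2q-1$ and $\wt_q(i)\equiv 1\pmod{q-1}$; equivalently, $\wt_q(i)\in\{1,q\}$. The integers with $\wt_q(i)=1$ are exactly $i=q^j$ for $0\le j\le m-1$, contributing $m$ elements. By Lemma \ref{Sun-1}, the integers $i\in\Z_N$ with $\wt_q(i)=q$ are in bijection with $m$-tuples $(i_0,\dots,i_{m-1})\in\{0,1,\dots,q-1\}^m$ summing to $q$, whose number is
\[
N(q,m,q-1)=\sum_{j=0}^{m}(-1)^j\binom{m}{j}\binom{q-jq+m-1}{q-jq}=\binom{m+q-1}{q}-m,
\]
since only the terms $j=0$ and $j=1$ survive. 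Summing, $|D_{(q,m,q-1,\ell)}|=\binom{m+q-1}{q}$, which gives
\[
\dim(\C(q,m,q-1,\ell))=n-|D_{(q,m,q-1,\ell)}|=\frac{q^m-1}{q-1}-\binom{m+q-1}{q},
\]
and hence $\dim(\C(q,m,q-1,\ell)^\perp)=\binom{m+q-1}{q}$.

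For the minimum distances, I would invoke Corollary \ref{thm:38} with $\ell_1=m-3$, which asserts that $\C(q,m,q-1,\ell)$ is scalar-equivalent to $\mathrm{PRM}(q,m,\ell)$ and has minimum distance $3\cdot q^{m-2-\ell_1}=3q$. For the dual, Theorem \ref{thm:45} with the same $\ell_1=m-3$ gives $d(\C(q,m,q-1,\ell)^\perp)=q^{\ell_1+1}=q^{m-2}$. Both cited results require exactly the range $0\le\ell_1\le m-2$, which is satisfied since $m\ge 3$, and both require $\ell_0=q-2$, which also holds. Assembling these three computations yields the two stated parameter sets.

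There is no real obstacle in this proof; the only nontrivial arithmetic is the cancellation in $N(q,m,q-1)$, which is pleasant because the alternating sum collapses to two terms. The cleanest presentation is to verify the combinatorial count of $|D_{(q,m,q-1,\ell)}|$ first, then simply cite Corollary \ref{thm:38} and Theorem \ref{thm:45} for the two minimum distances.
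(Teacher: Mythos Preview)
Your proposal is correct and follows essentially the same approach as the paper: compute $(q-1)m-\ell=2q-1$, identify $D_{(q,m,q-1,\ell)}$ as the set of $i$ with $\wt_q(i)\in\{1,q\}$, count via Lemma \ref{Sun-1} to get $|D_{(q,m,q-1,\ell)}|=\binom{m+q-1}{q}$, and then invoke Corollary \ref{thm:38} and Theorem \ref{thm:45} for the two minimum distances. Your explicit verification of the range condition $0\le\ell_1\le m-2$ and the observation that only the $j=0,1$ terms survive in $N(q,m,q-1)$ are helpful clarifications not spelled out in the paper, but the argument is the same.
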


\begin{proof}
Let $\ell = (q-1)(m-3)+q-2$. Note that $(q-1)m-\ell=2(q-1)+1$. It is easy to see that 
\begin{align*}
D_{(q, m, q-1, \ell)}&=\{i\in \Z_{N}: \wt_q(i)<2(q-1)+1,\ \wt_q(i)\equiv 1\pmod{q-1} \}\\
&=\{i\in \Z_{N}: \wt_q(i)=1 \} \cup \{i\in \Z_{q^m-1}: \wt_q(i)=q \}.
\end{align*}
Then
\begin{eqnarray}\label{eqn-apr17}
|D_{(q, m, q-1, \ell)}| & =|\{(i_0, i_1,\ldots, i_{m-1}) \in \{0,1,\cdots,q-1\}^m: i_0+i_1+\cdots+i_{m-1}=1\}|+ \nonumber \\ 
&|\{(i_0,i_1, \ldots, i_{m-1}) \in \{0,1,\cdots, q-1\}^m: i_0+i_1+\cdots+i_{m-1}=q\}|.  
\end{eqnarray} 
Clearly, we have 
\begin{eqnarray}\label{eqn-apr18}
|\{(i_0,i_1, \ldots, i_{m-1}) \in \{0,1,\cdots, q-1\}^m: i_0+i_1+\cdots+i_{m-1}=1\}|=m. 
\end{eqnarray}
It follows Lemma \ref{Sun-1} that
\begin{align*}
&|\{(i_0,i_1, \ldots, i_{m-1}) \in \{0,1,\cdots, q-1\}^m: i_0+i_1+\cdots+i_{m-1}=q\}|\\
&=N(q, m, q-1)=\binom{m+q-1}{q}-m.
\end{align*} 
It then follows from (\ref{eqn-apr17}) and (\ref{eqn-apr18}) that 
\begin{eqnarray*} 
|D_{(q, m, q-1, \ell)}|=\binom{m+q-1}{q}.
\end{eqnarray*} 
The desired conclusion on the dimension of $\C(q, m, q-1, \ell)$ then follows. By Corollary \ref{thm:38}, 
$$
d(\C(q, m, q-1, \ell)) =3\cdot q. 
$$ 
By Theorem \ref{thm:45}, $d(\C(q, m, q-1, \ell))^{\bot} = q^{m-2}$. This completes the proof. 
\end{proof}

\begin{theorem}\label{thm-april221}
Let $q \geq 3$ and $m \geq 2$. Then the constacyclic code $\C(q, m, q-1, q-2)$ has parameters 
$$
\left[ \frac{q^m-1}{q-1},  \binom{m+q-3}{q-2},  3 \cdot q^{m-2} \right], 
$$
and the dual code $\C(q,m, q-1, q-2)^\perp$ has parameters 
$$
\left[ \frac{q^m-1}{q-1},   \frac{q^m-1}{q-1} - \binom{m+q-3}{q-2},   q \right].  
$$  
\end{theorem}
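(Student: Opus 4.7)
The plan is to obtain Theorem~\ref{thm-april221} as a direct specialization of results already established for the family $\C(q,m,r,\ell)$. With $r=q-1$ and $\ell=q-2$, the parametrization $\ell=r\ell_1+r-1$ forces $\ell_1=0$, and the hypothesis $0\leq \ell_1\leq m-2$ required by Corollary~\ref{thm:38} and Theorem~\ref{thm:45} holds since $m\geq 2$.

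First I would compute the dimension by applying Theorem~\ref{thm-mycode5221} with $\ell_1=0$. The outer sum collapses to a single term, and the inner sum becomes $\sum_{j=0}^{m}(-1)^{j}\binom{m}{j}\binom{m+q-3-jq}{q-2-jq}$. For every $j\geq 1$ the lower index satisfies $q-2-jq\leq -2<0$, so those binomial coefficients vanish and only the $j=0$ contribution survives, giving $\dim(\C(q,m,q-1,q-2))=\binom{m+q-3}{q-2}$. A more combinatorial alternative would be to count $D_{(q,m,q-1,q-2)}$ directly: the set of $i\in \Z_N$ with $\wt_q(i)\equiv 1 \pmod{q-1}$ has exactly $n$ elements (because $r\mid N$), and $D_{(q,m,q-1,q-2)}$ excludes from it precisely those $i$ with $\wt_q(i)=(m-1)(q-1)+1$; the involution $i\mapsto q^m-1-i$, which sends $\wt_q$ to $m(q-1)-\wt_q$, puts those in bijection with $\{i:\wt_q(i)=q-2\}$, whose size is $\binom{m+q-3}{q-2}$ by a stars-and-bars count (the constraint $i_j\leq q-1$ being vacuous because $q-2<q-1$).

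For the minimum distance of the primal code, I would simply quote Corollary~\ref{thm:38} at $\ell_1=0$, yielding $d(\C(q,m,q-1,q-2))=3\cdot q^{m-2}$. For the dual code, its length is $n=(q^m-1)/(q-1)$, its dimension is $n-\binom{m+q-3}{q-2}$ by standard duality, and Theorem~\ref{thm:45} applied at $\ell_1=0$ delivers $d(\C(q,m,q-1,q-2)^\perp)=q^{0+1}=q$.

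There is no genuine obstacle; the entire statement is a clean $\ell_1=0$ specialization of theorems already proved. The only point worth checking carefully is the vanishing of the higher-$j$ terms in the dimension formula of Theorem~\ref{thm-mycode5221}, which as noted above reduces to confirming that the relevant lower binomial index is negative — this is where I would spend a line of attention, since overlooking the $j=1$ term would mistakenly introduce an $m\binom{m-3}{-2}$ artifact.
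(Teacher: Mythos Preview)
Your proposal is correct and follows essentially the same route as the paper: both invoke Corollary~\ref{thm:38} and Theorem~\ref{thm:45} for the minimum distances of $\C(q,m,q-1,q-2)$ and its dual. The only minor variation is in the dimension computation: the paper counts the defining set directly via the complement trick $i\mapsto N-i$ (your ``combinatorial alternative''), whereas you lead with specializing the formula of Theorem~\ref{thm-mycode5221} and checking that the $j\geq 1$ terms vanish---both arguments are short and yield $\binom{m+q-3}{q-2}$ immediately.
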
 

\begin{proof}
Let $\ell=q-2$, then 
\begin{align*}
D_{(q, m, q-1, \ell)}=&\{i\in \Z_N: \wt_q(i)<(q-1)m-(q-1)+1, \wt_q(i)\equiv 1\pmod{q-1} \}.
\end{align*}
It is easy to check that
\begin{align*}
&\{i\in \Z_N: \wt_q(i)=(q-1)m-(q-1)+1 \}\\
=&\left\{N-(i_0+i_1q+\cdots+i_{m-1}q^{m-1}): \sum_{k=0}^{m-1}i_k=q-2 \right\}.
\end{align*}
 It follows that  
\begin{align*}
|\{i\in \Z_N: \wt_q(i)=(q-1)m-(q-1)+1 \}|=\binom{ q-2+m-1}{q-2}=\binom{m+q-3}{q-2}.
\end{align*}
Then 
$$\dim(\C(q, m, q-1, q-2))=\binom{m+q-3}{q-2},$$
and $$\dim(\C(q, m, q-1, q-2)^{\bot})=\frac{q^m-1}{q-1}-\binom{m+q-3}{q-2}.$$
The minimum distance of the code $\C(q, m, q-1, q-2)$ (resp. $\C(q, m, q-1, q-2)^{\bot}$) follows from Corollary \ref{thm:38} (resp. Theorem \ref{thm:45}). This completes the proof.
\end{proof}

Notice that the constacyclic code $\C(4, m, 3, 2)$ and the code $\mathrm{PRM}(4,m, 2)$ are scalar-equivalent, they have the same weight distribution. The weight distribution of $\C(4, m, 3, 2)$ is the given in Theorem \ref{thm-Lishuxing}. The following four examples show that $\C(4, m, 3, 2)$ is a $(m+1)$-weight code for even $m$ and $m$-weight code for odd $m$. This is consistent with the weight distribution of $\C(4, m, 3, 2)$, which is the same as the weight distribution of $\mathrm{PRM}(4, m, 2)$ documented in Theorem \ref{thm-Lishuxing}.
  
\begin{example} 
Let $(q, m, q-1, \ell)=(4, 2, 3, 2)$. Let $\beta$ be the primitive element of $\gf(4^2)$ with $\beta^4+\beta+1=0$. 
Then $\C(4,2,3,2)$ has parameters $[5,3,3]$ and weight enumerator 
$
1+ 30z^3+15z^4+18z^5. 
$  
Furthermore, $\C(4,2,3,2)^\perp$ has parameters $[5,2,4]$. Both codes are MDS and optimal. 
\end{example} 

\begin{example} 
Let $(q, m, q-1, \ell)=(4, 3, 3, 2)$. Let $\beta$ be the primitive element of $\gf(4^3)$ with $\beta^6 + \beta^4 + \beta^3 + \beta + 1=0$. 
Then $\C(4,3,3,2)$ has parameters $[21, 6, 12]$ and weight enumerator 
$$
1+ 630z^{12} + 3087z^{16} + 378z^{20}. 
$$ 
Notice that $\C(4,3,3,2)$ is distance-optimal \cite{Grassl}. 
Furthermore, $\C(4,3,3,2)^\perp$ has parameters $[21, 15, 4]$ and is almost-distance-optimal \cite{Grassl}.  
\end{example} 

\begin{example} 
Let $(q, m, q-1, \ell)=(4, 4, 3, 2)$. Let $\beta$ be the primitive element of $\gf(4^4)$ with $\beta^8 + \beta^4 + \beta^3 + \beta^2 + 1=0$. 
Then $\C(4,4,3,2)$ has parameters $[85, 10, 48]$ and weight enumerator 
$$
1+ 10710z^{48} + 411264z^{60} +  257295z^{64} +  362880z^{68} + 6426z^{80}. 
$$ 
Furthermore, $\C(4,4,3,2)^\perp$ has the best known parameters $[85,75,4]$ \cite{Grassl}. 
\end{example} 

\begin{example} 
Let $(q, m, q-1, \ell)=(4,5, 3, 2)$. Let $\beta$ be the primitive element of $\gf(4^5)$ with $\beta^{10} + \beta^6 + \beta^5 + \beta^3 + \beta^2 + \beta + 1=0$. 
Then $\C(4,5, 3, 2)$ has parameters $[341, 15, 192]$ and weight enumerator 
$$
1+ 173910z^{192} + 140241024z^{240}+ 809480463z^{256}+ 123742080z^{272} +  104346z^{320}. 
$$ 
Furthermore, $\C(4,5,3,2)^\perp$ has parameters $[341, 326, 4]$. 
\end{example}

\begin{theorem}
Let $q\geq 5$ be an odd prime power, $m\geq 2$, and $r=2$. Let $\ell=(q-1)(m-2)+\ell_0$,	 where $1\leq \ell_0\leq q-2$ and $\ell_0$ is odd. Then the constacyclic code $\C(q, m, 2, \ell) $ has parameters 
$$ 
\left[  \frac{q^m-1}{2},  \frac{q^m-1}{2}-\kappa, \left(\frac{q-\ell_0}2\right) q   \right], 
$$
where 
$$\kappa=\begin{cases}
\sum_{t=0}^{\frac{2q-5-\ell_0}2}\binom{2t+m}{2t+1}-m\sum_{t=\frac{q-1}2}^{\frac{2q-5-\ell_0}2}\binom{2t-q+m}{2t-q+1}~&{\rm if}~\ell_0<q-2,\\
\sum_{t=0}^{\frac{q-3}2} \binom{2t+m}{2t+1}~&{\rm if}~\ell_0=q-2.	
\end{cases}
 $$
Moreover, the dual code $\C(q, m, 2, \ell )^{\bot}$ has parameters 
$$ 
\left[  \frac{q^m-1}{2},  \kappa, d   \right], 
$$
where 
$$d=\begin{cases}
\left(\frac{3+\ell_0}2\right)q^{m-2}~&{\rm if}~\ell_0<q-2,\\
q^{m-1}~&{\rm if}~\ell_0=q-2.	
\end{cases}
 $$
\end{theorem}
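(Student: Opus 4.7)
The plan is to derive the dimension by counting the defining set of $\C(q,m,2,\ell)$, and then read off both minimum distances from Corollary~\ref{thm:35} and Theorem~\ref{thm:44}. Writing $\ell = (q-1)(m-2) + \ell_0$, I would first observe that $(q-1)m - \ell = 2(q-1) - \ell_0$, and since $r=2$ the condition $\wt_q(i)\equiv 1 \pmod r$ is simply ``$\wt_q(i)$ odd''. Thus
\[
 D_{(q,m,2,\ell)} = \{\,i \in \Z_N : \wt_q(i)\text{ odd},\ \wt_q(i) \le 2q-3-\ell_0\,\}.
\]
Because $\ell_0$ is odd, $2q-3-\ell_0$ is even, and the largest admissible odd value is $2q-4-\ell_0$. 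Hence $\wt_q(i) \in \{2t+1 : 0 \le t \le (2q-5-\ell_0)/2\}$.

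Next, I would split the enumeration by whether the relevant weight $s = 2t+1$ is below or above $q-1$ and apply Lemma~\ref{Sun-1}. For $2t+1 \le q-2$, i.e., $0 \le t \le (q-3)/2$, the digit-bound $i_j \le q-1$ is inactive, so the number of $i \in \Z_N$ with $\wt_q(i) = 2t+1$ is $N(2t+1,m,q-1) = \binom{2t+m}{2t+1}$. For $q \le 2t+1 \le 2(q-1)$, i.e., $(q-1)/2 \le t \le (2q-5-\ell_0)/2$, the bound is active for at most one coordinate, yielding $\binom{2t+m}{2t+1} - m\binom{2t-q+m}{2t-q+1}$. In the subcase $\ell_0 = q-2$, one checks that $(2q-5-\ell_0)/2 = (q-3)/2$, so only the first range contributes and
\[
 \kappa = \sum_{t=0}^{(q-3)/2} \binom{2t+m}{2t+1}.
\]
In the subcase $\ell_0 < q-2$ (equivalently $\ell_0 \le q-4$, since both are odd), both ranges contribute, and adding them gives the stated formula
\[
 \kappa = \sum_{t=0}^{(2q-5-\ell_0)/2}\binom{2t+m}{2t+1} - m\sum_{t=(q-1)/2}^{(2q-5-\ell_0)/2}\binom{2t-q+m}{2t-q+1}.
\]
By definition $\dim(\C(q,m,2,\ell)) = n - |D_{(q,m,2,\ell)}| = \frac{q^m-1}{2} - \kappa$, and the dual dimension is $\kappa$.

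Finally, I would obtain the minimum distances by substituting $\ell_1 = m-2$ into the existing results. Corollary~\ref{thm:35} gives directly
\[
 d(\C(q,m,2,\ell)) = \left(\frac{q-\ell_0}{2}\right) q^{m-1-(m-2)} = \left(\frac{q-\ell_0}{2}\right) q,
\]
and Theorem~\ref{thm:44} yields the dual distance $\frac{3+\ell_0}{2}q^{m-2}$ when $\ell_0 < q-2$ and $q^{m-1}$ when $\ell_0 = q-2$. The only nontrivial step is the combinatorial bookkeeping for $\kappa$; the main obstacle is ensuring the split of the range of $t$ at $t=(q-1)/2$ is handled correctly and that the upper end $2q-4-\ell_0$ is bounded by $2(q-1)$ so that Lemma~\ref{Sun-1} produces at most the correction term $-m\binom{2t-q+m}{2t-q+1}$; both are verified by elementary inequalities given $\ell_0 \ge 1$.
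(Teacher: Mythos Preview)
Your proposal is correct and follows essentially the same approach as the paper's own proof: compute $|D_{(q,m,2,\ell)}|$ by writing the defining set as the union over odd weights $2t+1\le 2q-4-\ell_0$, split the range at $t=(q-3)/2$ versus $t\ge (q-1)/2$, apply Lemma~\ref{Sun-1} (getting the single correction term $-m\binom{2t-q+m}{2t-q+1}$ in the upper range since $2q-4-\ell_0\le 2(q-1)$), and then invoke Corollary~\ref{thm:35} and Theorem~\ref{thm:44} for the two minimum distances. The only cosmetic difference is that you record the intermediate bound $\wt_q(i)\le 2q-3-\ell_0$ before passing to the largest odd value $2q-4-\ell_0$, whereas the paper writes the latter directly.
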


\begin{proof}
Note that $(q-1)m-\ell= 2(q-1)-\ell_0$. It is easy to see that 
\begin{align*}
D_{(q, m, 2, \ell)}&=\{i\in \Z_{N}: \wt_q(i)\leq 2q-4-\ell_0,\ \wt_q(i)\equiv 1\pmod{2} \}\\
&= \bigcup_{t=0}^{\frac{2q-5-\ell_0}2}\{i\in \Z_{N}: \wt_q(i)=2t+1 \}.
\end{align*}
If $\ell_0=q-2$, then $2q-4-\ell_0=q-2$. It follows that 
\begin{align*}
|D_{(q, m, 2, \ell)}|&= \sum_{t=0}^{\frac{2q-5-\ell_0}2} \binom{2t+m}{2t+1}.
\end{align*}
If $\ell_0<q-2$, then $2q-4-\ell_0\geq q$. Then
\begin{align*}
|D_{(q, m, 2, \ell)}|&=\sum_{t=0}^{\frac{q-3}2}|\{i\in \Z_{N}: \wt_q(i)=2t+1 \}|+\sum_{t=\frac{q-1}2}^{\frac{2q-5-\ell_0}2}\{i\in \Z_{N}: \wt_q(i)=2t+1 \}\\
&=\sum_{t=0}^{\frac{q-3}2}\binom{2t+m}{2t+1}+\sum_{t=\frac{q-1}2}^{\frac{2q-5-\ell_0}2}\left[\binom{2t+m}{2t+1}-m \binom{2t-q+m}{2t-q+1}\right]\\
&=\sum_{t=0}^{\frac{2q-5-\ell_0}2}\binom{2t+m}{2t+1}-m\sum_{t=\frac{q-1}2}^{\frac{2q-5-\ell_0}2}\binom{2t-q+m}{2t-q+1}.
\end{align*}
It follows that $\dim(\C(q, m,2, \ell))=n-|D_{(q, m, 2, \ell)}|=n-\kappa$, and  $\dim(\C(q, m,2, \ell)^{\bot})=\kappa$. The minimum distance of the code $\C(q, m, 2, \ell )$ (resp. $\C(q, m, 2, \ell)^{\bot}$) follows from Corollary \ref{thm:35} (resp. Theorem \ref{thm:44}). This completes the proof.
\end{proof} 

\begin{theorem}
Let $q$ be an odd prime power, $m\geq 2$ be an integer and $q^m\equiv 1\pmod{4}$. Let $r=2$ and $\ell=\frac{(q-1)m}{2}-1$. Then the code $\C(q, m, 2, \ell)$ is a self-dual code with parameters $$\left[\frac{q^m-1}2, \frac{q^m-1}4, d  \right ],$$ 
where 
\begin{align*}
	d=\begin{cases}
	q^{\frac{m}2}~&{\rm if}~m~{\rm is~even},\\
	(\frac{q+3}{4})q^{\frac{m-1}2}~&{\rm if}~m~{\rm is~odd}.
\end{cases}
\end{align*}
\end{theorem}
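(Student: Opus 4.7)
The plan has three steps: self-duality, then dimension, then minimum distance via a case split on the parity of $m$.

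First I would establish self-duality by invoking Theorem~\ref{thm:40}. With $r=2$ we have $\lambda=-1=\lambda^{-1}$, so $\C(q,m,2,\ell)^{\perp}$ and $\C(q,m,2,\ell)$ lie in the same ambient ring, and the theorem specializes to $\C(q,m,2,\ell)^{\perp}=\C(q,m,2,(q-1)m-\ell-2)$. Substituting $\ell=(q-1)m/2-1$ gives $(q-1)m-\ell-2=(q-1)m/2-1=\ell$, so the code equals its own dual. Self-duality immediately forces $\dim\C(q,m,2,\ell)=n/2=(q^m-1)/4$, and the hypothesis $q^m\equiv 1\pmod{4}$ guarantees this is an integer. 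Note also that $\ell=(q-1)m/2-1\geq 1$ for $m\geq 2$ and $q\geq 3$, so $\ell$ falls within the range where $\C(q,m,2,\ell)$ is a nontrivial code.

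To extract the minimum distance I plan to apply Corollary~\ref{thm:35} after writing $\ell=(q-1)\ell_1+\ell_0$ with $0\leq \ell_0\leq q-2$, checking the parity condition $\ell_0\equiv r-1\equiv 1\pmod{2}$ required by the corollary.

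For even $m$, set $m=2m'$; then $\ell=(q-1)m'-1=(q-1)(m'-1)+(q-2)$, giving $\ell_1=m/2-1$ and $\ell_0=q-2$, which is odd since $q$ is odd. Corollary~\ref{thm:35} then yields
\[
d=\left(\frac{q-\ell_0}{2}\right)q^{m-1-\ell_1}=q^{m/2}.
\]
For odd $m$, the hypothesis $q^m\equiv 1\pmod{4}$ combined with $m$ odd forces $q\equiv 1\pmod{4}$; writing $m=2k+1$ gives $\ell=(q-1)k+(q-3)/2$, so $\ell_1=(m-1)/2$ and $\ell_0=(q-3)/2$. Since $q\equiv 1\pmod{4}$, $\ell_0=(q-3)/2$ is odd, and Corollary~\ref{thm:35} yields
\[
d=\left(\frac{q-\ell_0}{2}\right)q^{m-1-\ell_1}=\left(\frac{q+3}{4}\right)q^{(m-1)/2}.
\]

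The substantive work is fully packaged in Theorem~\ref{thm:40} and Corollary~\ref{thm:35}; the only obstacle is the modular bookkeeping, namely verifying that $\ell_0$ has the right parity in each case so that Corollary~\ref{thm:35} is applicable. This is precisely where the hypothesis $q^m\equiv 1\pmod{4}$ is essential (it is automatic for $m$ even and is equivalent to $q\equiv 1\pmod{4}$ for $m$ odd, which in turn is what ensures $(q-3)/2$ is odd).
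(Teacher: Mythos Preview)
Your proposal is correct and follows essentially the same approach as the paper: self-duality via Theorem~\ref{thm:40} with the substitution $(q-1)m-\ell-2=\ell$, then the minimum distance via Corollary~\ref{thm:35} after splitting on the parity of $m$. You are in fact slightly more thorough than the paper in explicitly verifying that $\ell_0$ is odd in each case (the hypothesis needed for Corollary~\ref{thm:35}), and your arithmetic in the odd-$m$ case correctly produces $(q+3)/4$.
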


\begin{proof}
	If $q^m\equiv 1\pmod{4}$, then $\ell=\frac{(q-1)m}2-1$ is odd. It follows from Theorem \ref{thm:40} that 
	$$\C(q, m, 2, \ell)^{\bot}=\C(q, m, 2, \ell).$$ 
	Now consider the minimum distance of the code $\C(q, m, 2, \ell)$. There are two cases.
	\begin{enumerate}
	\item $m$ is even. Then $\ell=(q-1)(\frac{m}2-1)+(q-2)$. By Corollary \ref{thm:35}, $d(\C(q, m, 2, \ell))=q^{\frac{m}2}$.
	\item $m$ is odd. Then $q\equiv 1\pmod{4}$. Consequently, $\ell=(q-1)(\frac{m-1}2)+\frac{q-3}2$. By Corollary \ref{thm:35}, $d(\C(q, m, 2, \ell))=(\frac{q+3}2)q^{\frac{m-1}2}$.	
	\end{enumerate}
	This completes the proof.
\end{proof}

\begin{example}
Let $(q, m, r, \ell)=(5,3,2,5)$. Let $\beta$ be the primitive element of $\gf(5^3)$ with $\beta^3+3\beta+3=0$. Then the code $\C(5,3,2,5)$ has parameters $[62,31,10]$ and is self-dual.
\end{example}

More examples of the code $\C(q,m,\ell)$ and its dual are the following. 

\begin{example} 
Let $(q, m, r, \ell)=(4, 3, 3, 5)$. Let $\beta$ be the primitive element of $\gf(4^3)$ with $\beta^6 + \beta^4 + \beta^3 + \beta + 1=0$. Then the code $\C(4, 3, 3, 5)$ has parameters $[21, 18, 3]$ and is distance-optimal \cite{Grassl}, and $\C(4, 3, 3, 5)^\perp$ has parameters $[21, 3, 16]$ and is distance-optimal \cite{Grassl}.  
\end{example} 

\begin{example} 
Let $(q, m, r, \ell)=(4, 3, 3, 4)$. Let $\beta$ be the primitive element of $\gf(4^3)$ with $\beta^6 + \beta^4 + \beta^3 + \beta + 1=0$. Then the code $\C(4, 3, 3, 4)$ has parameters $[21, 6, 12]$ and is distance-optimal \cite{Grassl}, $\C(4, 3, 3, 4)^\perp$ has parameters $[21, 15, 4]$.  
\end{example} 

\begin{example} 
Let $(q, m, r, \ell)=(5, 3, 4, 3)$. Let $\beta$ be the primitive element of $\gf(5^3)$ with $\beta^3 + 3\beta + 3=0$. Then the code $\C(5, 3, 4, 3)$ has the best-known parameters $[31, 10, 15]$\cite{Grassl} and $\C(5, 3, 4, 3)^\perp$ has parameters $[31,21,5]$.  
\end{example} 

\begin{example} 
Let $(q, m, r, \ell)=(5, 3, 4, 7)$. Let $\beta$ be the primitive element of $\gf(5^3)$ with $\beta^3 + 3\beta + 3=0$. Then the code $\C(5, 3, 4, 7)$ has parameters $[30, 28, 3]$ and is distance-optimal \cite{Grassl}, and  $\C(5, 3, 4, 7)^\perp$ has parameters $[31,3,25]$ and is distance-optimal \cite{Grassl}.  
\end{example} 

These examples above show that the code $\C(q, m, r, \ell)$ could be optimal in some cases. Thus, the code $\C(q, m, r, \ell)$ is interesting in terms of its error-correcting capability. 

\subsection{Some differences between the codes $\C(q, m, r, \ell)$ and the nonprimitive generalized Reed-Muller codes $\textrm{NGRM}(q, m, r, h)$} 

On one hand, by Corollary \ref{thm:35}, the constacyclic code $\C(q, m, 2, \ell)$ has minimum distance $$d_1=\left(\frac{q-\ell_0}2\right)q^{m-1-\ell_1},$$ where $\ell=(q-1)\ell_1+\ell_0$, $\ell_1\geq 0$, $0\leq \ell_0\leq q-2$ and $\ell_0\equiv 1\pmod{2}$. According to Theorem \ref{eqn-dimPGRMc}, the code $\textrm{NGRM}(q, m, r, h)$ has minimum distance $d_2=\frac{(q-h_0)q^{m-1-h}-1}{2}$, where $(q-1)h+h_0<(q-1)m$, $0\leq h_0\leq q-2$ and $h_0\equiv 0\pmod{2}$. It is easily checked that $d_1=d_2$ if and only if $\ell_1=h=m-1$ and $\ell_0=h_0+1$. Consequently, the constacyclic code $\C(q, m, 2, \ell)$ and the nonprimitive 
generalized Reed-Muller codes $\textrm{NGRM}(q, m, 2, h)$ are different in general.

On the other hand, by Corollary \ref{thm:38}, the constacyclic code $\C(q, m, q-1, (q-1)\ell_1+q-2)$ has minimum distance $d_3=3\cdot q^{m-2-\ell_1}$, where $0\leq \ell_1\leq m-2$. According to Theorem \ref{eqn-dimPGRMc}, the code $\textrm{NGRM}(q, m, q-1, h)$ has minimum distance $d_4=\frac{q^{m-h}-1}{q-1}$, where $0\leq h\leq m-1$. Note that $q\geq 3$, we have $d_3\neq d_4$. Consequently, the constacyclic code $\C(q, m, q-1, (q-1)\ell_1+q-2)$ and the nonprimitive generalized Reed-Muller codes $\textrm{NGRM}(q, m, q-1, h)$ are different.

\section{Summary and concluding remarks}\label{sec4}
   	 
The main contributions of this paper are the constructions and analyses of the two classes of constacyclic codes 
$\C'(q, m, r, \ell)$ and $\C(q, m, r, \ell)$ of length $n=(q^m-1)/r$. These codes are quite interesting in theory as they contain optimal codes and codes with best known parameters (see the examples presented in this paper).  
A new infinite family of distance-optimal constacyclic codes was obtained (see Corollary \ref{cor-2}).  
Since the codes presented in this paper are constacyclic, efficient 
decoding algorithms of them may be obtained by modifying some efficient decoding algorithms of cyclic codes. 
Hence, the codes presented in this paper would also be interesting in practice. 

The automorphism group of the code $\PRM(q, m, \ell)$ was settled in \cite{Berger02}. Hence, the automorphism group of 
the code $\C(q, m, r, \ell)$ is known when $r=q-1$ but is open if $r< q-1$. Note that $\C(q, 2, q-1, q-2)$ is MDS. It follows from Lemma 6 in \cite{Berger02} that the codewords of each fixed nonnzero weight in 
$\C(q, m, q-1, (q-1)\ell_1+q-2)$ support a $2$-design for $m \geq 3$ and $0 \leq \ell_1 \leq m-2$ \cite{dingtang2022}. 
Hence, the codes $\C(q, m, q-1, (q-1)\ell_1+q-2)$ are also interesting from the viewpoint of combinatorics. 

The two classes of constacyclic codes $\C'(q,m,r,\ell)$ and $\C(q,m,r,\ell)$ treated in this paper are new in general.  The constacyclic 
code $\C(q,m,q-1,\ell_1(q-1)+q-2)$ is scalar-equivalent to the linear code $\PRM(q,m, \ell_1(q-1)+q-2)$ and gives 
a constacyclic construction of the code  $\PRM(q,m, \ell_1(q-1)+q-2)$. Hence, another contribution of this paper is the 
proof of the fact that the subclass of projective Reed-Muller codes $\PRM(q,m, \ell_1(q-1)+q-2)$ are constacyclic 
up to equivalence.  
However, it is open if the code 
$\PRM(q,m, \ell)$ is monomially-equivalent to a constacyclic code when $\ell$ is not of the form 
$\ell =\ell_1(q-1)+q-2$. 

It would be very interesting to settle the open problems presented in this paper and determine the automorphism groups of the two classes of constacyclic codes.

\section{Acknowledgements} 
All the code examples in this paper were computed with the Magma software package.

\end{document}